%%%%%%%%%%%%%%%%%%%%%%%%%%%%%%%%%%%%%%%%%%%%%%%%%%%%%%%%%%%%%%%%%%%%%%%%%%%%
%% Author template for Management Science (mnsc) for articles with no e-companion (EC)
%% Mirko Janc, Ph.D., INFORMS, mirko.janc@informs.org
%% ver. 0.95, December 2010
%%%%%%%%%%%%%%%%%%%%%%%%%%%%%%%%%%%%%%%%%%%%%%%%%%%%%%%%%%%%%%%%%%%%%%%%%%%%
% \documentclass[opre,blindrev]{informs3}
\documentclass[opre,nonblindrev]{informs4rad}
  \addtolength{\textheight}{0.4cm}
  \addtolength{\voffset}{-0.2cm}
   \addtolength{\textwidth}{0.15cm}
   \addtolength{\hoffset}{-0.08cm}

% \addtolength{\textwidth}{0.6cm}
% \addtolength{\hoffset}{-0.3cm}
% \addtolength{\textheight}{0.9cm}
% \addtolength{\voffset}{-0.45cm}

\OneAndAHalfSpacedXI
%\OneAndAHalfSpacedXII % Current default line spacing
%\DoubleSpacedXII
%%\DoubleSpacedXI

% If hyperref is used, dvi-to-ps driver of choice must be declared as
%   an additional option to the \documentclass. For example
%\documentclass[dvips,mnsc]{informs3}      % if dvips is used
%\documentclass[dvipsone,mnsc]{informs3}   % if dvipsone is used, etc.

% Private macros here (check that there is no clash with the style)

% Natbib setup for author-year style

\TheoremsNumberedThrough 
\EquationsNumberedThrough

\usepackage{tikz}
\usepackage{pgfplots}
\pgfplotsset{compat=newest}
\usetikzlibrary{patterns}
\usepgfplotslibrary{fillbetween}
\usetikzlibrary{intersections}
\usepackage{bbm}

\usepackage[breakable,skins]{tcolorbox}
\definecolor{myLightGray}{gray}{0.9} % 0 = black, 1 = white

\DeclareRobustCommand{\myboxtwo}[2][gray!15]{
\begin{tcolorbox}[ 
        colback=white,      % no background color
        colframe=gray,  
        boxrule=0.2pt,      % border thickness% border color
        arc=2pt,outer arc=2pt,
        left=12pt,
        right=12pt,
        top=5pt,
        bottom=5pt,
        width=1.07\linewidth,
        enlarge left by=-0.55cm,
        before upper=\renewcommand{\baselinestretch}{1.3}\selectfont,
        after upper=\normalfont
        ]
 #2
 \end{tcolorbox}
}

\usepackage{csquotes}

\renewenvironment*{displayquote}
  {\begingroup
   \setlength{\leftmargini}{0.2cm}%
   \linespread{1.4}\selectfont % Adjust this value as needed
   \csq@getcargs{\csq@bdquote{}{}}}
  {\csq@edquote\endgroup}
\makeatother

\usepackage{xfrac}
\usepackage{xcolor}
\usepackage{bbm}
\usepackage{booktabs}
\usepackage{subfig}
\usepackage[ruled,vlined,linesnumbered]{algorithm2e}
\SetNoFillComment
% \DeclareRobustCommand{\mybox}[2][gray!10]{%
% \begin{tcolorbox}[   %% Adjust the following parameters at will.
%         left=0pt,
%         right=0pt,
%         top=0pt,
%         bottom=0pt,
%         colback=#1,
%         colframe=#1,
%         enlarge left by=0mm,
%         boxsep=15pt,
%         arc=5pt,outer arc=5pt,
%         ]
%         #2
% \end{tcolorbox}
% }
\usepackage{thm-restate}

% \usepackage{csquotes}
% \makeatletter
% %Take the original environment definition and change the leftmargin to 1cm
% \renewenvironment*{displayquote}
%   {\begingroup\setlength{\leftmargini}{0.4cm}\csq@getcargs{\csq@bdquote{}{}}}
%   {\csq@edquote\endgroup}
% \makeatother
% \makeatletter

% %Take the original environment definition and change the leftmargin to 1cm
% \renewenvironment*{displayquote}
%   {\begingroup\setlength{\leftmargini}{0.6cm}\csq@getcargs{\csq@bdquote{}{}}}
%   {\csq@edquote\endgroup}
% \makeatother

\usepackage{mathtools}
\definecolor{cornellred}{rgb}{0.7, 0.11, 0.11}
\definecolor{maroon}{rgb}{0.52, 0, 0}
\definecolor{dgreen}{rgb}{0.0, 0.5, 0.0}
\definecolor{ballblue}{rgb}{0.13, 0.67, 0.8}
\definecolor{royalblue(web)}{rgb}{0.25, 0.41, 0.88}
\definecolor{bleudefrance}{rgb}{0.19, 0.55, 0.91}
\definecolor{royalazure}{rgb}{0.0, 0.22, 0.66}
\usepackage{hyperref}
\hypersetup{
	colorlinks = true,
	linkcolor=royalazure,
	urlcolor=royalazure,
	citecolor=royalazure,
	linkbordercolor = {white}
}
\usepackage{cleveref}
\usepackage{enumitem}
\usepackage{multirow}
%\DeclareRobustCommand*\cal{\@fontswitch\relax\mathcal}

\usepackage{pgf,tikz,pgfplots}
\pgfplotsset{compat=1.15}
\usetikzlibrary{arrows}
\usepackage{pgfplots}
\usetikzlibrary{patterns}
\usepgfplotslibrary{fillbetween}
\usetikzlibrary{intersections}

\usepackage{fix-cm}

\usetikzlibrary{arrows, decorations.markings}

% for double arrows a la chef
% adapt line thickness and line width, if needed
\tikzstyle{vecArrow} = [thick, decoration={markings,mark=at position
	1 with {\arrow[semithick]{open triangle 60}}},
	double distance=1.4pt, shorten >= 5.5pt,
	preaction = {decorate},
postaction = {draw,line width=1.4pt, white,shorten >= 4.5pt}]
\tikzstyle{innerWhite} = [semithick, white,line width=1.4pt, shorten >= 4.5pt]

% \ifMS
% \else
% \newtheorem{theorem}{Theorem}[section]
% \newtheorem{lemma}[theorem]{Lemma}
% \newtheorem*{lemma*}{Lemma}
% \newtheorem{proposition}[theorem]{Proposition}
% \newtheorem{corollary}[theorem]{Corollary}
% \newtheorem{observation}{Observation}
% \newtheorem{claim}{Claim}
% \newtheorem{remark}{Remark}
% \theoremstyle{definition}
% \newtheorem{definition}{Definition}[section]
% \theoremstyle{definition}
% % \newenvironment{example}[1][Example]{\begin{trivlist}
% % \item[\hskip \labelsep {\bfseries #1}]}{\end{trivlist}}
% %  	\newenvironment{remark}[1][Remark]{\begin{trivlist}
% %  \item[\hskip \labelsep {\bfseries #1}]}{\end{trivlist}}
% 	\newtheorem{assumption}[definition]{Assumption} 
% 		\newtheorem{example}[definition]{Example} 
% \fi

% 	\newenvironment{numberedassumption}[1]{%
% 		\renewcommand{\thedefinition}{#1}%
% \begin{assumption}}{\end{assumption}\addtocounter{definition}{-1}}

% 	\newenvironment{numberedremark}[1]{%
% 		\renewcommand{\thedefinition}{#1}%
% \begin{remark}}{\end{remark}\addtocounter{definition}{-1}}

%	\usepackage{accents}

	\usepackage[authoryear,round]{natbib}

% \theoremstyle{plain}
% \newtheorem{theorem}{Theorem}[section]

% \usepackage{thm-restate}
% \newtheorem{lemma}[theorem]{Lemma}
% \newtheorem{claim}[theorem]{Claim}
% \newtheorem{fact}[theorem]{Fact}
% \newtheorem{proposition}[theorem]{Proposition}
% \newtheorem{corollary}[theorem]{Corollary}% reset theorem numbering for each chapter

% \newenvironment{numberedtheorem}[1]{%
% \renewcommand{\thetheorem}{#1}%
% \begin{theorem}}{\end{theorem}\addtocounter{theorem}{-1}}

% \newenvironment{numberedlemma}[1]{%
% \renewcommand{\thetheorem}{#1}%
% \begin{lemma}}{\end{lemma}\addtocounter{theorem}{-1}}

% \newtheorem{informal}{Informal Theorem}

% \theoremstyle{plain}
% \newtheorem{definition}{Definition}[section] % definition numbers are dependent on theorem numbers
% \newtheorem{example}[definition]{Example}
% % \newtheorem{assumption}[definition]{Assumption} 
% \newtheorem{remark}[definition]{Remark} 
% \newtheorem{conjecture}[definition]{Conjecture} 
% \newtheorem{task}[definition]{Task} 

% \theoremstyle{plain}

% \newenvironment{numbereddefinition}[1]{%
% \renewcommand{\thedefinition}{#1}%
% \begin{definition}}{\end{definition}\addtocounter{definition}{-1}}

% \newtheorem{assumption}{Assumption}
% % \newtheorem{assumption}[definition]{Assumption} 

% \newenvironment{numberedassumption}[1]{%
% \renewcommand{\thedefinition}{#1}%
% \begin{assumption}}{\end{assumption}\addtocounter{definition}{-1}}

\makeatletter
\newcommand{\subalign}[1]{%
  \vcenter{%
    \Let@ \restore@math@cr \default@tag
    \baselineskip\fontdimen10 \scriptfont\tw@
    \advance\baselineskip\fontdimen12 \scriptfont\tw@
    \lineskip\thr@@\fontdimen8 \scriptfont\thr@@
    \lineskiplimit\lineskip
    \ialign{\hfil$\m@th\scriptstyle##$&$\m@th\scriptstyle{}##$\hfil\crcr
      #1\crcr
    }%
  }%
}
\makeatother

	\providecommand{\given}{}
	\DeclarePairedDelimiterX{\set}[1]\{\}{\renewcommand\given{\nonscript\:\delimsize\vert\nonscript\:\mathopen{}}#1}
	\let\Pr\relax
	\DeclarePairedDelimiterXPP{\Pr}[1]{\mathbb{P}}[]{}{\renewcommand\given{\nonscript\:\delimsize\vert\nonscript\:\mathopen{}}#1}
	\DeclarePairedDelimiterXPP{\Ex}[1]{\mathbb{E}}[]{}{\renewcommand\given{\nonscript\:\delimsize\vert\nonscript\:\mathopen{}}#1}

	% \DeclareMathOperator{\OPT}{\texttt{OPT}}
	% \DeclareMathOperator{\MC}{\textrm{MaxCut}}
	% \newcommand{\SDP}{\texttt{OPT-SDP}}
	% \newcommand{\Ws}{\sum_{(i,j)\in E}w_{ij}}
	% \newcommand{\OBJ}{\texttt{OBJ}}
	% \newcommand{\Ycal}{\mathcal{Y}}
	% \newcommand{\Ecal}{\mathcal{E}}
	% \newcommand{\Zcal}{\mathcal{Z}}

	% \DeclarePairedDelimiter{\norm}{\lVert}{\rVert}
	% \DeclarePairedDelimiter{\card}{\lvert}{\rvert}
	% \DeclarePairedDelimiter{\abs}{\lvert}{\rvert}
	% \providecommand{\given}{}
	% \DeclarePairedDelimiterX{\set}[1]\{\}{\renewcommand\given{\nonscript\:\delimsize\vert\nonscript\:\mathopen{}}#1}
	% \let\Pr\relax
	% \DeclarePairedDelimiterXPP{\Pr}[1]{\mathbb{P}}[]{}{\renewcommand\given{\nonscript\:\delimsize\vert\nonscript\:\mathopen{}}#1}
	% \DeclarePairedDelimiterXPP{\Ex}[1]{\mathbb{E}}[]{}{\renewcommand\given{\nonscript\:\delimsize\vert\nonscript\:\mathopen{}}#1}
	% \DeclarePairedDelimiter{\dotp}{\langle}{\rangle}
	% \DeclarePairedDelimiter\ceil{\lceil}{\rceil}
	% \DeclarePairedDelimiter\floor{\lfloor}{\rfloor}
	
\newcolumntype{P}[1]{>{\centering\arraybackslash}c{#1}}
	
% a very useful package for edits and comments, from David Kempe (USC)
\usepackage{color-edits}
\addauthor{yf}{purple}    % yf for Yiding
% e.g. for Yiding: \yfedit{}, \yfcomment{}, \yfmargincomment{}, \yfdelete{}.

	%%%-----------------Macros for this paper ------------------

\newcommand{\xhdr}[1]{\vspace{2mm} \noindent{\bf #1}}

\newcommand{\bench}[1]{\textbf{OPT}\left[#1\right]}
\newcommand{\tpre}{t'}
	
	\newcommand{\primed}{^\dagger}

	\newcommand{\Z}{\mathbb{Z}}
	\newcommand{\inventory}{c}
	\newcommand{\mininventory}{{\inventory_{0}}}%{\underaccent{\bar}{\inventory}}

	\newcommand{\totaltime}{T}
	\newcommand{\reward}{r}
	
	\newcommand{\choice}{\phi}
	\newcommand{\pen}{\Psi}

	\newcommand{\assortment}{S}
	\newcommand{\assortmentspace}{\mathcal{S}}
	\newcommand{\duration}{d}

	\newcommand{\inventorydual}{\theta}
	\newcommand{\inventorydualijt}{\inventorydual_{i,j,t}}
	
	\newcommand{\probdual}{\lambda}
	\newcommand{\probduali}{\probdual_{t}}

	\newcommand{\algo}{\mathcal{A}}

	\newcommand{\Rev}[2][]{\text{\bf Rev}\ifthenelse{\not\equal{}{#1}}{_{#1}}{}\!\left[{\def\givenn{\middle|}#2}\right]}

	\newcommand{\alloc}{\mathcal{X}}
	\newcommand{\curalloc}{\alloc_{\indexnew, t}}

 \newcommand{\competitiveratio}{\Gamma}

	% accents for using with newagentvar
	% (must take a single argument)

	\newcommand{\noaccents}[1]{#1}
	% compose accents on the fly
	
	\newcommand{\newagentvar}[3][\noaccents]{%
		\expandafter\newcommand\expandafter{\csname #2\endcsname}{#1{#3}}%
		\expandafter\newcommand\expandafter{\csname #2s\endcsname}{#1{\boldsymbol{#3}}}%
		\expandafter\newcommand\expandafter{\csname #2smi\endcsname}[1][i]{#1{\boldsymbol{#3}}_{-##1}}%
		\expandafter\newcommand\expandafter{\csname #2i\endcsname}[1][i]{#1{#3}\agind[##1]}%
		\expandafter\newcommand\expandafter{\csname #2ith\endcsname}[1][i]{#1{#3}_{(##1)}}%
	}
	\newagentvar{typespace}{{\cal Z}}
	\newagentvar{typesubspace}{S}
	
	\newagentvar{type}{z}
	\newagentvar{othertype}{s}
	\newagentvar{val}{v}
	\newagentvar{hval}{\bar \val}
	\newagentvar{hbudget}{\bar \wealth}
	\newagentvar{budget}{B}
	\newagentvar{lbudget}{\underaccent{\bar}{ \wealth}}
	\newagentvar{lowestval}{0}
	\newagentvar{cumval}{V}
	\newagentvar{cumprice}{P}
	\newagentvar{welcurve}{W}
	\newagentvar{revcurve}{R}
	\newagentvar{outcome}{w}
	\newagentvar{outcomespace}{{\cal W}}

	\newcommand{\exponential}{\pen(x) = \frac{e^{1-x}-e}{1 - e}}

\newcommand*{\rom}[1]{\expandafter\romannumeral #1}
\newcommand{\Rom}[1]{\uppercase\expandafter{\romannumeral #1\relax}}

\newcommand{\contain}{{\mathcal N}}

\newcommand{\containni}{|\mathcal{N}_i|}
\newcommand{\assign}{{\hat I}}

\newcommand{\remaining}{{\mathcal W}}
\newcommand{\remainingcontain}{\contain^\remaining}
\newcommand{\remainingcontainni}{|\contain_{i}^\remaining|}
\newcommand{\pfather}{p}
\newcommand{\pmother}{q}

\newcommand{\replen}{\xi}
\newcommand{\repleni}{\replen_{i, t}}

\newcommand{\naturals}{\mathbb{N}}

\newcommand{\indexset}{\mathcal L}
\newcommand{\indexsett}{\indexset_t}
\newcommand{\indexsetijt}{\indexset_{i,j,t}}
\newcommand{\indexnew}{L}
\newcommand{\addtime}{\tau}
\newcommand{\addtimeij}{\addtime_{i,j}}

\newcommand{\IBB}{Batched Inventory Balancing}
\newcommand{\IB}{Inventory Balancing}
\newcommand{\CUIB}{Dynamic-Capacity Inventory Balancing}
\newcommand{\CNUIB}{Static-Capacity Inventory Balancing}
\newcommand{\UDIB}{Unit-Specific Inventory Balancing}

\newcommand{\batchsizethreshold}{\gamma}

\newcommand{\batch}{{B}}
\newcommand{\batchij}{\batch_{i,j}}

\newcommand{\cnt}{h}
\newcommand{\pseudoinventoryratio}{g}
\newcommand{\pseudoinventoryratioijt}{\pseudoinventoryratio_{i,j,t}}
\newcommand{\inventoryratio}{f}

\newcommand{\inventoryratioijt}{\inventoryratio_{i,j,t}}

\newcommand{\curinventory}{I}

\newcommand{\timeset}{\mathcal{T}}
\newcommand{\timesetij}{\timeset_{i,j}}

\newcommand{\event}{\mathcal E}

\newcommand{\matchinginstance}{G}
\newcommand{\supply}{s}
\newcommand{\supplies}{U}
\newcommand{\demands}{V}

\newcommand{\LocalDominance}{\textsc{Local-Dominance}}
\newcommand{\GlobalDominance}{\textsc{Global-Dominance}}
\newcommand{\PartitionMonotonicity}{\textsc{Partition-Monotonicity}}

% \newcommand{\inteval}[1]{\Big[#1\Big]}

%
% probability stuff.
%
\newcommand{\condition}{\,\mid\,}

% resizing brackets 
\newcommand{\prob}[2][]{\text{Pr}\ifthenelse{\not\equal{}{#1}}{_{#1}}{}\!\left[{\def\givenn{\middle|}#2}\right]}
\newcommand{\expect}[2][]{\mathbb{E}\ifthenelse{\not\equal{}{#1}}{_{#1}}{}\!\left[{\def\givenn{\middle|}#2}\right]}

% brackets configured with \tparen
\newcommand{\tparen}{\big}
\newcommand{\tprob}[2][]{\text{Pr}\ifthenelse{\not\equal{}{#1}}{_{#1}}{}\tparen[{\def\given{\tparen|}#2}\tparen]}
\newcommand{\texpect}[2][]{\mathbb{E}\ifthenelse{\not\equal{}{#1}}{_{#1}}{}\tparen[{\def\given{\tparen|}#2}\tparen]}

% brackets do not resize.
\newcommand{\sprob}[2][]{\text{Pr}\ifthenelse{\not\equal{}{#1}}{_{#1}}{}[#2]}
\newcommand{\sexpect}[2][]{\mathbb{E}\ifthenelse{\not\equal{}{#1}}{_{#1}}{}[#2]}

\newcommand{\dd}{{\mathrm d}}

\newcommand{\indicator}[1]{{\mathbbm{1}\left\{ #1 \right\}}}

%  \bibpunct[, ]{(}{)}{,}{a}{}{,}%
%  \def\bibfont{\small}%
%  \def\bibsep{\smallskipamount}%
%  \def\bibhang{24pt}%
%  \def\newblock{\ }%
%  \def\BIBand{and}%

%% Setup of theorem styles. Outcomment only one.
%% Preferred default is the first option.
  % Preferred (Theorem 1, Lemma 1, Theorem 2)
%\TheoremsNumberedByChapter  % (Theorem 1.1, Lema 1.1, Theorem 1.2)
%\ECRepeatTheorems

%% Setup of the equation numbering system. Outcomment only one.
%% Preferred default is the first option.
% Default: (1), (2), ...
%\EquationsNumberedBySection % (1.1), (1.2), ...

% For new submissions, leave this number blank.
% For revisions, input the manuscript number assigned by the on-line
% system along with a suffix ".Rx" where x is the revision number.
\MANUSCRIPTNO{}

%%%%%%%%%%%%%%%%
\begin{document}
%%%%%%%%%%%%%%%%

% Outcomment only when entries are known. Otherwise leave as is and
%   default values will be used.
%\setcounter{page}{1}
%\VOLUME{00}%
%\NO{0}%
%\MONTH{Xxxxx}% (month or a similar seasonal id)
%\YEAR{0000}% e.g., 2005
%\FIRSTPAGE{000}%
%\LASTPAGE{000}%
%\SHORTYEAR{00}% shortened year (two-digit)
%\ISSUE{0000} %
%\LONGFIRSTPAGE{0001} %
%\DOI{10.1287/xxxx.0000.0000}%

% Author's names for the running heads
% Sample depending on the number of authors;
% \RUNAUTHOR{Feng, Niazadeh, and Saberi}
% \RUNAUTHOR{Jones and Wilson}
% \RUNAUTHOR{Jones, Miller, and Wilson}
% \RUNAUTHOR{Jones et al.} % for four or more authors
% Enter authors following the given pattern:
\RUNAUTHOR{Feng, Niazadeh, Saberi}

\newcommand{\revcolor}[1]{{\color{black}#1}}

\RUNTITLE{Robustness of Online Inventory Balancing to Inventory Shocks}
% \RUNTITLE{}
% Full title. Sample:
% \TITLE{Bundling Information Goods of Decreasing Value}
% Enter the full title:
\TITLE{Robustness of Online Inventory Balancing to Inventory Shocks}

% Block of authors and their affiliations starts here:
% NOTE: Authors with same affiliation, if the order of authors allows,
%   should be entered in ONE field, separated by a comma.
%   \EMAIL field can be repeated if more than one author
\ARTICLEAUTHORS{%
\AUTHOR{Yiding Feng}
\AFF{Industrial Engineering and Decision Analytics, Hong Kong University of Science \& Technology (HKUST), \EMAIL{ydfeng@ust.hk}}
\AUTHOR{Rad Niazadeh}
\AFF{The University of Chicago Booth School of Business, \EMAIL{rad.niazadeh@chicagobooth.edu}}
\AUTHOR{Amin Saberi}
\AFF{Management Science and Engineering, Stanford University, \EMAIL{saberi@stanford.edu}}
} 

\ABSTRACT{%
In classic adversarial online resource allocation problems, such as matching, AdWords, or assortment planning, customers (demand) arrive online, while products (supply) are given offline with a fixed initial inventory. To ensure acceptable revenue guarantees given the uncertainty in future customer arrivals, the decision maker must balance consumption across different products. Motivated by this, the famous policy---\emph{inventory balancing (IB)}---is introduced and studied in the literature and has proved to be optimal or near-optimal competitive in almost all classic settings. However, an important feature that these classic models do not capture are various forms of possible \emph{inventory shocks} on the supply side, which plays an important role in several real-world applications of online assortment and could significantly impact the revenue performance of the IB algorithm.

\revcolor{Motivated by this paradigm, we introduce and study a new variant of online assortment planning with inventory shocks. Our model considers both adversarial exogenous shocks (where supply increases over time in unpredictable fashion) and allocation-coupled endogenous shocks (where an inventory reduction is triggered by the algorithms and is re-adjusted after a usage duration)---combination of which can cause non-monotonic inventory fluctuations. As our main result, we show the robustness of the inventory balancing strategy against inventory shocks and such fluctuations by designing a new family of optimal competitive IB-type algorithms, called \emph{Batched Inventory Balancing (BIB)}.  We develop a novel randomized primal-dual method to bound the competitive ratio of our BIB algorithm against any feasible policy. We show that with the proper choice of a certain parameter, this competitive ratio is asymptotically optimal and converges to $1-1/e$ as the initial inventories converge to infinity---in contrast to the original IB which no longer achieves the optimal competitive ratio in this new model. Moreover, we characterize BIB's competitive ratio in its \emph{general} form (parametric by the penalty function it uses) and show that it matches \emph{exactly} the competitive ratio of IB in the special case without shocks. To this end, we use a refined analysis that reduces the dual construction to a combinatorial problem called the ``interval assignment problem.'' Our solution to this problem is algorithmic and might be of independent interest.}

}

\maketitle
\setcounter{page}{1}
%%%%%%%%%%%%%%%%%%%%%%%%%%%%%%%%%%%%%%%%%%%%%%%%%%%%%%%%%%%%%%%%%%%%%%
\newpage

\section{Introduction}
\label{sec:intro}
% Online assortment planning is a central problem in the service operations and revenue management of \revcolor{traditional retailers}, 
% % \yfcomment{Referee 2: The “revenue management of classic retailers” does not make sense. What is a “classic” retailer? Perhaps the authors meant something closer to “the classical operations problems faced by retailers”.} 
% as well as modern online platforms, and has its roots in classical online resource allocation problems such as online matching~\citep{KVV-90} and AdWords~\citep{MSVV-05}. 
% % One common feature of most of the past literature on online resource allocation is focusing on static and non-returning inventory. 
% \revcolor{One common assumption is that each resource, once purchased, is consumed permanently and therefore cannot be returned to the retailer for future use.}
% % \yfcomment{Referee 2: There are grammatical errors in this sentence and “non-returning” is just bad writing. You could say “one common assumption is that each resource is consumed indefinitely upon purchase, and hence cannot return to the retailer for future use.”}
% In such contexts, the planner starts with fixed upfront inventories for different resources (e.g., advertisers with a fixed daily budget), and inventories deplete over time due to allocations (e.g., once an impression is assigned to an advertiser, the budget decreases by the amount of advertiser' bid, and this money never returns to the advertiser). 
\revcolor{Online assortment planning is a central problem in operations research, rooted in classical online resource allocation problems such as online matching~\citep{KVV-90} and AdWords~\citep{MSVV-05}. Besides its theoretical appeal, this problem has also broad applications in the revenue management of retail platforms, as well as digital display and search advertising marketplaces. A common assumption in these applications is that each unit of a resource, once allocated and consumed, permanently leaves the inventory. Consequently, the planner begins with a fixed upfront inventory for different resources (e.g., advertisers with fixed daily budgets), which depletes over time due to allocations (e.g., once an impression is assigned to an advertiser, the budget decreases by the amount of the advertiser’s bid, and this money never returns to the advertiser). However, in several more modern online platforms---ranging from cloud computing and short- or long-term homestay marketplaces to online volunteer matching and freelance marketplaces--- this assumption could be violated as inventory levels fluctuate over time due to various forms of \emph{endogenous} and \emph{exogenous} shocks. }

\revcolor{An endogenous shock is typically triggered by an allocation decision, for example, when a user selects a reusable rental resource such as a cloud server on AWS or a listing on Airbnb~\citep{GGISUW-19, GIU-20, FNS-24,DFNSU-24}. Similarly, in volunteer matching platforms, when a volunteer is nudged to perform a newly arriving task, they become unavailable for a certain period of time---or the platform stops nudging them for some time, to avoid overwhelming the volunteers~\citep{MR-22}. In all such cases, a unit of an offline resource temporarily leaves the inventory after an endogenous shock and returns after a period of usage. In contrast, an exogenous shock refers to an unpredictable and non-stationary replenishment of the inventory.\footnote{\revcolor{One may also consider negative exogenous shocks. Although these are less common than positive shocks in motivating applications of our model, studying them remains an interesting direction; see \Cref{sec:conclusion} and \Cref{apx:open-problems} for further discussion.\label{footnote:FN1}}} For instance, in omni-channel retail, unexpected product returns may occur when customers who purchased through other channels find the product does not meet their expectations, or when there is misalignment between inventory management and sale~\citep{dBD-03}.\footnote{\revcolor{While returning a defective or undesired product occurs after the original sale, the presence of multiple return channels and the heterogeneity and non-stationarity of returns motivate an adversarial modeling; see \cite{BCDL-10} for discussion.\label{footnote:FN2}}} Another example arises in volunteer platforms such as \href{https://www.catchafire.org/discover/}{Catchafire} and \href{https://foodrescue.us/}{Food Rescue U.S.}, where volunteers initially commit to a limited capacity but may later decide to spend additional working hours (thanks to engagement dynamics and incentive policies), allowing them to serve more than their initial commitment~\citep{LMRS-24}. In all these cases, one or several units of an offline resource are added to the inventory after an unpredictable exogenous shock and become available for future allocations.\footnote{\revcolor{See \Cref{apx:micro-foundations} for a detailed discussion of various micro-foundations and applications of endogenous and exogenous shocks.}}}

Motivated by the above \revcolor{paradigm}, we introduce and study the problem of \emph{online assortment with inventory shocks}. We consider a sequential allocation problem in which a sequence of consumers with arbitrary choice functions arrives over time. At each time step, the planner selects a subset of available products to display to the arriving consumer. The consumer then probabilistically chooses one of the displayed products according to her choice model, consumes one unit of that product, and pays a product-dependent price. The planner faces both endogenous and exogenous inventory shocks. \revcolor{For endogenous shocks, once a unit of a product is purchased, it returns to the inventory after a fixed duration of usage, which may differ across products but remains constant over time.} For exogenous shocks, we assume that at the beginning of each time step, a batch of products may be added to the inventory. \revcolor{Consumer choices and exogenous shocks are assumed to be arbitrary—chosen by an \revcolor{(adaptive)} adversary—and are revealed to the platform only at the start of each time step, before the assortment is selected.} In the absence of exogenous shocks, our model reduces to the online assortment with \emph{reusable resources}, previously studied in the revenue management literature~\citep{RST-17,GGISUW-19,GIU-20,FNS-24}.

The goal of the assortment planner is to sequentially select inventory-feasible assortments while facing both endogenous and exogenous inventory shocks, in order to maximize the expected collected price (which we refer to as ``revenue''). In principle, the one-shot assortment planning problem can be a  computationally hard set-function optimization. To address this challenge, following the related literature~\citep{GNR-14,GGISUW-19,FNS-24,GIU-20}, we assume an oracle access to an offline assortment solver given the consumer choice model. \revcolor{We also assume that consumers' choice models satisfy the weak substitutability assumption---meaning that removing a product from the assortment can only weakly increase the probability of selecting another product---and the monotonicity assumption---meaning that adding a product to the assortment can only decrease the probability of selecting the outside option.}

% We also assume consumer choices are \revcolor{weak substitutable}, i.e., removing a product from the assortment can only weakly increase the probability of another product being chosen. 
% \yfcomment{Referee 2: “consumer choices are weak substitutes” - This sentence is poorly written.}

We aim to design online competitive algorithms for our problem and compare their performance to that of the optimal offline benchmark, which has complete knowledge of consumer choices and inventory shocks but not the exact choice realizations. This benchmark, commonly referred to as the \emph{clairvoyant optimum} in the literature, provides an upper bound on the expected revenue achievable by any policy. We say that an algorithm is \emph{$\Gamma$-competitive} for $\Gamma \in [0,1]$ if, for every instance, the expected revenue of the algorithm is at least $\Gamma$ times that of the clairvoyant optimum. We study the \revcolor{dependence} of the competitive ratio on the minimum initial inventory across all products, denoted by $\mininventory$, and focus primarily on the regime in which $\mininventory$ is large. Importantly, we make no assumptions on the exogenous shocks, \revcolor{other} than that they are non-negative.
% \footnote{This modeling assumption of considering exogenous arbitrary restocking is mostly because in many applications the replenishment amounts are either outside of the planner's control, or the assortment planner does not jointly optimize for assortment and restocking --- hence from her perspective the restocking amounts can happen arbitrarily over time.} 

\xhdr{Summary of our contributions.} Our work is inspired by the simple and elegant classic \emph{BALANCE} algorithm—first introduced in \citet{KP-00} for the online bipartite $b$-matching problem, later generalized to the AdWords problem in \citet{MSVV-05}, and subsequently to the non-reusable online assortment problem in \citet{GNR-14}. This algorithm, also referred to as \emph{inventory balancing (IB)} in the assortment planning context, achieves the optimal $\left(1 - 1/e\right)$ competitive ratio in the fractional versions of these problems, or equivalently, in their integral versions under the “large inventory” regime (e.g., when $\mininventory$ is large in online assortment). Moreover, \citet{GNR-14} consider a parametric version of IB (as a family of algorithms), where each algorithm is parameterized by a concave and increasing penalty function $\pen:[0,1]\rightarrow [0,1]$ used to balance inventories. They provide an asymptotically tight characterization of the competitive ratio $\Gamma$ of an IB algorithm, given the penalty function $\pen$ and the minimum initial inventory $\mininventory$, in the absence of endogenous or exogenous inventory shocks.

The above results are of paramount importance in the related application domains, as BALANCE/IB is simple, easy to implement, flexible to adapt to specific problem instances through the appropriate choice of penalty function, and transparent in its operations. Motivated by these properties, our goal is to address the following research question concerning the robustness of inventory balancing:
% This non-linear function is used for penalizing fees of different resources based on their normalized inventories at each time. The algorithm then picks the optimal feasible one-shot assortment with these reduced fees at each time. 
\vspace{-1mm}
\myboxtwo{
\begin{displayquote}
\emph{\revcolor{Is there a natural adaptation of the IB online algorithm that achieves the same asymptotic and non-asymptotic optimal competitive ratios established in \citet{GNR-14}, as if the online algorithm were not subject to endogenous and exogenous inventory shocks?}}
\end{displayquote}}
% \begin{displayquote}
% \emph{What is the asymptotic/non-asymptotic optimal competitive ratio achievable by an online assortment algorithm when there are inventory shocks? In particular, is there a variant of IB family of algorithms that can obtain the asymptotic optimal $1-1/e$ bound and a non-asymptotic bound similar to \cite{GNR-14}, as if the online algorithm is facing both endogenous and exogenous inventory shocks?}
% \end{displayquote}
\vspace{-2mm}
Our main result provides a compelling answer to the above question. We propose a new parametric family of online assortment algorithms, referred to as \emph{batched inventory balancing (BIB)}, which extends the original “balance’’ idea to online assortment problems with both endogenous and exogenous inventory shocks. Using a novel randomized primal–dual analysis, we prove that this family achieves the asymptotically optimal competitive ratio of $1 - 1/e$ in the large-inventory regime. We further extend this result by characterizing an almost tight competitive ratio $\Gamma$ for the general case, expressed as a function of the penalty function $\pen$ (a concave function analogous to that in the vanilla IB algorithm) and the minimum initial inventory $\mininventory$ in the non-asymptotic regime (see \Cref{thm:concave competitive ratio} for details). Notably, our bound exactly matches the one established in \citet{GNR-14} for online assortment without inventory shocks.
% The simple idea behind our algorithm is to treat different units of the same product separately. More precisely, for each product, we \emph{assign} its units into what we call \emph{batches} based on the time that a unit is added into the inventory due to the exogenous shock (or being part of the initial inventory). Here each batch is guaranteed to contain a large number of units and as a result, its available units can be used for assortment---such a batch is called \emph{``ready''}. Another possibility is that the batch is not large enough and its available units cannot yet be used for the assortment unless the batch grows to a certain size---such a batch is called \emph{``charging''}. 

\revcolor{The simple idea behind our algorithm is to treat different units of the same product separately. More precisely, for each product, we \emph{assign} its units into what we call \emph{batches} based on the time that a unit is added into the inventory due to an exogenous shock (or as part of the initial inventory). Each batch can be in one of two states: if it has accumulated a sufficiently large number of units, its available units can be used for the assortment---such a batch is called \emph{``ready''}; otherwise, if it has not yet reached this size threshold, its available units cannot be used for the assortment until it grows larger---such a batch is called \emph{``charging''}.}

\revcolor{Our BIB algorithm operates by tracking (i) the assignment of units to ``ready'' batches and the potential ``charging'' batch for each product, and (ii) the inventory levels of the ``ready'' batches across all products.} At each time step, when a new consumer arrives, the algorithm penalizes the price of each product based on the highest normalized inventory level among its ``ready’'' batches. It then solves a one-shot assortment problem by making a customized query call to the oracle\revcolor{, where prices are replaced with penalized prices.} When the consumer chooses a specific product in that assortment, the algorithm then allocates one unit of that product from the ``ready'' batch with the highest normalized inventory level.

\xhdr{New analysis framework: randomized primal–dual \& interval assignment problem.} We rely on a primal–dual analysis to establish the bounds on the competitive ratio of BIB. Rather than comparing directly to the clairvoyant benchmark, we follow prior work and consider an offline linear programming (LP) relaxation of this policy. This LP benchmark has access to the same information as the clairvoyant policy about the input instance, but only requires that assortments respect inventory feasibility in expectation for each product unit. \revcolor{Unlike the classic LP benchmarks in the literature, it is important to note that our offline LP benchmark (named \emph{``batch-specified LP''} and formulated in \ref{eq:opt}, where $\gamma$ is the batch size threshold of the BIB algorithm) also depends on the execution of the BIB algorithm itself. In fact, similar to the BIB algorithm, this LP benchmark treats different units from different batches separately.}

Our primal–dual method is randomized and, at least superficially, resembles the dual-fitting analysis of \citet{DJK-13} for the RANKING algorithm, originally introduced in the seminal work of \citet{KVV-90} for the online bipartite matching problem. What makes the analysis of BIB particularly challenging—and distinguishes it from prior work—is twofold:
(i) the times at which different units of the same product are added to the inventory due to \revcolor{exogenous shocks} may vary, and since these units are treated differently by the primal BIB algorithm, they require distinct dual assignments; and
(ii) the inventory level of an arbitrary batch is not necessarily \emph{monotone decreasing} because of both endogenous and exogenous inventory shocks, in contrast to the classic online bipartite allocation problem without inventory shocks. \revcolor{This loss of monotonicity, a property central to previous analyses, introduces significant new challenges in our setting.}

% To address challenge (i), 
% \revcolor{we come up with the novel batch-specified LP benchmark~\ref{eq:opt}, that, as alluded to earlier, is based on the execution path of the BIB algorithm.} Furthermore, our primal-dual analysis designs a dual construction at the batch level rather than at the product level (\Cref{sec:IBB proof}). To address challenge (ii), roughly speaking, we use a randomized dual construction that incorporates a \emph{transformed version of inventory levels} of \emph{``ready''} batches using a particular combinatorial mapping. This combinatorial mapping requires certain properties that need to be satisfied, so that the resulting dual assignment is (approximately) feasible in expectation. We rephrase our dual feasibility criteria in a purely combinatorial form and define a stand-alone combinatorial problem called \emph{Interval Assignment Problem (IAP)}; see \Cref{lem:IAP} and \Cref{sec:IAP} for more details. We obtain our final dual construction by finding an algorithmic solution to the IAP problem (\Cref{alg:interval assignment}).

To address challenge (i), \revcolor{we develop the novel batch-specified LP benchmark~\ref{eq:opt}, which, as alluded to earlier, is defined based on the execution path of the BIB algorithm.} Furthermore, our primal–dual analysis constructs the duals at the batch level rather than at the product level (\Cref{sec:IBB proof}). To address challenge (ii), we employ a randomized dual construction that incorporates a \emph{transformed version of the inventory levels} of “ready’’ batches through a particular combinatorial mapping. This mapping must satisfy certain structural properties to ensure that the resulting dual assignment is (approximately) feasible in expectation. We then restate our dual feasibility conditions in a purely combinatorial form and introduce a stand-alone combinatorial problem, which we call the \emph{Interval Assignment Problem (IAP)}; see \Cref{lem:IAP} and \Cref{sec:IAP} for more details. Our final dual construction is obtained by developing an algorithmic solution to IAP (\Cref{alg:interval assignment}).

% To address challenge (i), we come up with an \emph{amortized} dual construction at the unit level (not at the product level) that depends on which batch the unit belongs. To address challenge (ii), roughly speaking, we use a randomized dual construction that incorporates a \emph{transformed version of inventory levels} of \emph{``ready''} batches using a particular combinatorial mapping. This combinatorial mapping requires certain properties that need to be satisfied, so that the resulting dual assignment is (approximately) feasible in expectation. We re-phrase our dual feasibility criteria in a purely combinatorial form and define an stand-alone combinatorial problem called \emph{interval assignment problem (IAP)}; see \Cref{lem:IAP} and \Cref{sec:IAP} for more details. We obtain our final dual construction by finding an algorithmic solution to the IAP problem (\Cref{alg:interval assignment}).

The IAP and our analysis framework could be valuable tools for other related problems. In the literature on online resource allocation, various penalty functions beyond the exponential penalty function have been studied and shown to be optimal or maintain the state of the art in various models \citep[e.g.,][]{HTWZ-17,MS-19,JW-21,JM-22,EFN-23}. When incorporating these models with inventory shocks, our framework for general penalty functions, including the IAP, might be applicable. We leave any further developments to future research.

% \subsubsection*{Practical relevance.} 
Finally, one important implication of our result is the \emph{robustness of the performance guarantee} of the (batched) inventory balancing algorithm, even in complex scenarios where inventories change in non-monotone fashion due to both endogenous and exogenous inventory shocks. This is particularly relevant in applications, as simplicity and transparency of this algorithm, as well as the fact that it does not need to keep track of the full state of the restocking, makes it more practically appealing.

For a more in-depth discussion of related work, see \Cref{apx:sec:further-related} in the electronic companion.

% \paragraph{Further related work}
% \input{Paper/related-work}

\section{Preliminaries}
\label{sec:prelim}
\xhdr{Problem formulation.}
We consider an online assortment planning problem subject to inventory constraints and inventory shocks. The platform manages inventories for $n$ different products, indexed by $[n] = \{1, 2, \dots, n\}$. Each product $i$ has an initial inventory of $\inventory_i \in \Z_{+}$ units, \revcolor{a deterministic \emph{usage duration} $\duration_i$ (also referred to as the return duration),} and a per-unit price $\reward_i$. We study a discrete, finite-horizon setting with $\totaltime$ time periods, during which product inventories may experience both endogenous and exogenous shocks.\footnote{\label{footnote:continuous time extension}\revcolor{Our results also extend to \emph{continuous} (i.e., real-valued) time. At each real-valued time $t \in [0,T]$, exogenous shocks $\{\replen_{i,t}\}_{i \in [n]}$ and a consumer may arrive. If the consumer selects product $i$, one unit is sold and becomes available again at real-valued time $t + \duration_i$.}}

For endogenous shocks, if a unit of product $i$ is sold at time $t \in [\totaltime]$, it remains under usage for $\duration_i$ periods and returns to the inventory at time $t + \duration_i$. For exogenous shocks, we assume that $\repleni \in \Z_{\geq 0}$ units of each product $i$ are added to the inventory at the beginning of period $t$. The sequence of exogenous shocks $\{\repleni\}_{i \in [n], t \in [\totaltime]}$ is unknown to the platform in advance and is chosen by an adaptive adversary. We distinguish between the initial inventories $\{\inventory_i\}_{i\in[n]}$ and the first-period shocks $\{\replen_{i,1}\}_{i\in[n]}$.\footnote{We typically consider the regime in which the initial inventories $\{\inventory_i\}$ are large, though we also analyze the small-inventory case. Nonetheless, we impose \emph{no restrictions} on the exogenous shocks $\{\repleni\}_{i \in [n], t \in [\totaltime]}$ other than non-negativity. The competitive ratio of our algorithm remains asymptotically optimal as $\min_i \inventory_i$ tends to infinity.}

% ~\footnote{Normally, we think of initial inventories as large numbers, although we also consider the small inventory regime.} 
% \revcolor{Our results remain valid when both consumers and exogenous shocks arrive in \emph{continuous} (i.e., real-valued) time. In particular, the same analysis applies to a more general setting where, at each real-valued time $t \in [0,T]$, there may be exogenous inventory shocks $\{\replen_{i,t}\}_{i \in [n]}$ and a consumer arrives. If this consumer selects one unit of product $i$, the unit is sold and then replenished, becoming available again at the beginning of time $t + \duration_i$.}

Consumers who are interested in buying these products arrive sequentially in discrete times $t\in[\totaltime]$.
Each consumer $t$ has a choice model $\choice_t:
\assortmentspace
\times [n] \rightarrow [0,1]$,
where 
$\assortmentspace\subseteq 2^{[n]}$ is the collection of all feasible assortments that can be offered,
and
$\choice_t(\assortment, i)$ is 
the probability that consumer $t$ chooses product $i$ 
when assortment set $S\in\assortmentspace$ is offered. 
Upon the arrival of consumer $t$,  
her choice model $\choice_t$ 
and
the adversarial exogenous shocks $\{\repleni\}_{i\in[n]}$
are revealed to the platform.  
Given $\choice_t$ and the history up to time $t$, 
the platform offers an assortment
$\assortment_t \in \assortmentspace$ of products from its available inventory.
Based on the choice probabilities $\choice_t(\assortment_t,i)$ for $i\in [n]$, consumer $t$ randomly selects a product $i_t\in \assortment_t$ or selects nothing. If $i_t$ is selected, then she pays the price $\reward_{i_t}$
to the platform. This unit sold returns to inventory at the beginning of time $t+d_i$.

We assume that consumers' choice models are chosen by an adversary (oblivious or adaptive). We further impose two assumptions on our choice models and feasible assortments, which have been commonly used in the past literature  \citep[][]{GNR-14, RST-17}:

\begin{assumption}[\emph{Weak substitutability \& monotonicity}]
	\label{asp:substitutability}
	For all time periods $t\in[\totaltime]$,
	choice models $\choice_t$ and products 
	$i \in [n]$,
	$\choice_t (\emptyset, i) = 0$. Moreover, for all 
	$\assortment\in \assortmentspace$ and $j\in [n]/ \{i\}$, $\choice_t (\assortment, i)\geq \choice_t (\assortment\cup\{j\}, i).$
\end{assumption}
\begin{assumption}[\emph{Downward-closed feasibility}] 
\label{asp:down-closed}
If $S\in\assortmentspace$ and $S'\subseteq S$ then $S'\in\assortmentspace$, i.e., a feasible assortment will remain feasible after removing any subset of its offered products. 
\end{assumption}

\xhdr{Competitive ratio analysis.} \revcolor{We evaluate the performance of the online algorithms compared to the \emph{clairvoyant optimum offline benchmark,} which is the optimum offline solution maximizing expected revenue, knowing the sequence of consumer choices and exogenous shocks $\{\choice_t, \repleni\}_{i\in[n],t\in[\totaltime]}$ but not knowing the exact realizations of the stochastic choices. More specifically, we study the worst case \emph{competitive ratio} of our algorithm against this benchmark, defined formally as follows:}
\begin{definition}[Competitive ratio]
	\label{def:competitive ratio}
	Fix $n$ products with 
	parameters $\{\reward_i, \duration_i,
	\inventory_i\}_{i\in[n]}$.
	An online algorithm $\algo$ is \emph{$\alpha$-competitive} \revcolor{against the clairvoyant optimum offline benchmark if}
	\begin{align*}
		\inf\limits_{T \geq 1}
		\inf\limits_{
		\{\choice_t, \repleni\}_{i\in[n],t\in[\totaltime]}
		}
		\frac{\Rev[\algo]{\{\choice_t, \repleni\}_{i\in[n],t\in[\totaltime]}
		}}{\bench{
		\{\choice_t, \repleni\}_{i\in[n],t\in[\totaltime]}
		}} 
		\geq \alpha~,
	\end{align*}
	\revcolor{where $\Rev[\algo]{\cdot}$ and $\bench{\cdot}$ are the expected revenues of algorithm $\algo$ and the clairvoyant optimum offline benchmark, respectively (with input being the sequence of consumer choices and exogenous shocks).}
  
    % \revcolor{(formally defined in \Cref{sec:benchmark})} that upper bounds the expected revenue of any online algorithms.\footnote{\label{footnote:revenue benchmark}\revcolor{Due to exogenous inventory shocks, rather than focusing on the classic optimum offline benchmark, we introduce and analyze a batch-specific LP benchmark.}}
    % \yfcomment{Referee 1: Page 7, definition of competitive ratio. I would suggest just defining OPT to be the offline optimum, because the current definition of OPT is vague.}
\end{definition}
\revcolor{In simple words, the competitive ratio above is defined as the worst-case ratio between the expected total revenue of the online algorithm and the clairvoyant optimum offline benchmark, where the worst case is over all sequences of consumer choice models and exogenous inventory shocks over time.}\footnote{\label{footnote:OPT definition}\revcolor{Notably, due to exogenous inventory shocks, we introduce and analyze a batch-specified linear-programming relaxation of the clairvoyant optimum offline (formally defined in \Cref{sec:benchmark}), as a proxy to obtain competitive ratios against this benchmark.}}
% The revenue benchmark analyzed in this paper is formally defined in \Cref{sec:benchmark}.

\xhdr{Oracle access to offline assortment solver.} Similar to prior work, we assume having access to an oracle algorithm that solves the one-shot offline assortment problem. For simplicity, we assume the solver is exact throughout the paper, but all of our results still hold with a multiplicative loss
of $\beta$ in the competitive ratios if the solver is a $\beta$-approximation algorithm for some $0<\beta<1$. 
\begin{assumption}[\emph{Offline assortment oracle}]
	\label{asp:oracle}
	For all $t\in[\totaltime]$ and non-negative reward $\mathbf{\hat{R}}\in\mathbb{R}_+^n$, we have oracle access to an algorithm that finds an assortment $\hat{\assortment}\in\assortmentspace$ such that:
$\hat{\assortment}\in\argmax\nolimits_{\assortment\in\assortmentspace}~{\sum\nolimits_{i=1}^n\hat{R}_i\choice_t(\assortment,i)}$ and has the smallest cardinality among all such reward-maximizing assortments (for tie-breaking). 
\end{assumption}

\section{Asymptotically Optimal Algorithm: \IBB}
\label{sec:IBB}
In this section, we propose our new online assortment algorithm, termed \emph{batched-inventory-balancing (BIB)}. \revcolor{We also summarize our main result, which shows that with appropriately chosen parameters, the BIB algorithm is not only asymptotically optimal in terms of its competitive ratio but also achieves a competitive ratio that \emph{exactly} matches that of the vanilla Inventory Balancing (IB) algorithm in online assortment optimization, as established in the setting without endogenous or exogenous shocks~\citep{GNR-14}. We prove this result in the next section (\Cref{sec:CR}) by developing a novel and refined primal–dual analysis for our BIB algorithm, which may be of independent interest.}
% We further summarize our main result, in which we show that by \emph{proper} choice of parameters, not only BIB is asymptotically optimal in terms of its competitive ratio, but also the resulting competitive ratio \emph{exactly} matches the competitive ratio known for Inventory Balancing algorithm for online assortment optimization without endogenous and exogenous shocks~\citep{GNR-14}. 

% its competitive ratio is asymptotically optimal. In fact, we develop a refined analysis framework using the notion of ``vi

% . In the next section, we show how to characterize its competitive ratio in its \emph{general} form with a refined analysis, which might be of independent interest. 

\subsection{An Overview of the BIB Algorithm}
The BIB algorithm begins with a simple idea. Instead of monitoring the total inventory level---that is, the total number of available units for each product~$i$---BIB treats the units of product~$i$ individually and dynamically assigns them to \emph{batches} as new units are added to the inventory, either through exogenous shocks or as part of the initial inventory. Let $\batchij$ denote the $j^{\textrm{th}}$ batch for product~$i$. The construction of batches $\{\batchij\}_{j=1,2,\dots}$ for a specific product $i \in [n]$ \revcolor{proceeds} as follows. At time period~0, all initial units of the product are grouped into the first batch, $\batch_{i,1}$, which is marked as ``\emph{\underline{ready}}''. Simultaneously, an empty batch, $\batch_{i,2}$, is created and marked as ``\emph{\underline{charging}}''. For each subsequent time period $t \in [T]$, any new unit of product~$i$ added through an exogenous inventory shock is permanently assigned to the current ``charging'' batch. 

The BIB algorithm uses a predetermined \emph{batch-size threshold} $\batchsizethreshold \in \mathbb{N}$. When the number of units in the current ``charging'' batch, denoted by $|\batchij|$, reaches this threshold, the batch is switched to ``ready,'' and a new empty batch marked as ``charging'' is created. Importantly, when a sold unit of a product returns to the inventory (an endogenous shock), it is added back to the same batch to which it was originally assigned---either when it first entered the inventory through an exogenous shock or as part of the initial inventory.

The BIB algorithm is also parameterized by a \emph{penalty function} $\pen:[0,1]\rightarrow[0,1]$ satisfying $\pen(0)=0$ and $\pen(1)=1$. During each time period~$t$, while performing the batch construction process described above for each product upon the arrival of exogenous shocks, \revcolor{the BIB algorithm maintains a record of the \emph{normalized inventory levels} $\{\inventoryratioijt\}$, where $\inventoryratioijt$ denotes the fraction of units of product~$i$ currently available at time $t$ among all units assigned to the ``ready'' batch~$\batchij$ before and at time~$t$.}\footnote{\revcolor{Let $\event(i, j, \tpre)$ denote the event that an available unit from product~$i$’s “ready’’ batch~$\batchij$ is allocated (or sold) at time~$\tpre$. Then the normalized inventory level is defined as $\inventoryratioijt \triangleq 1 - \sum_{\tpre \in [t - \duration_i + 1 : t - 1]} \indicator{\event(i, j, \tpre)} / |\batchij|$.}} The algorithm then uses these normalized inventory levels to determine the final assortment. Specifically, it selects the assortment $\assortment_t$ by invoking the offline assortment oracle with a reduced price $\hat{R}_i \gets \reward_i \pen(\max_j \inventoryratioijt)$ for each product~$i$, where the maximization is taken over all “ready’’ batches of that product. If product~$i^*$ is chosen by consumer~$t$, then an available unit from the “ready’’ batch~$\batch_{i^*, j^*}$ with $j^* = \argmax\nolimits_j \inventoryratio_{i^*, j, t}$ is allocated.\footnote{There always exists an available unit in batch~$\batch_{i^*, j^*}$. To see this, suppose otherwise. Then $\inventoryratio_{i^*, j^*, t} = 0$, and the reduced price $\hat{R}_{i^*}=0$ since $\pen(0)=0$. Hence, the assortment $\assortment_t \setminus \{i^*\}$, which remains feasible by Assumption~\ref{asp:down-closed}, would also maximize the reduced total revenue, contradicting the tie-breaking rule in Assumption~\ref{asp:oracle}.} See \Cref{alg:IBB} for a formal description of the BIB algorithm.

\begin{algorithm}[hbt]
 	\caption{\textsc{\IBB~(BIB)}}
 	\label{alg:IBB}
 	\KwIn{penalty function $\pen$, 
 	batch-size threshold $\batchsizethreshold$.}
 	
        \For{each product $i\in[n]$}{
            initialize batch $\batch_{i, 1}\gets [\inventory_i]$ as the ``ready'' batch containing all $\inventory_i$ initial units.

            initialize batch $\batch_{i, 2} \gets \emptyset$ as the ``charging'' batch.
            
            initialize ``ready'' batch counter $\cnt_i\gets 1$.
        }
 	
 	\For{each time period $t\in[T]$}{ 
 	
 	\For{each product $i\in[n]$}{

        \For{each new unit of product $i$ added from exogenous shock}{

 	      assign this new unit into batch $\batch_{i,\cnt_i + 1}$ 
 	
 	\If{$|\batch_{i,\cnt_i + 1}| = \batchsizethreshold$}{
 	update $\cnt_i \gets \cnt_i + 1$
  
 	  initialize $\batch_{i, \cnt_i + 1}\gets \emptyset$ as the new ``charging'' batch.	
 	}
 	}
        }
 	
 	{\color{blue}\tcc{consumer $t$ with choice model $\choice_t$ arrives.}}
  {\color{blue}\tcc{some sold units of products are returned to inventory due to endogenous shocks (i.e., restocking).}}
 	
        update normalized inventory levels $\{\inventoryratioijt\}$ of ``ready'' batches.
        
        solve 
         $\assortment_t \gets \argmax\nolimits_{\assortment\in \assortmentspace}
        \displaystyle\sum\nolimits_{i\in \assortment} 
        \reward_i\pen(\max\nolimits_{j\in[\cnt_i]} \inventoryratioijt)
        \cdot \choice_{t}(\assortment, i)
        $ with offline assortment oracle.

        offer assortment $\assortment_t$ to consumer $t$.

        {\color{blue}\tcc{suppose consumer $t$ chooses product $i^*$.}}

        allocate an available unit of product $i^*$ from batch $\batch_{i^*,j^*}$
        with $j^* = \argmax\nolimits_{j\in[\cnt_i]}\inventoryratio_{i^*,j,t}$.
    }
\end{algorithm}

\subsection{Main Result: Competitive Ratio of BIB}
Our main theorem establishes competitive ratio bounds for the general BIB algorithm 

\begin{restatable}{theorem}{IBratio}
	\label{thm:concave competitive ratio}
	For any monotone non-decreasing, concave and differentiable
	penalty function $\pen$, and
	any batch-size scalar \revcolor{$\batchsizethreshold\in[1, \mininventory]$},
	the competitive ratio of the 
	\IBB\ algorithm 
	is at least 
	$\competitiveratio(\pen, \batchsizethreshold) = 
	\min\{\competitiveratio_1(\pen,\batchsizethreshold), 
	\competitiveratio_2(\pen,\batchsizethreshold)\}$, where 
	\begin{align*}
    \competitiveratio_1(\pen,\batchsizethreshold) &= \min_{x\in[0,1-\frac{1}{\mininventory}]}
    \left\{
    \frac{1-x}{\frac{1}{\mininventory} + (1 + \frac{\batchsizethreshold}{\mininventory})(1 - \pen(x)) + \int_{x+\frac{1}{\mininventory}}^1\pen(y)dy}
    \right\}~,
    \\
    \competitiveratio_2(\pen,\batchsizethreshold) &=
    \min_{x\in[0,1-\frac{1}{\batchsizethreshold}]}
    \left\{
    \frac{1-x}{\frac{1}{\batchsizethreshold} + 1 - \pen(x) + \int_{x+\frac{1}{\batchsizethreshold}}^1\pen(y)dy}
    \right\}~,
\end{align*}
and $\mininventory = \min_{i\in[n]}\inventory_i$ is the smallest initial inventory.
\end{restatable}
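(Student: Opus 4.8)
The plan is to run a randomized primal--dual (dual-fitting) argument, in the style of \citet{DJK-13} and \citet{GNR-14}, against the batch-specified LP relaxation~\ref{eq:opt} of the clairvoyant optimum rather than against $\bench{\cdot}$ directly. The first step is to check that the optimal value of~\ref{eq:opt} upper bounds $\bench{\cdot}$: the batches $\{\batchij\}$ form a partition of the physical units of each product $i$, and this partition is pinned down by the exogenous-shock sequence together with BIB's batching subroutine, so any realizable offline schedule---in particular the clairvoyant one---induces a feasible point of~\ref{eq:opt} by routing each consumed unit to the batch it belongs to; since no unit of $i$ returns within any window of $\duration_i$ consecutive periods, the per-batch, per-window capacity constraints of~\ref{eq:opt} are automatically satisfied, with the up-to-$\batchsizethreshold$ units permanently frozen in ``charging'' batches carried along as an $O(\batchsizethreshold/\mininventory)$-order slack. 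It then suffices to exhibit a \emph{random} dual assignment whose value equals $\Rev[\textsc{BIB}]{\cdot}$ and which, after scaling by $1/\competitiveratio(\pen,\batchsizethreshold)$, is feasible \emph{in expectation}: by linearity of expectation the scaled expected assignment is a deterministic feasible dual, so weak LP duality gives that the optimal value of~\ref{eq:opt} is at most $\Rev[\textsc{BIB}]{\cdot}/\competitiveratio(\pen,\batchsizethreshold)$, which combined with the first step yields the claim.

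For the dual I would use a variable $\probdual_t$ for each consumer $t$ and---to address challenge~(i), that units of the same product entering the inventory at different times are treated differently by BIB---an inventory dual $\inventorydualijt$ attached to each ``ready'' batch $\batchij$, not to the product. Setting $\probdual_t$ equal to the reduced-price objective $\sum_{i\in\assortment_t}\reward_i\pen(\max_j\inventoryratioijt)\choice_t(\assortment_t,i)$ attained by BIB's oracle at time~$t$, Assumption~\ref{asp:oracle} makes that value dominate $\sum_{i\in S}\reward_i\pen(\max_j\inventoryratioijt)\choice_t(S,i)$ for \emph{every} feasible $S$; this collapses the whole family of dual constraints to the per-batch requirement that the final value of $\inventorydual_{i,j}$ be at least $\reward_i\bigl(1-\pen(\inventoryratioijt)\bigr)$ whenever $\batchij$ realizes the running maximum at some time~$t$ (monotonicity of $\inventorydual_{i,j}$ in $t$ lets us check only the worst such time). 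Building the $\inventorydual_{i,j}$'s so they meet these lower bounds while keeping $\sum_{i,j}\setsize{\batchij}\inventorydual_{i,j}$ small is the crux, and this is exactly where challenge~(ii) bites: because endogenous returns push $\inventoryratioijt$ back up, the multiset of normalized levels at which units of $\batchij$ are allocated is not a monotone $1/\setsize{\batchij}$-grid and may contain many allocations at near-full levels, so the classical telescoping bound on the dual mass would diverge.

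To get around this I would use the randomized ``transformed-level'' dual construction, whose structural requirements the paper recasts as a stand-alone combinatorial problem---the Interval Assignment Problem, solved in \Cref{lem:IAP}. The fact to exploit is that within any window of $\duration_i$ consecutive periods $\batchij$ receives no returns, so at most $\setsize{\batchij}$ of its units are allocated there and the level is \emph{locally} a decreasing function of an index ranging over $\{0,1/\setsize{\batchij},\dots,1\}$; but because the windows implicit in the definition of $\inventoryratioijt$ slide with $t$, one cannot use a fixed partition and must instead assign each allocation event to a slot of this grid so that (a)~every slot receives $O(1)$ events, (b)~the ``charging level'' assigned to each slot is at most the normalized level at every event routed to it (so that $1-\pen$ at the charging level is $\ge$ its value at the event's level, as the dual lower bound requires), and (c)~$\sum_{\text{slots}}\reward_i(1-\pen(\text{charging level}))$ telescopes to essentially $\reward_i\setsize{\batchij}\int_0^1(1-\pen(y))\,dy$ rather than something unbounded. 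An analysis-only random offset used when breaking ties among eligible slots inside \Cref{alg:interval assignment} is what makes (a)--(c) hold in expectation, which is all that is required. I expect establishing that the IAP admits such an assignment (\Cref{lem:IAP}, proved constructively via \Cref{alg:interval assignment}), together with the bookkeeping for overlapping windows of different products and for the boundary between the size-$\inventory_i$ batch $\batch_{i,1}$ and the later size-exactly-$\batchsizethreshold$ ready batches, to be the main obstacle; the rest is a careful adaptation of the primal--dual template of \citet{GNR-14}, where concavity and differentiability of $\pen$ enter only in the final scalar optimization.

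Assembling the pieces produces exactly the two regimes in the statement (using $\batchsizethreshold\le\mininventory$, so every ready batch has size at least $\batchsizethreshold$). Batch $\batch_{i,1}$ has size $\inventory_i\ge\mininventory$ but must also absorb the up-to-$\batchsizethreshold$ units frozen in its companion ``charging'' batch, which the benchmark can consume but BIB cannot; this inflates the $(1-\pen(x))$ term by the factor $(1+\batchsizethreshold/\mininventory)$ and, on the $1/\mininventory$-grid, yields $\competitiveratio_1(\pen,\batchsizethreshold)$. Every later ``ready'' batch has size exactly $\batchsizethreshold$, incurs no ``charging'' overhead, but lives on the coarser $1/\batchsizethreshold$-grid, yielding $\competitiveratio_2(\pen,\batchsizethreshold)$---literally the shock-free bound with $\mininventory$ replaced by $\batchsizethreshold$. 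Optimizing over the adversary's ``stopping fraction'' $x$ within each batch and taking the worse of the two batch types shows that $1/\competitiveratio(\pen,\batchsizethreshold)$ with $\competitiveratio(\pen,\batchsizethreshold)=\min\{\competitiveratio_1,\competitiveratio_2\}$ is a valid scaling factor for the expected dual feasibility, and weak duality closes the argument as in the first paragraph. (As $\mininventory\to\infty$ with $\batchsizethreshold=\batchsizethreshold(\mininventory)\to\infty$ and $\batchsizethreshold/\mininventory\to0$, both $\competitiveratio_1$ and $\competitiveratio_2$ converge to the shock-free bound of \citet{GNR-14}, hence to $1-1/e$ for the exponential penalty, recovering the asymptotic optimality.)
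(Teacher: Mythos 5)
Your proposal follows essentially the same route as the paper: upper-bound the clairvoyant optimum by the batch-specified LP \ref{eq:opt}, fit a randomized dual at the batch level whose expected value tracks BIB's revenue, use the IAP of \Cref{lem:IAP} to replace the non-monotone normalized levels by transformed levels that telescope, and split the accounting into the $\batch_{i,1}$ regime (giving $\competitiveratio_1$, with the $(1+\batchsizethreshold/\mininventory)$ inflation from the companion charging batch) and the size-$\batchsizethreshold$ regime (giving $\competitiveratio_2$). The architecture and the final bookkeeping are correct.

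Three details in your sketch are stated in a way that would not survive being made precise, though each is repaired by the paper's actual construction. First, ``routing each consumed unit to the batch it belongs to'' does not yield a feasible point of \ref{eq:opt}: the clairvoyant may allocate a physically present unit of batch $\batchij$ at a time $t<\addtimeij$, while the LP only allows mass on batch-specified assortments in $\indexset_t$, which require $\addtime_{i,j_i}\le t$. The paper instead shifts every allocation from batch $j$ to batch $j-1$; this is exactly what produces the right-hand sides $\inventory_i+\batchsizethreshold$ and $\batchsizethreshold$, and it is the mechanism behind the overhead you correctly attribute to $\competitiveratio_1$. Second, there is no ``analysis-only random offset'' in \Cref{alg:interval assignment}: the IAP solution is deterministic and \hyperref[lem:IAP]{\LocalDominance}, \hyperref[lem:IAP]{\GlobalDominance}, and \hyperref[lem:IAP]{\PartitionMonotonicity} hold pointwise on every sample path; the only randomness in the dual is over the consumer's realized choice, which is where ``feasible in expectation'' enters. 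Third, the dual constraints cannot be collapsed to a condition on ``the final value of $\inventorydual_{i,j}$'' via monotonicity in $t$: because units return, the constraint involves the sliding-window sum $\sum_{\tpre=t-\duration_i+1}^{t}\inventorydual_{i,j,\tpre}$ and must be verified at every $t$ separately; \hyperref[lem:IAP]{\LocalDominance} is precisely the property that makes each such windowed sum cover $\reward_i\bigl(1-\pen(\inventoryratioijt)\bigr)$. None of these changes the high-level plan, but each is a place where the sketch, taken literally, would fail.
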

We provide a comprehensive analysis and prove \Cref{thm:concave competitive ratio} in \Cref{sec:CR}. Before that, we next elaborate on some theoretical and practical corollaries and implications of our main result. 

\subsection{Corollaries and Implications of Theorem~\ref{thm:concave competitive ratio}}
\label{sec:implication}
Based on \Cref{thm:concave competitive ratio}, it can be verified that the competitive ratio of the BIB algorithm is asymptotically optimized by selecting the exponential penalty function $\exponential$ and $\batchsizethreshold = \Theta(\sqrt{\mininventory})$, which approaches $\left(1-\sfrac{1} {e}\right)\approx 63\%$ as $\mininventory$ goes to infinity. This bound, which is now obtained in the presence of endogenous and exogenous shocks, is the asymptotic optimal bound even for the special cases of the online assortment~\citep{GNR-14} and the online vertex-weighted b-matching~\citep{KVV-90,BM-08, AGKM-11} without inventory shocks.

% Combining with the upper bound of the vanilla Inventory Balancing (IB) algorithm with an arbitrary concave penalty function $\pen$, established in \citet{GNR-14} for the special case of online assortment optimization without inventory shocks, we show that the BIB algorithm achieves the \emph{exact same} asymptotic competitive ratio of IB for \emph{any} choice of concave penalty function $\pen$.\footnote{In \Cref{apx:failed algorithms}, we explore certain natural generalizations of the vanilla inventory balancing (IB) algorithm that are much simpler than the BIB algorithm. We construct specific instances that demonstrate the inability of these generalizations to achieve the asymptotic optimal competitive ratio of $1-{1}/{e}$, even in the special case that only includes exogenous inventory shocks.}

In addition to establishing the asymptotic optimality of the BIB algorithm under the exponential penalty function, the competitive ratio bound for a general concave penalty function $\pen$ in \Cref{thm:concave competitive ratio} has three key implications. First, when compared with the competitive ratio bound of the vanilla IB algorithm with an arbitrary concave penalty function $\pen$---as proved in \citet{GNR-14} for the special case of online assortment optimization \emph{without} inventory shocks---we show that the BIB algorithm achieves the \emph{exact same} asymptotic competitive ratio as IB for \emph{any} choice of concave penalty function $\pen$, even in the presence of both exogenous and endogenous shocks.\footnote{In \Cref{apx:numerical}, we examine several natural generalizations of the vanilla Inventory Balancing (IB) algorithm that are considerably simpler than BIB. We construct specific instances demonstrating that these generalizations fail to achieve the asymptotic optimal competitive ratio of $1 - 1/e$, even in the special case with only exogenous inventory shocks. In particular, we establish competitive ratio upper bounds of $0.552$, $0.53$, and $0.5$ for these simple IB variants in \Cref{apx:numerical stylized instance}.} This result highlights the robustness and power of inventory-balancing-style algorithms in online resource allocation: when properly adapted, they preserve their performance even under dynamic inventory environments with endogenous and exogenous shocks. Second, the precise competitive ratio bound in \Cref{thm:concave competitive ratio} also characterizes the dependence on the initial (finite) inventory levels, offering a sharper perspective beyond the asymptotic regime potentially useful for practitioners. \revcolor{Lastly, \Cref{thm:concave competitive ratio} provides a best-of-both-worlds guarantee: it ensures robust performance under arbitrary concave penalty functions in general settings, while allowing practitioners to leverage application-specific structures by tailoring the penalty function for potentially stronger guarantees~\citep[e.g.,][]{FN-25}.}

\section{Competitive Ratio Analysis: Proof of Theorem~\ref{thm:concave competitive ratio}}

\label{sec:CR}
As mentioned earlier in the introduction, our goal is to use the primal–dual analysis to prove \Cref{thm:concave competitive ratio}. In contrast to prior work that employs similar approaches, our analysis faces two main technical challenges: (i) different units of the same product require distinct dual assignments, since the times at which they are added to the inventory due to exogenous shocks may differ; and
(ii) the inventory level of an arbitrary batch is no longer monotonically decreasing because of both endogenous and exogenous shocks---a property that was crucial for the success of prior primal–dual analyses in both the online matching problem and its generalizations to online assortment.

\xhdr{Overview of the analysis.} To address the challenge (i), we come up with a batch-specified LP revenue benchmark that relies on the execution path of the BIB algorithm (\Cref{sec:benchmark}), and our primal-dual analysis designs a dual construction at the batch level rather than at the product level (\Cref{sec:IBB proof}). To address the challenge (ii), roughly speaking, the dual construction uses a transformed version of inventory levels of ``ready'' batches using a particular mapping. The mapping is combinatorial and constructing it is rather involved; it uses a chain of techniques to reduce the problem of finding an appropriate inventory mapping for our purpose to solving a sub-problem called  \emph{Interval Assignment Problem (IAP)} (\Cref{lem:IAP} and \Cref{sec:IAP}). Given the solution to IAP, our final dual construction uses transformed inventory levels and incorporates randomness in a specific fashion to provide a certificate for the approximate optimality of the primal algorithm.

\subsection{Batch-Specified Assortment and LP Revenue Benchmark}
\label{sec:benchmark}
% We first define the revenue benchmark that is used for the analysis of the competitive ratio of the BIB algorithm (\Cref{alg:IBB}). 
%
\revcolor{We first define the revenue benchmark used in analyzing the competitive ratio of the BIB algorithm (\Cref{alg:IBB}). Recall that the BIB algorithm views different units of the same product as separate entities rather than identical ones. This distinction is important, as different units may become available at different times due to replenishments. Therefore, to make the comparison meaningful, the benchmark should also distinguish between units of a fixed product. With this motivation, we extend the standard concept of an assortment (which is defined over products) to the notion of \emph{batch-specified assortments}. Building on this, we then introduce the \emph{batch-specified LP benchmark}, where both concepts are defined with respect to the product batches generated by the BIB algorithm.}

\xhdr{Batch-specified assortments.} Though all units of a fixed product $i$ have the same price $r_i$ and are identical from the consumers' perspective, it is useful to consider them separately.\footnote{In fact, ignoring endogenous effects of return, one can think of the inventory at each time being formed by the initial inventory and added inventory due to exogenous shocks at different times. With this view, considering units separately (as will be clear later) helps with keeping track of the time they are added for the first time to the inventory.} More precisely, our analysis partitions all units of each product $i$ based on their assigned batches in the BIB algorithm, and units in each \revcolor{batch} are considered separately.

We introduce an important auxiliary concept, which we term as \emph{batch-specified assortment}, defined as follows. A batch-specified assortment $\indexnew = \{(i,j_i): i\in S\}$ encodes a subset of products  $S$ (same as a typical assortment) together with a specific batch $\batch_{i,j_i}$ for each product $i\in\assortment$. With slight abuse of notation, we define $\choice_t(\indexnew,i)\triangleq \choice_t(\assortment, i)$ for every batch-specified assortment $\indexnew =  \{(i,j_i): i\in S\}$. By offering batch-specified assortment $\indexnew$ to consumer $t$, a unit from batch $\batch_{i,j_i}$ of each product $i\in\assortment$ is selected with probability $\choice_t(\indexnew,i)$. 

Recall the (dynamic) construction of batches in the BIB algorithm, described formally in \Cref{alg:IBB}. When the algorithm terminates at the end of the horizon $T$, there are $\cnt_i$ ``ready'' batches for each product $i\in[n]$. We denote by $\addtimeij$ as the time period when batch $\batchij$ is switched from ``charging'' to ``ready''. As a sanity check, $\addtime_{i,1} = 1$ and $\addtime_{i,j}$ is weakly increasing in $j$. We say a batch-specified assortment $\indexnew =  \{(i,j_i): i\in S\}$ is \emph{feasible} in time period $t$ if its encoded assortment $\assortment$ is feasible (i.e., $\assortment\in\assortmentspace$) and $\addtime_{i,j_i} \leq t$ for every product $i\in\assortment$. We use $\indexset_t$ to denote the set of all feasible batch-specified assortments in time period $t$. Moreover, we use $\indexsetijt$ to denote all feasible batch-specified assortments in time period $t$ that contain batch $\batchij$ of product $i$. As a sanity check, the BIB algorithm is essentially offering batch-specified assortment 
% $\indexnew_t = \{(i,j_i\gets \argmax_{j}\inventoryratioijt): i\in\assortment_t\}$ 
\revcolor{$\indexnew_t = \{(i,j_i): i\in\assortment_t, j_i=\argmax_{j}\inventoryratioijt\}$}
for each consumer $t\in[T]$ 
and each batch-specified assortment $\indexnew_t$ is feasible in time period $t$, i.e., $\indexnew_t \in\indexsett$.

% To simplify later discussions,
% we 
% define $\finalinventoryi = \inventory_i+\sum_{t=1}^\totaltime\repleni
% $
% as the total number of units of product $i$ for all $T$ time periods.\footnote{The total number of units $\finalinventoryi$ is only used in our analysis, which is not fully revealed to the platform until the end of the last period.}
% We 
% then index these units
% by $k=1,2,\ldots,\finalinventoryi$.\footnote{Importantly, suppose unit $k$ of product $i$ is purchased by a consumer at time $t$ and then return to inventory at the beginning of time $t + \duration_i$ due to the endogenous shock. We still view this unit returned due to endogenous shock as the same index $k$ of product $i$.}
% Given the unit $k$ of product $i$, let $\addtime(i, k)$ denote the time that this unit is added to inventory due to an exogenous shock.
% Without loss of generality,
% we assume $\addtime(i, k) \leq \addtime(i, k')$
% for each product $i$ and units $k,k'$ where $k\leq k'$.

% \footnote{Though all units of a fixed 
% product $i$ 
% have the same rental fee $r_i$
% and is indifferent from consumers' perspective,
% it is useful to consider them separately 
% since their replenishment times are different.}

\xhdr{Batch-specified LP benchmark.} 
Similar to the batch-specified assortment introduced above, the revenue benchmark for our competitive ratio analysis also relies on the execution of the BIB algorithm, and in particular, the batch formulation. Consider the following linear program that knows the exact sequence of $\{\choice_t, \repleni\}_{i\in[n],t\in[\totaltime]}$, but not the realization of consumers' choices:
\begin{align}
\label{eq:opt}
\tag{$\mathcal{P}_{\textsc{OPT}}(\batchsizethreshold)$}
	\begin{array}{lll}
	{\max
	\limits_{\mathbf \alloc \geq \mathbf 0}}~~&
	\displaystyle\sum\nolimits_{t=1}^\totaltime 
	\displaystyle\sum\nolimits_{\indexnew \in \indexset(t)}
	\displaystyle\sum\nolimits_{i=1}^n
	\reward_i\choice_t(\indexnew, i) \curalloc
	&\text{s.t.}
        \\[1em]
	  &
	\displaystyle\sum\nolimits_{t =t\primed}^{t\primed+\duration_i-1}
	\displaystyle\sum\nolimits_{\indexnew \in \indexset_{i,1,t}}
	\choice_{t}(\indexnew, i)
	\curalloc \leq \inventory_i + \batchsizethreshold \quad\quad
	& i \in [n],\ 
	t\primed \in [\addtime_{i,1}: \totaltime]
	\\[1em]
	  &
	\displaystyle\sum\nolimits_{t =t\primed}^{t\primed+\duration_i-1}
	\displaystyle\sum\nolimits_{\indexnew \in \indexsetijt}
	\choice_{t}(\indexnew, i)
	\curalloc \leq \batchsizethreshold
	& i \in [n],\ 
	j\in[2:\cnt_i],\
	t\primed \in [\addtimeij: \totaltime]
	\\[1em]
	&
	\displaystyle\sum\nolimits_{\indexnew \in \indexsett}
	\curalloc 
	\leq 1
	&t \in [\totaltime]~.
\end{array}
\end{align}
It can be observed that program~\ref{eq:opt} 
\revcolor{depends on the definition of BIB algorithm.}
% relies on the execution path of the BIB algorithm. 
Specifically, $\batchsizethreshold$ is the batch-size threshold of the BIB algorithm, $\{\cnt_i\}_{i\in[n]}$ is the number of ``ready'' batches in the algorithm, and $\{\addtimeij,\indexsett,\indexsetijt\}$ are all induced by the BIB algorithm. To highlight this dependence on the BIB algorithm, we parameterized program~\ref{eq:opt} by $\batchsizethreshold$.

In program~\ref{eq:opt}, each variable $\curalloc$ can be interpreted as the probability of offering a batch-specified assortment $\indexnew$ for consumer $t$. With this interpretation, the objective captures the expected revenue. Furthermore, the first (second) set of constraints captures a \emph{relaxed} inventory feasibility constraint for all units in the first batch (second and later batches) of each product.\footnote{Fixing a product, the first constraint is an aggregated capacity constraint for all the units in the initial ready batch and the batch right after that (imagine those units are merged with the first batch). Therefore, the right-hand side is $\inventory_i+\gamma$. Then the constraint corresponding to each index $j\in[2:h_i]$ is a capacity constraint for units in batch $j+1$ (imagine that those units are shifted to batch $j$, as if they have arrived earlier). As shifting units to arrive earlier makes the problem more relaxed, this program is a relaxation. See the proof of \Cref{lem:relaxation} for a formal argument.}
The third set of constraints ensures that the sum of allocation probabilities is at most one for each consumer.

As the main technical lemma in this subsection, we formally show that for any execution of the BIB algorithm, its induced program~\ref{eq:opt} is an upper bound on the revenue of the clairvoyant optimum policy that knows the adversarial sequence of consumer choice models and amounts of exogenous inventory shocks (and hence a relaxation to any feasible policy): 
\begin{restatable}{lemma}{lemrelax}
\label{lem:relaxation}
For any sequence $\{\choice_t, \repleni\}_{i\in[n],t\in[\totaltime]}$,
the expected revenue of the clairvoyant optimum policy (and hence any online algorithm) is at most the optimal objective value in program~\ref{eq:opt} with any choice of batch-size threshold $\batchsizethreshold\in\naturals$.
\end{restatable}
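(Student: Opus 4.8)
The plan is to show that the clairvoyant optimum policy induces a feasible solution to \ref{eq:opt} whose objective value equals (or lower-bounds) its expected revenue, so that the LP optimum dominates. I would proceed in three conceptual steps. First, I would fix a realization-free description of the clairvoyant policy: since it knows $\{\choice_t,\repleni\}$ but not the stochastic choice realizations, it can be taken (by a standard averaging/derandomization argument over its internal randomness and the choice realizations) to be described by probabilities $\Pr[\text{offer assortment } \assortment \text{ to consumer } t]$, and then, conditioned on offering $\assortment$, consumer $t$ picks product $i$ with probability $\choice_t(\assortment,i)$. The subtlety is that the clairvoyant policy operates on \emph{products}, not on the BIB algorithm's batches, so I must lift its product-level allocation to a batch-specified allocation. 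I would do this by tracking, within the clairvoyant policy's own (coupled) run, which physical unit of product $i$ is consumed at each step, and then mapping that unit to the batch $\batchij$ it would have been assigned to \emph{by the BIB bookkeeping rule} — note the batch structure $\{\addtimeij,\cnt_i,\indexsett,\indexsetijt\}$ is a fixed function of the input sequence $\{\repleni\}$ only (it does not depend on allocations, since units enter batches purely by arrival time and the deterministic threshold $\batchsizethreshold$), so this map is well-defined regardless of which policy is consuming units. Setting $\curalloc$ to the resulting probability of offering the batch-specified assortment $\indexnew$ to consumer $t$ gives a candidate LP solution whose objective is exactly the clairvoyant expected revenue, by linearity and the definition $\choice_t(\indexnew,i)=\choice_t(\assortment,i)$.

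Second, I would verify feasibility of this candidate. The third (consumer) constraint $\sum_{\indexnew\in\indexsett}\curalloc\le 1$ is immediate since the clairvoyant policy offers at most one assortment per consumer. For the batch capacity constraints, the key physical fact is: a unit assigned to batch $\batchij$ that becomes available at time $\addtimeij$ can be sold at most once per usage window of length $\duration_i$; more precisely, over any window $[\tpre:\tpre+\duration_i-1]$ the expected number of sales charged to a single physical unit of that batch is at most $1$ (it is "locked" for $\duration_i$ periods after each sale, by the endogenous-shock rule). Summing over the at most $\batchsizethreshold$ units of batch $\batchij$ (for $j\ge 2$) gives the right-hand side $\batchsizethreshold$; for $j=1$ the batch $\batch_{i,1}$ has $\inventory_i$ units, and — as the footnote indicates — I would \emph{merge} batch $\batch_{i,2}$ into $\batch_{i,1}$, i.e. relax by pretending the $\le\batchsizethreshold$ units of the second batch arrived at time $\addtime_{i,1}=1$ together with the $\inventory_i$ initial units, which can only enlarge the feasible set and yields right-hand side $\inventory_i+\batchsizethreshold$. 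More generally, for $j\ge 2$, I would relax by shifting the units of batch $j+1$ back to batch $j$ (pretending they arrived at $\addtimeij$ rather than $\addtime_{i,j+1}$); since an earlier arrival time only weakens the per-window constraint, the true clairvoyant allocation remains feasible after this relaxation. Assembling these per-window, per-batch bounds gives exactly the first and second constraint families of \ref{eq:opt}.

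Third, I would conclude: the constructed $\curalloc$ is feasible for \ref{eq:opt} and achieves objective value equal to the clairvoyant expected revenue, hence $\bench{\{\choice_t,\repleni\}}\le$ the optimal value of \ref{eq:opt}; since the clairvoyant optimum upper-bounds any online algorithm (it has strictly more information), the claim follows, and the argument holds verbatim for every $\batchsizethreshold\in\naturals$ because the batch structure used is the one BIB would induce for that same $\batchsizethreshold$.

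\textbf{Main obstacle.} I expect the delicate part to be Step~2's shifting/merging relaxation argument: one must argue carefully that re-indexing units to earlier batches genuinely enlarges the feasible region of the "true" (physical) allocation problem — i.e. that any allocation feasible under the real arrival times is still feasible when units are pretended to have arrived earlier — and that the bookkeeping correctly pairs the $j$-th real capacity constraint with units of batch $j+1$. A secondary, more mechanical obstacle is making rigorous the claim that the expected number of times a \emph{single} physical unit is sold within any length-$\duration_i$ window is at most $1$ under the endogenous return dynamics; this is intuitively clear (a sold unit is gone for exactly $\duration_i$ steps) but needs to be stated as a clean invariant and summed correctly over the units of a batch and over overlapping windows indexed by $\tpre$.
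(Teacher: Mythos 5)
Your proposal is correct and follows essentially the same route as the paper's proof: lift the clairvoyant policy's product-level allocations to batch-specified assortments using the batch structure induced by the exogenous-shock sequence (which, as you correctly note, is independent of the allocation decisions), then shift each allocation from batch $j$ to batch $j-1$ (merging the second batch into the first) so that the window constraints with right-hand sides $\inventory_i+\batchsizethreshold$ and $\batchsizethreshold$ are satisfied, while the objective is preserved. The ``main obstacle'' you flag --- that pretending units arrived earlier only relaxes the per-window capacity constraints --- is exactly the argument the paper makes (in its footnote and in the proof of \Cref{lem:relaxation}), so no gap remains.
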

\begin{proof}{\textsl{Proof.}}
Given any $\batchsizethreshold\in\naturals$, we partition all units of each product into batches based on the batch construction/assignment of the BIB algorithm (with batch-size threshold $\batchsizethreshold$) and equivalently consider the batch-specified assortment as we defined above. In the following, we construct a feasible solution for program~\ref{eq:opt} using the execution path of the clairvoyant optimum policy. 

\revcolor{
For each time period $t$, let $\tilde{\indexnew}_t$ denote the batch-specified assortment offered by the clairvoyant policy, where $\tilde{\indexnew}_t$ is a sample-path dependent random subset. 
For each unit allocation of product $i$ in batch $\batch_{i,j}$ by the clairvoyant policy, we reassign it to the earlier batch $\batch_{i,j-1}$ (with $\batch_{i,0}$ interpreted as $\batch_{i,1}$). 
Let $\tilde{\indexnew}\primed_t$ denote the batch-specified assortment obtained by applying this batch-shifting operation to $\tilde{\indexnew}_t$, i.e., for each $(i, j) \in \tilde{\indexnew}_t$, include $(i, j - 1\vee 1)$ in $\tilde{\indexnew}\primed_t$. 
We then set $\alloc_{\tilde{\indexnew}\primed_t,t}=1$ for all $t$, and $\alloc_{\indexnew,t}=0$ for all other $\indexnew \in \indexset_t$. We next verify the constructed solution is feasible and finally analyze its objective value. 

% At each time period $t\in[T]$, suppose that the clairvoyant optimum policy offers a batch-specified assortment $\tilde{\indexnew}_t$ to consumer $t$, \revcolor{where $\tilde{\indexnew}_t$ is a sample-path dependent random subset}. We construct the following batch-specified assortment $\tilde{\indexnew}\primed_t$: for each $(i, j) \in \tilde{\indexnew}_t$, include $(i, j - 1\vee 1)$ in $\tilde{\indexnew}\primed_t$. 
% Now consider a solution of program~\ref{eq:opt} constructed as follows: $\alloc_{\tilde{\indexnew}_t\primed,t} = 1$, and $\alloc_{\indexnew,t} = 0$ for every $\indexnew\not= \tilde{\indexnew}_t\primed$.\footnote{Here, we assume the clairvoyant optimum policy is deterministic. If it is randomized, then we perform exactly the same shift in every sample path, without changing the probabilities of the sample path.} 

\noindent\emph{First set of constraints.}
Consider any product $i$ and any interval of length $\duration_i$. 
The clairvoyant policy can allocate at most $\inventory_i$ units from the initial inventory plus those from the first exogenous batch. 
By construction, such allocations are shifted to $\batch_{i,1}$ in our constructed solution. 
Since $\batch_{i,1}$ has total size $\inventory_i+\batchsizethreshold$, the number of shifted allocations never exceeds this amount. 
Hence, the first family of constraints in \ref{eq:opt} is satisfied.

\noindent\emph{Second set of constraints.}
For each $j \ge 2$, each allocation of a unit from $\batch_{i,j+1}$ in the clairvoyant policy is shifted to $\batch_{i,j}$ in our constructed solution. 
As each batch has size at most $\batchsizethreshold$, in any interval of length $\duration_i$ there can be at most $\batchsizethreshold$ such effective allocations. 
Thus, the second family of constraints in \ref{eq:opt} holds.

\noindent\emph{Third set of constraints.}
For each time $t$, we set $\alloc_{\tilde{\indexnew}\primed_t,t}=1$ and all other allocations at time $t$ to zero. 
Therefore, $\sum_{\indexnew \in \indexset_t} \alloc_{\indexnew,t}=1$ for each $t$, which satisfies the third family of constraints.

\noindent\emph{Objective value.}
Finally, note that for every period $t$, the shifted assortment $\tilde{\indexnew}\primed_t$ yields the same choice probabilities as the clairvoyant assortment $\tilde{\indexnew}_t$. 
Therefore, the expected objective value of our constructed solution matches that of the clairvoyant policy.}
% 
% 
% First, we verify the feasibility of the constructed solution. Due to the endogenous inventory shock, each unit of fixed product $i\in[n]$ can only be allocated once for each time interval of length $\duration_i$. Due to the construction of $\tilde{\indexnew}_t\primed$, and the size of the batches constructed in the BIB algorithm, the first and second set of constraints are satisfied. The third set of constraints is also satisfied straightforwardly due to the solution construction. Finally, we verify that the expected objective value of the constructed solution is equal to the expected revenue of the clairvoyant optimum policy. To see this, note that the construction of $\tilde{\indexnew}_t\primed$ guarantees that each batch $\batchij$ with $(i, j)\in\tilde{\indexnew}_t\primed$ is ``ready'' at time period $t$ and thus $\tilde{\indexnew}_t\primed\in\indexset_t$. 
\hfill\halmos
\end{proof}

\subsection{Competitive Ratio Analysis of \IBB}

\label{sec:IBB proof}

In this section, we present the competitive ratio analysis of the BIB algorithm using a primal-dual framework. 
\revcolor{Before delving into analysis, we introduce a key combinatorial construct underpinning our dual construction.

\xhdr{Interval assignment problem (IAP).}
Consider a sequence of $s$ intervals indexed by $[s] \triangleq \{1, 2, \dots, s\}$, denoted $\{[a_i, b_i]\}_{i \in [s]}$, whose left endpoints satisfy $a_1 < a_2 < \dots < a_s$. For each interval index $i \in [s]$, define the \emph{coverage set}
$\contain_i = \{j \in [s] : a_i \in [a_j, b_j]\}$,
i.e., the set of interval indices whose intervals cover the left endpoint $a_i$, and let $|\contain_i|$ denote its cardinality.
The \emph{Interval Assignment Problem (IAP)} asks to assign a positive integer $\assign_i \in \naturals$ to each interval index $i$ such that the resulting assignment $(\assign_1, \dots, \assign_s)$ satisfies the following three structural properties:
\begin{enumerate}[label=(\roman*)]
    \item \textsl{\underline{(\LocalDominance):}} 
    For each interval index $i \in [s]$, there exists a bijection $\rho_i : \contain_i \to \{1, 2, \dots, \containni\}$ such that $\assign_j \geq \rho_i(j)$ for all $j \in \contain_i$.
    
    \item \textsl{\underline{(\GlobalDominance):}} 
    For every decreasing and concave function $\bar\pen$, we have
    % \begin{align*}
    $\sum_{i \in [s]} \bar\pen(\assign_i) \geq \sum_{i \in [s]} \bar\pen(\containni).$
    % \end{align*}
    
    \item \textsl{\underline{(\PartitionMonotonicity):}} 
    The interval index set can be partitioned into disjoint subgroups $A_1, \dots, A_\kappa$ such that for each $\ell \in [\kappa]$,
    % \begin{align*}
    $\{\assign_i : i \in A_\ell\} = \{1, 2, \dots, |A_\ell|\}.$
    % \end{align*}
\end{enumerate}
We now state our main result about the IAP: a polynomial-time algorithm exists that always produces a valid assignment satisfying all three properties. Illustrative examples for the IAP can be found in \Cref{apx:IAP illustration}.
\begin{lemma}
\label{lem:IAP}
For any input instance of the IAP as defined above, there exists a polynomial-time algorithm (\Cref{alg:interval assignment}; see \Cref{sec:IAP} for details) that computes an assignment $(\assign_1, \dots, \assign_s) \in \naturals^s$ satisfying \hyperref[lem:IAP]{\LocalDominance}, \hyperref[lem:IAP]{\GlobalDominance}, and \hyperref[lem:IAP]{\PartitionMonotonicity}. 
% Moreover, a partition achieving (\PartitionMonotonicity) can also be constructed in polynomial time (\Cref{alg:IAP partition}).
\end{lemma}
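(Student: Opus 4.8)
The plan is to recast the three properties in terms of the level sets of the assignment, then exhibit an explicit greedy algorithm and verify each (simplified) condition. Let $H$ be the interval graph on $[s]$ with an edge between overlapping intervals; by the Helly property each coverage set $\contain_i$ is a clique of $H$, and every clique of $H$ lies inside $\contain_i$ for its latest-starting member, so the clique number of $H[U]$ equals $\max_i|\contain_i\cap U|$ for every $U\subseteq[s]$. For an assignment $(\assign_1,\dots,\assign_s)$ write $P_t=\{i:\assign_i\le t\}$, $C_t=P_t\setminus P_{t-1}$, and $N_{\le t}=|\{i:\containni\le t\}|$. Then: (a) \LocalDominance{} is equivalent to $|\contain_i\cap P_t|\le t$ for all $i,t$, hence (interval graphs being perfect) to each $P_t$ being a union of $t$ independent sets; (b) \PartitionMonotonicity{} is equivalent to the class sizes being non-increasing, $|C_1|\ge|C_2|\ge\cdots$ (from non-increasing counts one reads the partition off row-by-row, as in a Young diagram); (c) \GlobalDominance{} is implied by ``$|P_t|\ge N_{\le t}$ for all $t$'', since decomposing a decreasing $\bar\pen$ as $\bar\pen(x)=\bar\pen(L)+\sum_{k\ge 1}\delta_k\indicator{x\le k}$ with $\delta_k=\bar\pen(k)-\bar\pen(k+1)\ge 0$ gives $\sum_i\bar\pen(\assign_i)-\sum_i\bar\pen(\containni)=\sum_k\delta_k(|P_k|-N_{\le k})\ge 0$.

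The algorithm (\Cref{alg:interval assignment}) then builds the chain $P_1\subseteq P_2\subseteq\cdots$ from the bottom by iterated maximum-independent-set peeling on $H$, each maximum independent set produced by earliest-deadline-first (scan the remaining intervals by increasing right endpoint and take an interval whenever it is disjoint from the last one taken): $C_1$ is the greedy maximum independent set of $H$; delete it, let $C_2$ be the greedy maximum independent set of the remainder, and so on; set $\assign_i$ to the round in which $i$ is removed. Equivalently, sweep all intervals once by increasing right endpoint and place each on the lowest-indexed ``track'' whose last interval it avoids. There are at most $s$ rounds, each $O(s\log s)$, so this runs in polynomial time.

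Verifying the properties is then short. For \LocalDominance{}: each $C_t$ is pairwise disjoint, so $P_t$ is a union of $t$ independent sets, which is exactly (a). For \PartitionMonotonicity{}: $C_{t+1}$ is an independent set of $H$ after deleting $C_1\cup\cdots\cup C_t$, hence also after deleting $C_1\cup\cdots\cup C_{t-1}$, of which $C_t$ is a \emph{maximum} independent set, so $|C_{t+1}|\le|C_t|$ and (b) applies. For \GlobalDominance{}, by (c) it suffices to show $|P_t|\ge N_{\le t}$: on one hand $\{i:\containni\le t\}$ induces a subgraph of clique number $\le t$ (Helly), hence $t$-colorable, so the maximum size $f(t)$ of a $t$-colorable induced subgraph of $H$ satisfies $f(t)\ge N_{\le t}$; on the other hand iterated earliest-deadline-first peeling on an interval graph is simultaneously optimal at every truncation, $|P_t|=f(t)$ for all $t$, whence $|P_t|=f(t)\ge N_{\le t}$.

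The whole difficulty thus sits in the ``staircase'' identity $|P_t|=f(t)$ needed for \GlobalDominance{} — \LocalDominance{} and \PartitionMonotonicity{} being one-line exchange arguments. I would prove it by showing $f$ is concave (via the integral min-cost-flow formulation of maximum $k$-colorable subgraph on intervals) and that earliest-deadline-first realizes each marginal gain, $\alpha(H\setminus P_{t-1})\ge f(t)-f(t-1)$ — the reverse inequality being immediate since $P_t$ is $t$-colorable — through an exchange that re-routes an optimal $t$-colorable subgraph so that one of its colour classes becomes the greedy maximum independent set of the current graph; the right-endpoint ordering is essential here, since an arbitrary maximum independent set need not extend to an optimal staircase. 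Alternatively, one may cite the staircase optimality of interval-graph colourings as a known fact.
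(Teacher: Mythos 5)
Your reformulations of the three properties are correct and clean: \LocalDominance{} as ``$|\contain_i\cap P_t|\le t$ for all $i,t$'' (hence, by perfection of interval graphs, each $P_t$ being a union of $t$ independent sets), \PartitionMonotonicity{} as the level sets having non-increasing sizes, and the reduction of \GlobalDominance{} to ``$|P_t|\ge N_{\le t}$ for all $t$'' via the layer decomposition of a decreasing $\bar\pen$. Your verification of \LocalDominance{} and \PartitionMonotonicity{} for iterated maximum-independent-set peeling is also sound. The fatal problem is the step you yourself isolate as ``the whole difficulty'': the staircase identity $|P_t|=f(t)$ for earliest-deadline-first peeling is not a citable known fact --- it is false. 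Take the four intervals $[0,5],\,[3,8],\,[5.5,11],\,[9,10]$ (left endpoints increasing, so a legal IAP instance). The EDF maximum independent set is $\{[0,5],[9,10]\}$; the two remaining intervals overlap each other, so your algorithm outputs $\assign=(1,2,3,1)$ and $|P_2|=3$, and your claimed marginal inequality fails since $\alpha(H\setminus P_1)=1$ while $f(2)-f(1)=2$. Indeed the whole graph is $2$-colorable via the partition $\{[0,5],[5.5,11]\}\cup\{[3,8],[9,10]\}$, so $f(2)=4$; the coverage sizes are $(|\contain_1|,|\contain_2|,|\contain_3|,|\contain_4|)=(1,2,2,2)$, so $N_{\le 2}=4>3=|P_2|$ and even the weaker inequality you actually need fails. \GlobalDominance{} itself is violated, not just your sufficient condition: it demands $\bar\pen(1)+\bar\pen(3)\ge 2\bar\pen(2)$, which is exactly the reverse of concavity (take the decreasing concave $\bar\pen$ with $\bar\pen(1)=3$, $\bar\pen(2)=2$, $\bar\pen(3)=0$: the left side of \GlobalDominance{} is $8$, the right side is $9$). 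Nested optimal $t$-colorable subgraphs do exist for interval graphs (e.g., via successive shortest-path augmentations in the min-cost-flow formulation, which also yields concavity of $f$), but the EDF-peeling classes are not them: the greedy's first colour class need not extend to any optimal $2$-colorable subgraph.

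The paper's construction sidesteps this by peeling from the top rather than the bottom: it repeatedly locates the left endpoint with the largest \emph{remaining} coverage and assigns that coverage size (the current largest label) to the leftmost surviving interval covering it; \GlobalDominance{} is then proved by a reverse-order induction that genuinely uses concavity of $\bar\pen$ --- consistent with the fact that the stronger ``all decreasing $\bar\pen$'' version implicit in your condition $|P_t|\ge N_{\le t}$ is unattainable. On the instance above the paper's algorithm outputs $(2,1,2,1)$, which satisfies all three properties. To rescue a bottom-up peeling argument you would at minimum need to replace first-fit/EDF by a rule that extracts colour classes from a nested family of simultaneously optimal $k$-colorable subgraphs, which is a substantially different (and harder to certify) algorithm than the one you propose.
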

}

\xhdr{Overview of the solution to IAP and its applications.}
\revcolor{In the primal-dual analysis of the BIB algorithm, a central difficulty arises because the actual number of units that have been allocated from a batch—but not yet returned—may increase and decrease over time due to overlapping rental periods. This non-monotonicity prevents a direct application of standard dual constructions, which rely on inventory trajectories that only decrease.\footnote{\revcolor{cf.\ \Cref{example:failure of actual consumption level}, where the naive assignment based on real consumption fails.}}
To resolve this, for each ``ready'' batch, we construct a \emph{transformed} inventory trajectory by solving an instance of IAP. In this instance, each allocation from the batch corresponds to an interval spanning from its allocation time to its return time. The IAP maps these intervals to integer labels that serve as surrogate inventory levels. These labels are carefully designed to satisfy three key properties, each of which enables a critical step in the primal-dual analysis:
\hyperref[lem:IAP]{\LocalDominance} ensures that, at the moment any unit is allocated, the transformed inventory levels of all units currently in use are sufficiently large to support the dual feasibility condition. Intuitively, it guarantees that the dual variables can be set high enough to ``cover'' the primal revenue of the allocation.
\hyperref[lem:IAP]{\GlobalDominance} guarantees that the total penalty (i.e., the sum of $\Psi$ evaluated at the transformed inventory levels) is no worse than if we had used the real, non-monotone consumption levels. This allows us to relate the algorithm's actual revenue to the dual objective without loss.
\hyperref[lem:IAP]{\PartitionMonotonicity} reorganizes the allocations into groups where the transformed inventory levels behave like those in the classic setting—strictly decreasing by one after each allocation. This structure is essential for recovering the tight competitive ratio for general penalty function $\pen$ in \citet{GNR-14}.}
The details of how to solve IAP and the proof of \Cref{lem:IAP} are deferred to \Cref{sec:IAP}.

% \xhdr{Overview of the solution to IAP and its applications.} At a high level, in the primal-dual analysis of the BIB algorithm, for each batch $\batchij$ of each product $i$, we define an IAP instance. In this instance, the input interval sequence includes all intervals corresponding to endogenous inventory shocks from units of batch $\batchij$, where each interval starts at the time that a unit is sold and ends at the time a unit of the same product returns to inventory due to restocking. We then use the output assignment of this IAP instance in the construction of a dual certificate, which we use to show our desired competitive ratio using a primal-dual argument. The output assignment of the IAP instance specifies a \emph{transformed} version of the inventory levels that has properties needed for the primal-dual analysis. Specifically, \hyperref[lem:IAP]{\LocalDominance} captures a natural condition about the change of the inventory level due to the allocation, and \hyperref[lem:IAP]{\GlobalDominance} builds a connection between the actual inventory level and the transformed inventory level. Finally, \hyperref[lem:IAP]{\PartitionMonotonicity} suggests that for each subgroup $A_\ell$, the transformed inventory level can be seen as the analog of the inventory level in the classic online assortment problem without endogenous or exogenous inventory shocks, in which inventory is monotone decreasing.  The details of how to solve IAP and the proof of \Cref{lem:IAP} are deferred to \Cref{sec:IAP}. 

We now present our primal-dual analysis for the BIB algorithm using the key \Cref{lem:IAP}.
\begin{proof}{\textsl{Proof of \Cref{thm:concave competitive ratio}.}}
We start our analysis by writing down the dual program of our LP benchmark~\ref{eq:opt}: 
\begin{align*}
\begin{array}{llll}
\min
\limits_{\boldsymbol{\theta},\boldsymbol{\lambda} \geq \mathbf 0} &
\displaystyle\sum\nolimits_{i=1}^n
\left(
\displaystyle\sum\nolimits_{t=\addtime_{i,1}}^\totaltime
(\inventory_i + \batchsizethreshold)\inventorydual_{i,1,t}
+
\displaystyle\sum\nolimits_{j=2}^{\cnt_i}
\displaystyle\sum\nolimits_{t=\addtimeij}^\totaltime
\batchsizethreshold\inventorydualijt
\right)
+
\displaystyle\sum\nolimits_{t =1}^\totaltime
\probduali
\quad\quad
& 
\text{s.t.} 
&\\ 
&
\probduali + 
\displaystyle\sum\nolimits_{(i,j)\in \indexnew}
\displaystyle\sum\nolimits_{\tpre= t - \duration_i + 1}^t
\choice_t(\indexnew,i)\inventorydual_{i,j,\tpre} 
\geq \displaystyle\sum\nolimits_{(i,j)\in \indexnew}
\reward_i\choice_t(\indexnew, i)
& t \in [\totaltime],\ \indexnew \in \indexsett
&~.
\end{array}
\end{align*}
We construct a randomized dual assignment based on the execution path of the BIB algorithm together with the random realization of consumers' choices as follows. First, set $\probduali \gets 0$ and $\inventorydualijt \gets 0$ for all $t\in[T],i\in[n],j\in[\cnt_i]$. For each time period $t\in[T]$, suppose consumer $t$ chooses a unit of product $i$ from batch $\batchij$, update dual variables as follows:
\begin{align*}
    \probduali \gets \reward_i\pen\left(\inventoryratioijt\right)
    \qquad\mbox{and}\qquad
    \inventorydualijt \gets \reward_i \left(\pen\left(\pseudoinventoryratioijt\right)- \pen\left(\pseudoinventoryratioijt-({1}/{|\batchij|})\right)\right)~.
\end{align*}
Here \revcolor{$\{\pseudoinventoryratioijt\}_{i\in[n],j\in[\cnt_i],t\in[T]}$} is a sequence of values constructed based on the interval assignment problem (IAP) (\Cref{lem:IAP}). Specifically, for each product $i\in[n]$ and its ``ready'' batch $\batchij$ with $j\in[\cnt_i]$, let $\timesetij = \{t_1, t_2, \dots, t_s\}$ be the time periods when a unit of product $i$ from $\batchij$ is chosen by consumers. Consider an instance of IAP with $s$ intervals $[t_1, t_1 + \duration_i], \dots, [t_s, t_s + \duration_i]\}$. Let $\{\assign_{k}\}_{k\in [s]}$ be the assignment solved in IAP for this batch. With slight abuse of notation, we also denote the assignment solved in this IAP instance as $\{\assign_{i,j,t}\}_{t\in\timesetij}$ to emphasize its connection to the BIB algorithm's execution. We set $\pseudoinventoryratio_{i,j,t} \triangleq 1 - {(\assign_{i,j,t} - 1)}/{|\batchij|}$ for every time $t\in\timesetij$ and $\pseudoinventoryratioijt \triangleq 0$ for all the remaining time $t\in[T]\backslash\timesetij$.

% The rest of the analysis is done in two steps. First, we show the above randomized dual assignments are feasible in expectation. Second, we show its expected dual objective is a $\competitiveratio(\pen,\batchsizethreshold)$ approximation to the expected revenue of the BIB algorithm. Combining these two steps with the strong duality of LP proves the competitive ratio of the BIB algorithm in \Cref{thm:concave competitive ratio}.

\revcolor{The analysis proceeds in two steps. First, we show that the randomized dual assignments are feasible in expectation. Second, we show that the expected dual objective provides a $\competitiveratio(\pen,\batchsizethreshold)$ approximation to the expected revenue of the BIB algorithm. Combining these two results with the strong duality establishes the competitive ratio of the algorithm, as stated in \Cref{thm:concave competitive ratio}.}

\noindent
[\emph{Step i}] \emph{Checking the feasibility of dual in expectation.}
Since penalty function $\pen(\cdot)$ of the BIB algorithm is non-negative and monotone non-decreasing, the constructed dual assignment is non-negative. We next show that the dual constraints corresponding to each $\curalloc$ is satisfied in expectation. Note that this is enough, because we can later use the expectation of the dual variables as the actual dual certificate. First, we establish the following inequality for every product $i\in[n]$, batch $j\in[\cnt_i]$, and time $t\in[\addtimeij:T]$:
\begin{align}
    \label{eq:dual feasible ex post general}
    \reward_i
    \pen\left(\inventoryratioijt\right) 
    + \displaystyle\sum\nolimits_{\tpre=t-\duration_i + 1}^{t - 1}
    \inventorydual_{i, j, \tpre} 
    \geq \reward_i~.
\end{align}
To prove inequality~\eqref{eq:dual feasible ex post general}, consider on-hand units together with units of product $i$ that have not yet returned in batch $\batchij$ at time $t$. Since the normalized inventory level of batch $\batchij$ at time $t$ is $\inventoryratioijt$, there are $m \triangleq (1 - \inventoryratioijt) |\batchij|$ units of product $i$ in this batch that have not yet returned at time $t$. Suppose that these units were sold at times $\tpre_1\leq\tpre_2\leq \dots\leq\tpre_{m}$. By construction, all these time periods are between $t-\duration_i + 1$ and $t - 1$ and belong to $\timesetij$. We now show inequality~\eqref{eq:dual feasible ex post general} as follows:
\begin{align*}
    &\reward_i
    \pen\left(\inventoryratioijt\right) 
    + \sum\nolimits_{\tpre=t-\duration_i + 1}^{t - 1}
    \inventorydual_{i, j, \tpre} \overset{(a)}{=} 
    \reward_i
    \pen\left(\inventoryratioijt\right) 
    + \sum\nolimits_{k = 1}^m
    \inventorydual_{i, j, \tpre_k} 
    \\
    {}\overset{(b)}{=} {}&
    \reward_i
    \pen\left(\inventoryratioijt\right) 
    + \sum\nolimits_{k = 1}^m
    \reward_i\left(
    \pen\left(\pseudoinventoryratio_{i,j,\tpre_k}\right)
    -
    \pen\left(\pseudoinventoryratio_{i,j,\tpre_k}-({1}/{|\batchij|})\right)
    \right) 
    \\
    {}\overset{(c)}{=} {}&
    \reward_i
    \pen\left(\inventoryratioijt\right) 
    + \sum\nolimits_{k = 1}^m
    \reward_i\left(
    \pen\left(
    1-((\assign_{i,j,\tpre_k} - 1)/|\batchij|)
    % \frac{|\batchij| - \assign_{i,j,\tpre_k} + 1}{|\batchij|}
    \right)
    -
    \pen\left(
    1 - (\assign_{i,j,\tpre_k}/|\batchij|)
    % \frac{|\batchij| - \assign_{i,j,\tpre_k}}{|\batchij|}
    \right)
    \right) 
    \\
    {}\overset{(d)}{\geq} {}&
    \reward_i
    \pen\left(
    1 - (m/|\batchij|)
    % \frac{|\batchij| - m}{|\batchij|}
    \right) 
    + \sum\nolimits_{k = 1}^m
    \reward_i\left(
    \pen\left(
    1 - ((k-1)/|\batchij|)
    % \frac{|\batchij| - k + 1}{|\batchij|}
    \right)
    -
    \pen\left(
    1 - (k/|\batchij|)
    % \frac{|\batchij| - k}{|\batchij|}
    \right)
    \right) 
    \overset{(e)}{=} 
    \reward_i\pen(1) 
    \overset{(f)}{=} \reward_i~.
\end{align*}
Here, equality (a) holds due to the definition of $m$, $\{\tpre_k\}_{k\in[m]}$ and the dual assignment construction; Equality~(b) holds due to the dual construction; Equality~(c) holds due to the construction of $\{\pseudoinventoryratio_{i,j,\tpre_k}\}$. For inequality~(d), we use the definition of $m$, and the rest of inequality holds due to \hyperref[lem:IAP]{\LocalDominance} of IAP (\Cref{lem:IAP}) at time $\tpre_m$, and the concavity of the penalty function $\pen$.\footnote{\label{footnote:LD usage explanation}\revcolor{Let $m = (1 - \inventoryratioijt)|\batchij|$ be the number of units from batch $\batchij$ in use at time $t$, allocated at times $\tpre_1 < \dots < \tpre_m$. Consider the last such time $\tpre_m$; the coverage set $\contain_{\tpre_m}$ includes all $m$ intervals, so $|\contain_{\tpre_m}| = m$. By \hyperref[lem:IAP]{\LocalDominance}, the IAP assignment satisfy $\assign_{(k)} \geq k$ for all $k \in [m]$, where $\assign_{(k)}$ is the $k$-th smallest label.
Since $\pen$ is concave and non-decreasing, its marginal difference is non-increasing. Thus, replacing the IAP assignment $\assign_{i,j,\tpre_k}$ with the smaller values $k$ can only increase each marginal term. Consequently, the sum on the LHS—using the IAP assignment—is at least as large as the RHS—using the idealized sequence $k = 1,\dots,m$. This justifies inequality~(d).}} Finally, equality~(e) holds due to the telescopic sum and equality~(f) holds since $\pen(1) = 1$.

Fixing an arbitrary time period $t$ and a feasible batch-specified assortment $\indexnew \in \indexsett$, we establish the following inequality for every time period $t$, which takes the expectation over the randomized choice of the consumer $t$', while conditioning on the realized sample path $\mathcal{F}_{t-1}$ up to the end of time period $t$ and the realization of the batch-specified assortment $\indexnew_t$ displayed to consumer $t$:
\begin{align}
\label{eq:lambda conditional lowerbound general} 
\begin{split}
\expect{
		\probduali  \condition\mathcal{F}_{t-1},\indexnew_t
		}
		\overset{(a)}{=}
		\displaystyle\sum\nolimits_{(i,j)\in 
		\indexnew_t}
		\reward_i\choice_{t}(\indexnew_t, i)
		\pen\left(\inventoryratioijt\right)
		\overset{(b)}{\geq} 
		\displaystyle\sum\nolimits_{(i,j)\in 
		\indexnew}
		\reward_i\choice_{t}(\indexnew , i)
		\pen\left(
		\inventoryratioijt
		\right)~.
		\end{split}
\end{align}
Here, equality~(a) holds due to the dual assignment construction and the expectation over consumer $t$'s choice; and inequality~(b) holds due to the greedy decision for selecting $\indexnew_t$ in BIB.

Combining inequalities~\eqref{eq:dual feasible ex post general} and \eqref{eq:lambda conditional lowerbound general}, we are able to verify the feasibility of the dual constraint corresponding to primal variable $\curalloc$ as follows:
\begin{align*}
    &\expect{
	\probduali + 
		\displaystyle\sum\nolimits_{(i,j)\in \indexnew}
		\displaystyle\sum\nolimits_{\tpre= t - \duration_i + 1}^t
		\choice_t(\indexnew,i)\theta_{i,j,\tpre}
		}
  % \\
  \overset{(a)}{\geq}{}
  % &
    \expect{
		\displaystyle\sum\nolimits_{(i,j)\in 
		\indexnew}
		\reward_i\choice_{t}(\indexnew , i)
		\pen\left(
		\inventoryratioijt
		\right) + 
		\displaystyle\sum\nolimits_{(i,j)\in \indexnew}
		\displaystyle\sum\nolimits_{\tpre= t - \duration_i + 1}^t
		\choice_t(\indexnew,i)\theta_{i,j,\tpre}
		}
  \\
  \overset{(b)}{\geq}{}&
  \displaystyle\sum\nolimits_{(i,j)\in 
		\indexnew}
		\choice_{t}(\indexnew , i)
    \expect{
		\reward_i
		\pen\left(
		\inventoryratioijt
		\right) + 
		\displaystyle\sum\nolimits_{\tpre= t - \duration_i + 1}^t
		\theta_{i,j,\tpre}
		}
  \overset{(c)}{\geq}
  \displaystyle\sum\nolimits_{(i,j)\in 
		\indexnew}
  \reward_i
		\choice_{t}(\indexnew , i)~,
\end{align*}
where inequality~(a) holds due to inequality~\eqref{eq:lambda conditional lowerbound general}; inequality~(b) holds due to the linearity of expectation; and inequality~(c) holds due to inequality~\eqref{eq:dual feasible ex post general}.

\smallskip
\noindent
[\emph{Step ii}] \emph{Comparing objective values of primal and dual.}
We compare the contribution of batch $\batchij$ in the BIB algorithm and the contribution in the dual objective for each ``ready'' batch $\batchij$ separately. Recall $\timesetij$ is the set of time periods when a unit of product $i$ from batch $\batchij$ has been selected. The revenue contribution in the BIB algorithm (aka., primal) corresponding to batch $\batchij$ can be expressed as
% \begin{align*}
$\text{Primal}_{i,j}  = \reward_i|\timesetij|$.
% \end{align*}
Meanwhile, the contribution of the dual objective based on the constructed dual assignment can be upper-bounded as
\begin{align}
\begin{split}
\label{eq:dual obj upper bound}
\text{Dual}_{i,j} 
&{}\overset{(a)}{=} 
\reward_i
\displaystyle\sum\nolimits_{t \in \timesetij} 
\left(
\pen\left(\inventoryratioijt\right) 
+
\left(\inventory_i \cdot \indicator{j=1} + \batchsizethreshold\right)\left(
\pen\left(\pseudoinventoryratioijt\right) 
-
\pen\left(\pseudoinventoryratioijt - 
(1/|\batchij|)
% \frac{1}{|\batchij|}
\right) \right)
\right)
\\
&{}\overset{(b)}{\leq} 
\reward_i
\displaystyle\sum\nolimits_{t \in \timesetij} 
\left(
\pen\left(\pseudoinventoryratioijt\right) 
+
\left(\inventory_i \cdot \indicator{j=1} + \batchsizethreshold\right)\left(
\pen\left(\pseudoinventoryratioijt\right) 
-
\pen\left(\pseudoinventoryratioijt - 
(1/|\batchij|)
% \frac{1}{|\batchij|}
\right) \right)
\right)~,
\end{split}
\end{align}
where equality~(a) holds due to the constructed dual assignment; and inequality~(b) holds due to the construction of $\{\pseudoinventoryratioijt\}$ and \hyperref[lem:IAP]{\GlobalDominance} of IAP (\Cref{lem:IAP}).\footnote{\label{footnote:GD usage explanation}\revcolor{Recall that $\inventoryratioijt = 1 - m_t / |\batchij|$, where $m_t$ is the number of units from batch $\batchij$ in use at time $t$, 
and $\pseudoinventoryratioijt = 1 - (\assign_{i,j,t} - 1)/|\batchij|$, where $\assign_{i,j,t}$ is the IAP assignment. Define $\bar\pen(x) \triangleq \pen(1 - (x - 1)/|\batchij|)$, which is decreasing and concave because $\pen$ is concave and non-decreasing. 
Since $m_t + 1 = |\contain_t|$ (the coverage size at the allocation time $t$), we have $\pen(\inventoryratioijt) = \bar{\pen}(|\contain_t|)$ and $\pen(\pseudoinventoryratioijt) = \bar{\pen}(\assign_{i,j,t})$. 
\hyperref[lem:IAP]{\GlobalDominance} then implies $\sum_{t \in \timesetij} \bar{\pen}(\assign_{i,j,t}) \geq \sum_{t \in \timesetij} \bar{\Psi}(|\mathcal{N}_t|)$, 
i.e., $\sum_{t \in \timesetij} \Psi(\pseudoinventoryratioijt) \geq \sum_{t \in \timesetij} \pen(\inventoryratioijt)$, which justifies inequality~(b).
}} 
% Specifically, let $\bar\pen(x) = \pen(1-(x-1)/|\batchij|)$ be an auxiliary decreasing and concave function, then we have
% \begin{align*}
%     \displaystyle\sum\nolimits_{t \in \timesetij} \pen\left(\inventoryratioijt\right) 
%     & {}\overset{(a)}{=} 
%     \displaystyle\sum\nolimits_{t \in \timesetij} \bar\pen\left((1-\inventoryratioijt)|\batchij| + 1\right) \overset{(b)}{\leq} 
%     \displaystyle\sum\nolimits_{t \in \timesetij} \bar\pen\left(\assign_{i,j,t}\right) 
%     \\
%     & {}\overset{(c)}{=} 
%     \displaystyle\sum\nolimits_{t \in \timesetij} \pen\left(1-{(\assign_{i,j,t} - 1)}/{|\batchij|}\right) \overset{(d)}{=} 
%     \displaystyle\sum\nolimits_{t \in \timesetij} \pen\left(\pseudoinventoryratioijt\right) 
% \end{align*}
% where equalities~(a) (c) hold due to the construction of $\bar\pen$; inequality~(b) holds due to \hyperref[lem:IAP]{\GlobalDominance} of IAP and the definition of normalized inventory level $\inventoryratioijt$; and equality~(d) holds due to the construction of $\pseudoinventoryratioijt$.

Now, we partition time periods in $\timesetij$ into subgroups $A_{i_,j, 1}, A_{i,j, 2}, \dots, A_{i,j, \kappa}$ induced by \hyperref[lem:IAP]{\PartitionMonotonicity} of IAP (\Cref{lem:IAP}). Our remaining analysis compares the contribution of batch $\batchij$ from time periods $A_{i,j,\ell}$ in the BIB algorithm (denoted by $\text{Primal}_{i,j,\ell}$) and the corresponding dual assignments' contribution in the dual objective (denoted by $\text{Dual}_{i,j,\ell}$) for each $\ell\in[\kappa]$ separately. 
Define auxiliary notation $\tilde{\inventory}_{i,j} \triangleq \inventory_i \cdot \indicator{j=1} + \batchsizethreshold$.
Note that
% \begin{align*}
$\text{Primal}_{i,j,\ell} = \reward_i|A_{i,j,\ell}|$
% \end{align*}
and
\begin{align*}
\text{Dual}_{i,j,\ell} 
&{}\overset{(a)}{\leq } \reward_i
\displaystyle\sum\nolimits_{t \in A_{i,j,\ell}} 
\left(
\pen\left(\pseudoinventoryratioijt\right) 
+
\tilde{\inventory}_{i,j} 
% \left(\inventory_i \cdot \indicator{j=1} + \batchsizethreshold\right)
\left(
\pen\left(\pseudoinventoryratioijt\right) 
-
\pen\left(\pseudoinventoryratioijt - 
(1/|\batchij|)
% \frac{1}{|\batchij|}
\right) \right)
\right)
\\
&{}\overset{(b)}{\leq } \reward_i
\displaystyle\sum\nolimits_{t \in A_{i,j,\ell}} 
\left(
\pen\left(1-
((\assign_{i,j,t}-1)/|\batchij|)
% \frac{\assign_{i,j,t} - 1}{|\batchij|}
\right) 
+
\tilde{\inventory}_{i,j} 
% \left(\inventory_i \cdot \indicator{j=1} + \batchsizethreshold\right)
\left(
\pen\left(1-
% \frac{\assign_{i,j,t} - 1}{|\batchij|}
((\assign_{i,j,t}-1)/|\batchij|)
\right) 
-
\pen\left(1-
({\assign_{i,j,t}}/{|\batchij|})
% \frac{\assign_{i,j,t}}{|\batchij|}
\right) \right)
\right)
\\
&{}\overset{(c)}{=} \reward_i
\displaystyle\sum\nolimits_{k=1}^{|A_{i,j,\ell}|} 
\left(
\pen\left(1-
((k-1)/|\batchij|)
% \frac{k- 1}{|\batchij|}
\right) 
+
% \left(\inventory_i \cdot \indicator{j=1} + \batchsizethreshold\right)
\tilde{\inventory}_{i,j} 
\left(
\pen\left(1-
((k-1)/|\batchij|)
% \frac{k- 1}{|\batchij|}
\right) 
-
\pen\left(1-
(k/|\batchij|)
% \frac{k}{|\batchij|}
\right) \right)
\right)
\\
&{}\overset{}{=} \reward_i
\displaystyle\sum\nolimits_{k=|\batchij| - |A_{i,j,\ell}|+1}^{|\batchij|} 
\left(
\pen\left(
k/|\batchij|
% \frac{k}{|\batchij|}
\right) 
+
\tilde{\inventory}_{i,j} 
% \left(\inventory_i \cdot \indicator{j=1} + \batchsizethreshold\right)
\left(
\pen\left(
k/|\batchij|
% \frac{k}{|\batchij|}
\right) 
-
\pen\left(
(k-1)/|\batchij|
% \frac{k-1}{|\batchij|}
\right) \right)
\right)
\\
&{}\overset{(d)}{=} \reward_i
\displaystyle\sum\nolimits_{k=|\batchij| - |A_{i,j,\ell}|+1}^{|\batchij|} 
\pen\left(
k/|\batchij|
% \frac{k}{|\batchij|}
\right) 
+
\reward_i
\tilde{\inventory}_{i,j} 
% \left(\inventory_i \cdot \indicator{j=1} + \batchsizethreshold\right)
\left(
\pen\left(1\right) 
-
\pen\left(
1-(|A_{i,j,\ell}|/|\batchij|)
% \frac{|\batchij| - |A_{i,j,\ell}|}{|\batchij|}
\right)
\right)
\\
&{}\overset{(e)}{\leq} \reward_i + 
\reward_i
\displaystyle\int\nolimits_{
1 - ((|A_{i,j,\ell}|-1)/|\batchij|)
% \frac{|\batchij| - |A_{i,j,\ell}|+1}{|\batchij|}
}^{1} 
\pen\left(y\right) \,\dd y
+
\reward_i
\tilde{\inventory}_{i,j} 
% \left(\inventory_i \cdot \indicator{j=1} + \batchsizethreshold\right)
\left(
1
-
\pen\left(
1 - (|A_{i,j,\ell}|/|\batchij|)
% \frac{|\batchij| - |A_{i,j,\ell}|}{|\batchij|}
\right)
\right)
\end{align*}
where inequality~(a) holds due to inequality~\eqref{eq:dual obj upper bound}; inequality~(b) holds due to the construction of $\pseudoinventoryratioijt$; equality~(c) holds due to \hyperref[lem:IAP]{\PartitionMonotonicity} of IAP (\Cref{lem:IAP}); equality~(d) holds due to the telescopic sum; and inequality~(e) holds since  $\pen$ is weakly increasing and $\pen(1) = 1$.

Note that by setting $x \triangleq 1-
(|A_{i,j,\ell}|/{|\batchij|})$, the ratio between $\text{Primal}_{i,j,\ell}$ and $\text{Dual}_{i,j,\ell}$ has the same closed form as $\competitiveratio_1(\pen,\batchsizethreshold)$ for ``ready'' batch $\batch_{i,1}$,
and $\competitiveratio_2(\pen,\batchsizethreshold)$ for ``ready'' batch $\batchij$ with $j \in [2:\cnt_i]$. Thus, we conclude that for every sample path, the revenue in the BIB algorithm is a $\competitiveratio(\pen,\batchsizethreshold)$ approximation of the dual objective value of the constructed dual assignment.
\hfill\halmos
\end{proof}

\subsection{Interval Assignment Problem}
\label{sec:IAP}

\revcolor{We now prove \Cref{lem:IAP} by explicitly constructing an assignment that satisfies all three required properties. We first describe its core idea and then verify the properties using its structural invariants. A reference implementation is provided in \Cref{alg:interval assignment}, and illustrative executions (\Cref{fig:failure of actual consumption level,fig:failure of greedy}) of the algorithm on representative IAP instances (\Cref{example:failure of actual consumption level,example:failure of greedy}) can be found in \Cref{apx:IAP illustration}. 

\xhdr{Algorithmic idea.}
The assignment is constructed by processing intervals in a specific order dictated by their \emph{current coverage size}: at each step, among all unassigned intervals (denoted by $\remaining\subseteq [s]$), we identify the one (denoted by $i$) that is covered by the \emph{largest} number of still-unassigned intervals (i.e., $\contain_i\cap\remaining$), and assign the \emph{leftmost} interval $j$ in that coverage set a label equal to the current coverage size $|\contain_i\cap\remaining|$). This greedy rule ensures that intervals appearing earlier (with smaller indices) are never ``starved'' of large labels, which is essential for \hyperref[lem:IAP]{\LocalDominance}. Moreover, because the algorithm always assigns labels based on maximal coverage sets, the resulting assignment incurs no more penalty (under any decreasing concave function $\bar\pen(\cdot)$) than the real consumption levels, i.e., satisfies \hyperref[lem:IAP]{\GlobalDominance}.

To enable \hyperref[lem:IAP]{\PartitionMonotonicity}, the algorithm also builds a predecessor mapping during execution. Specifically: (i) mapping $\pmother_j$ records which interval's coverage set triggered the assignment of $\assign_j$. This is used to determine which intervals overlap in a way that justifies linking them into a partition group; (ii) mapping $\pfather_k$ records a predecessor link: when interval $j$ is assigned label $m$, it becomes the predecessor of any previously assigned interval $k$ with $\assign_k= m+1$ that overlaps appropriately with $j$'s coverage context (i.e., $j \in \contain_{\pmother_k}$). These links induce a forest of chains, each forming a valid partition group.
These mappings are not used during assignment decisions—they exist solely to support the analysis.}

\begin{algorithm}
 	\caption{\textsc{Interval Assignment Solver
}}\label{alg:interval assignment}
 	\KwIn{sequence of $s$ intervals $\{[a_1, b_1], \dots, [a_s, b_s]\}$}
        \KwOut{assignment $(\assign_1, \dots, \assign_s)$}
 	
initialize assignment $\assign_i \gets 0$, mapping $\pfather_i \gets -1$, and mapping $\pmother_i \gets -1$ for all $i \in [s]$.

initialize remaining interval set $\remaining \gets \{1, 2, \dots, s\}$.
\tcc{define $\remainingcontain_i = \contain_i \cap \remaining$}

\While{$\remaining$ is not empty}{
 	
find interval $i$ with the largest index in remaining interval set $\remaining$ where $\remainingcontainni$ is maximized, i.e., $i \gets\, \argmax_{i\in \remaining} \remainingcontainni$. 

find interval $j$ with the smallest index in remaining interval set $\remaining$ which covers position $a_i$, i.e., $j \gets \min \remainingcontain_i$.

set assignment $\assign_j \gets \remainingcontainni$ and mapping $\pmother_j \gets i$.

\For{each interval $k \in [s] / \remaining$ such that 
$j \in \contain_{\pmother_k}$ and $\assign_k = \assign_j + 1$}{
set mapping $\pfather_k \gets j$. 
}

remove interval $j$ from remaining interval set $\remaining$ %, and update $C(t)$ 
%for all remaining interval $t$,\par
%\hskip\algorithmicindent  
i.e., update
$\remaining \gets \remaining / \{j\}$.
%$C(t) = \{\tpre \in R \given 
%\tpre \leq t, f(\tpre) \geq t\}$.
%\end{varwidth}

}
\textbf{return} $\{\assign_1, \dots, \assign_s\}$
\end{algorithm}

\revcolor{We now prove \Cref{lem:IAP} by formally verifying the three properties using our constructed procedure.}

\begin{proof}{\textsl{Proof of \Cref{lem:IAP}}.}
\revcolor{The constructed procedure (\Cref{alg:interval assignment}) terminates after $s$ steps, producing a complete assignment $(\assign_1, \dots, \assign_s)$.} We next verify that this assignment satisfies all three desired properties in the statement of the lemma.

\noindent\emph{(\rom{1}) \hyperref[lem:IAP]{\LocalDominance}:}
For each interval $i \in [s]$, consider $j_1, j_2, \dots \in \contain_i$ in the order in which they are removed from the algorithm: for the $k$-th removed interval $j_k \in \contain_i$, at that time the cardinality $\remainingcontainni$ is $\containni - k + 1$, which guarantees that $\assign_{j_k}$ is at least $\containni - k + 1$ by construction. Thus, \hyperref[lem:IAP]{\LocalDominance} is satisfied.

\noindent\emph{(\rom{2}) \hyperref[lem:IAP]{\GlobalDominance}:}
We verify this property by the following induction argument. Consider the reverse order of intervals removed from the remaining interval set $\remaining$ in the algorithm, i.e., we set $\remaining = \emptyset$ as the base case and then add intervals back to $\remaining$ in the reverse order in which they have been removed in the algorithm. Our induction hypothesis is that restricting to intervals in $\remaining$, \hyperref[lem:IAP]{\GlobalDominance} holds, i.e., for any decreasing and concave function $\bar\pen$, 
\begin{align}
\label{eq:global dominance}
\sum\nolimits_{i\in \remaining} \bar\pen(\assign_i) \geq \sum\nolimits_{i \in \remaining} \bar\pen(\remainingcontainni).
\end{align}

\noindent\textbullet\ \emph{Base Case ($\remaining = \emptyset$):} In this case, the induction hypothesis holds trivially.

\noindent\textbullet\ \emph{Inductive Step:}  Suppose the induction hypothesis holds before we add interval $i$ with assignment~$\assign_{i}$ back to $\remaining$. Let $m \triangleq \assign_{i} - 1$. To make it clear, let $\remaining$ to be the remaining interval set before we add interval $i$, and $\remaining'$ to be the remaining interval set after we add interval $i$, i.e., $\remaining' = \remaining \cup\{i\}$. Next we analyze the increment of both sides in inequality~\eqref{eq:global dominance} after we add interval $i$ into $\remaining$. The left-hand side of inequality~\eqref{eq:global dominance} increases by $\bar\pen(m + 1)$. For the right-hand side, since $\assign_i = m + 1$, there exist $m$ intervals $j_1, j_2, \dots j_m \in \remaining$ which cover position $a_{\pmother_i}$, and $i < j_k$ (otherwise, $j_k$ should be removed earlier than $i$ in the algorithm) for all $k\in[m]$. This implies that interval $i$ covers position $a_{j_k}$ for all $k \in[m]$. Without loss of generality, we assume $j_1 < j_2 < \dots < j_m$. We have $|\remainingcontain_{j_k}|\geq k$ for all $k\in[m]$. Hence, after adding interval $i$ into $\remaining$, the right-hand side of inequality~\eqref{eq:global dominance} increases by 
\begin{align*} 
&\bar\pen(|\contain_i^{\remaining'}|) + 
\sum\nolimits_{j\in \remaining}
\left(
\bar\pen(|\contain_{j}^{\remaining'}|) - 
\bar\pen(|\contain_{j}^{\remaining}|)
\right) 
\overset{(a)}{\leq} \bar\pen(|\contain_i^{\remaining'}|) + 
\sum\nolimits_{k\in[m]}
\left(
\bar\pen(|\contain_{j_k}^{\remaining'}|) - 
\bar\pen(|\remainingcontain_{j_k}|)
\right) \\
\overset{(b)}{=}{}&
\bar\pen(|\contain_i^{\remaining'}|) + 
\sum\nolimits_{k\in[m]}
\left(\bar\pen(|\remainingcontain_{j_k}| + 1) - 
\bar\pen(|\remainingcontain_{j_k}|)\right) 
\overset{(c)}{\leq} 
\bar\pen(1) + 
\sum\nolimits_{k \in [m]}
(
\bar\pen(k + 1) - \bar\pen(k)
)
= \bar\pen(m + 1)
\end{align*}
where inequality~(a) holds since $|\contain_{j_k}^{\remaining'}| \geq |\contain_{j_k}^{\remaining}|$ and $\bar\pen$ is decreasing; equality~(b) holds since $|\contain_{j_k}^{\remaining'}|  = |\remainingcontain_{j_k}| + 1$ by construction; and inequality~(c) holds since $\bar\pen$ is decreasing, concave and $|\remainingcontain_{j_k}|\geq k$ for all $k\in[m]$. Applying the induction hypothesis (before we add interval $i$), the inductive statement (after we add interval $i$) holds, which finishes the induction as desired.

\noindent\emph{(\rom{3}) \hyperref[lem:IAP]{\PartitionMonotonicity}:}
\revcolor{We first show that the predecessor mapping $\{\pfather_i\}_{i \in [s]}$ satisfies two key properties:  
(a) each interval has at most one successor $j$ (i.e., $\pfather_j = i$), and  
(b) every interval $i$ with $\assign_i > 1$ has a predecessor (i.e., $\pfather_i \in [s]$).  
Consequently, the $\pfather$-links decompose the set of intervals into disjoint chains along which the assigned labels decrease by exactly one. This yields a valid partition $\{A_1, \dots, A_\kappa\}$ such that $\{\assign_i : i \in A_\ell\} = \{1, 2, \dots, |A_\ell|\}$ for each $\ell$, thereby guaranteeing \hyperref[lem:IAP]{\PartitionMonotonicity}. 
Next, we establish claims~(a) and~(b) separately.}

\xhdr{Proof of claim (a).} We show claim (a) by contradiction. Suppose there exist intervals $i$, $j_1$ and $j_2$ such that $\pfather_{j_1} = \pfather_{j_2} = i$. By the construction of \Cref{alg:interval assignment}, $\assign_{j_1} = \assign_{j_2} = \assign_i + 1$. Now, consider case by case.

\noindent\textbullet\ \emph{Case 1 ($i < j_1$ or $i < j_2$):} Without loss of generality, we assume $i < j_1$,
and the case $i < j_2$ is symmetric.
By the construction of \Cref{alg:interval assignment},
$\pfather_{j_1} = i$ implies 
$i \in \contain_{\pmother_{j_1}}$. 
This leads to a contradiction because if $i < j_1$, the interval $i$ should be removed earlier than the interval $j_1$ in \Cref{alg:interval assignment}
which implies $\assign_i \geq \assign_{j_1}$.

\noindent\textbullet\ 
\emph{Case2 ($j_1, j_2 < i$):}

\hspace{5mm}\noindent\textbullet\ 
\emph{Case 2a ($\pmother_{j_1} < \pmother_{j_2}$ or $\pmother_{j_2} < \pmother_{j_1}$):}
Without loss of generality, we assume $\pmother_{j_1} < \pmother_{j_2}$ ($\pmother_{j_2} < \pmother_{j_1}$ is similar):

\hspace{10mm}\noindent\textbullet\ \
\emph{Case 2a1 (interval $j_1$ is removed before interval $j_2$):}
In this case, consider the moment 
when \Cref{alg:interval assignment} removes $j_1$ due to interval $\pmother_{j_1}$.
By the construction of \Cref{alg:interval assignment}, since $j_1$ is removed before $j_2$,
it implies
$|\remainingcontain_{\pmother_{j_1}}| > |\remainingcontain_{\pmother_{j_2}}|$.
Thus,
$\assign_{j_1}$ is strictly larger than $\assign_{j_2}$,
a contradiction.

\hspace{10mm}\noindent\textbullet\ \
\emph{Case 2a2 (interval $j_2$ is removed before interval $j_1$):}
In this case, consider the moment 
when \Cref{alg:interval assignment} removes $j_2$ due to interval $\pmother_{j_2}$.
Notice that interval $j_2$ also covers position $a_{\pmother_{j_1}}$,
i.e., $j_2 \in \contain_{\pmother_{j_1}}$.
By the construction of \Cref{alg:interval assignment}, since $j_2$ is removed before $j_1$,
it implies
$|\remainingcontain_{\pmother_{j_1}}|  \leq |\remainingcontain_{\pmother_{j_2}}|$.
Thus, once \Cref{alg:interval assignment} removes $j_2$, 
$|\contain_{\pmother_{j_1}}^{\remaining/\{j_2\}}|$ (which is an upper bound
of $\assign_{j_1}$) 
becomes strictly smaller than 
$|\remainingcontain_{\pmother_{j_2}}|$ (which is equal to $\assign_{j_2}$), 
a contradiction.

\hspace{5mm}\noindent\textbullet\ 
\emph{Case 2b ($\pmother_{j_1} = \pmother_{j_2}$):}
Without loss of generality, we assume interval $j_1$ is removed before interval $j_2$,
and the case where interval $j_2$ is removed before interval $j_1$ is 
symmetric.
Let $\pmother^* = \pmother_{j_1} = \pmother_{j_2}$. Consider the moment
when \Cref{alg:interval assignment} removes $j_1$ due to $\pmother^*$.
By the construction of \Cref{alg:interval assignment},
$\assign_{j_1} = 
|\remainingcontain_{\pmother^*}|$
and $\assign_{j_2} \leq |\contain_{\pmother^*}^{\remaining/\{j_1\}}|
= |\remainingcontain_{\pmother^*}| - 1$.
Thus, $\assign_{j_1}>\assign_{j_2}$,
which is a contradiction.

\xhdr{Proof of claim (b).}  
For any interval $i$ with $\assign_i > 1$, consider the next removed interval $j$ among $\contain_{\pmother_i}$ after \Cref{alg:interval assignment} removes interval $i$. It is sufficient to show $\assign_j = \assign_i - 1$, which implies that \Cref{alg:interval assignment} sets $\pfather_i = j$ when it removes interval $j$. Since $j$ is the next interval removed among $\contain_{\pmother_i}$ after interval $i$, we have $\assign_i - 1\leq \assign_j \leq \assign_i$. To show $\assign_j < \assign_i$, we use the proof by contradiction. Suppose $\assign_j = \assign_i$. If $i\in\contain_{\pmother_j}$, then at the moment when \Cref{alg:interval assignment} removes $i$, we have $\assign_i = |\contain_{\pmother_i}^{\remaining}|\geq|\contain_{\pmother_j}^{\remaining}| \geq \assign_j + 1 = \assign_i + 1$, which is a contradiction. If $i \not\in \contain_{\pmother_j}$, since $i < j$, we have mapping $\pmother_i < \pmother_j$ and at the moment when \Cref{alg:interval assignment} removes $i$, $|\contain_{\pmother_i}^\remaining| = \assign_i =\assign_j  \leq  \contain_{\pmother_j}^\remaining|$, a contradiction.
\hfill\halmos
\end{proof}

% \section{Analysis of General BIB: Interval Assignment Problem}
% \input{Paper/IBB-proof}
\section{Conclusion}
\label{sec:conclusion}
\revcolor{\xhdr{Summary \& takeaways.}} We studied real-time assortment planning under both endogenous and exogenous inventory shocks. We introduced an extension of the inventory balancing algorithm for this problem, called Batched Inventory Balancing (BIB). This new algorithm keeps track of the exogenous inventory shock of each product separately by smartly batching units of this product into buckets in an adaptive fashion. It then penalizes the price of different units of the product using a concave penalty function with the (normalized) inventory of the bucket of the unit as input. We analyzed the BIB algorithm with general concave penalty functions. The analysis in the prior work for online assortment (without inventory shocks) breaks as inventory levels are not monotone and the function does not satisfy a certain differential equation. To handle this challenge, we considered an LP benchmark that depends on the execution of the BIB algorithm, designed a batch-based randomized dual-fitting primal-dual analysis, and most interestingly proposed a reduction from the correct randomized dual construction in our problem to another algorithmic yet combinatorial problem. This combinatorial problem, called Interval Assignment Problem (IAP), searches for a certain re-assignment of the bucket inventory levels, so that the new inventory levels are always monotone and can be used in a primal-dual analysis as before. We concluded by showing how to solve the IAP in polynomial time. We then show how to put all the pieces together and obtain a universal bound comparable to \cite{GNR-14}, but with both endogenous and exogenous inventory shock.

\revcolor{
\xhdr{Open problems \& directions.} There are several promising technical directions for future research stemming from our work, including: extending the model to stochastic or heterogeneous return durations; incorporating inventory holding or restocking costs; handling negative inventory shocks; studying joint inventory–assortment control; and exploring application-specific operational replenishment policies. A more detailed discussion of these open problems is provided in \Cref{apx:open-problems}.
}

% There are several directions for future research. One direction is to consider different restocking policies or relax the assumption that $d_i$ is fixed over time (to, say, i.i.d.). Another direction is considering other models of inventory control with different objectives, for example, when we take into account holding costs of inventory when planning for assortments. Another direction is to consider \revcolor{the joint inventory and online assortment problem under shocks}. 
% joint inventory planning and assortment under shocks.

\newcommand{\newblock}{}
\setlength{\bibsep}{0.0pt}
\bibliographystyle{plainnat}
\OneAndAHalfSpacedXI
{\footnotesize

\newpage
\bibliography{refs}}

\begin{thebibliography}{41}
\providecommand{\natexlab}[1]{#1}
\providecommand{\url}[1]{\texttt{#1}}
\expandafter\ifx\csname urlstyle\endcsname\relax
  \providecommand{\doi}[1]{doi: #1}\else
  \providecommand{\doi}{doi: \begingroup \urlstyle{rm}\Url}\fi

\bibitem[Aggarwal et~al.(2011)Aggarwal, Goel, Karande, and Mehta]{AGKM-11}
Gagan Aggarwal, Gagan Goel, Chinmay Karande, and Aranyak Mehta.
\newblock Online vertex-weighted bipartite matching and single-bid budgeted
  allocations.
\newblock In \emph{Proceedings of the twenty-second annual ACM-SIAM symposium
  on Discrete Algorithms}, pages 1253--1264. SIAM, 2011.

\bibitem[Baek and Ma(2022)]{BM-22}
Jackie Baek and Will Ma.
\newblock Bifurcating constraints to improve approximation ratios for network
  revenue management with reusable resources.
\newblock \emph{Operations Research}, 70\penalty0 (4):\penalty0 2226--2236,
  2022.

\bibitem[Bernstein et~al.(2015)Bernstein, K{\"o}k, and Xie]{BKX-15}
Fernando Bernstein, A~G{\"u}rhan K{\"o}k, and Lei Xie.
\newblock Dynamic assortment customization with limited inventories.
\newblock \emph{Manufacturing \& Service Operations Management}, 17\penalty0
  (4):\penalty0 538--553, 2015.

\bibitem[Birnbaum and Mathieu(2008)]{BM-08}
Benjamin~E Birnbaum and Claire Mathieu.
\newblock On-line bipartite matching made simple.
\newblock \emph{Sigact News}, 39\penalty0 (1):\penalty0 80--87, 2008.

\bibitem[Bruzzone et~al.(2010)Bruzzone, Cimino, Diaz, and Longo]{BCDL-10}
Agostino Bruzzone, Antonio Cimino, Rafael Diaz, and Francesco Longo.
\newblock Inventory control with products returns: a state of the art overview.
\newblock In \emph{Proceedings of the 2010 Spring Simulation Multiconference},
  pages 1--7, 2010.

\bibitem[Bu et~al.(2023)Bu, Zhang, and Jasin]{BZJ-23}
Jinzhi Bu, Huanan Zhang, and Stefanus Jasin.
\newblock Asymptotic optimality of simple replenishment policies for a
  lost-sales inventory system with delivery lead time and purchase returns.
\newblock \emph{Available at SSRN}, 2023.

\bibitem[Chan and Farias(2009)]{CF-09}
Carri~W Chan and Vivek~F Farias.
\newblock Stochastic depletion problems: Effective myopic policies for a class
  of dynamic optimization problems.
\newblock \emph{Mathematics of Operations Research}, 34\penalty0 (2):\penalty0
  333--350, 2009.

\bibitem[de~Brito and Dekker(2003)]{dBD-03}
Marisa~P de~Brito and Rommert Dekker.
\newblock Modelling product returns in inventory control—exploring the
  validity of general assumptions.
\newblock \emph{International Journal of Production Economics}, 81:\penalty0
  225--241, 2003.

\bibitem[Delong et~al.(2024)Delong, Farhadi, Niazadeh, Sivan, and
  Udwani]{DFNSU-24}
Steven Delong, Alireza Farhadi, Rad Niazadeh, Balasubramanian Sivan, and Rajan
  Udwani.
\newblock Online bipartite matching with reusable resources.
\newblock \emph{Mathematics of Operations Research}, 49\penalty0 (3):\penalty0
  1825--1854, 2024.

\bibitem[Devanur et~al.(2013)Devanur, Jain, and Kleinberg]{DJK-13}
Nikhil~R Devanur, Kamal Jain, and Robert~D Kleinberg.
\newblock Randomized primal-dual analysis of ranking for online bipartite
  matching.
\newblock In \emph{Proceedings of the twenty-fourth annual ACM-SIAM symposium
  on Discrete algorithms}, pages 101--107. Society for Industrial and Applied
  Mathematics, 2013.

\bibitem[Ekbatani et~al.(2023)Ekbatani, Feng, and Niazadeh]{EFN-23}
Farbod Ekbatani, Yiding Feng, and Rad Niazadeh.
\newblock Online resource allocation with buyback: Optimal algorithms via
  primal-dual.
\newblock In \emph{Proceedings of the 24th ACM Conference on Economics and
  Computation}, pages 583--583, 2023.

\bibitem[Feng and Niazadeh(2025)]{FN-25}
Yiding Feng and Rad Niazadeh.
\newblock Batching and optimal multistage bipartite allocations.
\newblock \emph{Management Science}, 71\penalty0 (5):\penalty0 4108--4130,
  2025.

\bibitem[Feng et~al.(2024)Feng, Niazadeh, and Saberi]{FNS-24}
Yiding Feng, Rad Niazadeh, and Amin Saberi.
\newblock Near-optimal bayesian online assortment of reusable resources.
\newblock \emph{Operations Research}, 72\penalty0 (5):\penalty0 1861--1873,
  2024.

\bibitem[Golrezaei et~al.(2014)Golrezaei, Nazerzadeh, and
  Rusmevichientong]{GNR-14}
Negin Golrezaei, Hamid Nazerzadeh, and Paat Rusmevichientong.
\newblock Real-time optimization of personalized assortments.
\newblock \emph{Management Science}, 60\penalty0 (6):\penalty0 1532--1551,
  2014.

\bibitem[Gong et~al.(2021)Gong, Goyal, Iyengar, Simchi-Levi, Udwani, and
  Wang]{GGISUW-19}
Xiao-Yue Gong, Vineet Goyal, Garud~N Iyengar, David Simchi-Levi, Rajan Udwani,
  and Shuangyu Wang.
\newblock Online assortment optimization with reusable resources.
\newblock \emph{Management Science}, 2021.

\bibitem[Goyal et~al.(2025)Goyal, Iyengar, and Udwani]{GIU-20}
Vineet Goyal, Garud Iyengar, and Rajan Udwani.
\newblock Asymptotically optimal competitive ratio for online allocation of
  reusable resources.
\newblock \emph{Operations Research}, 2025.

\bibitem[Huang et~al.(2019)Huang, Tang, Wu, and Zhang]{HTWZ-17}
Zhiyi Huang, Zhihao~Gavin Tang, Xiaowei Wu, and Yuhao Zhang.
\newblock Online vertex-weighted bipartite matching: Beating 1-1/e with random
  arrivals.
\newblock \emph{ACM Transactions on Algorithms (TALG)}, 15\penalty0
  (3):\penalty0 1--15, 2019.

\bibitem[Janakiraman and Roundy(2004)]{JR-04}
Ganesh Janakiraman and Robin~O Roundy.
\newblock Lost-sales problems with stochastic lead times: Convexity results for
  base-stock policies.
\newblock \emph{Operations Research}, 52\penalty0 (5):\penalty0 795--803, 2004.

\bibitem[Jin and Ma(2022)]{JM-22}
Billy Jin and Will Ma.
\newblock Online bipartite matching with advice: Tight robustness-consistency
  tradeoffs for the two-stage model.
\newblock \emph{Advances in Neural Information Processing Systems},
  35:\penalty0 14555--14567, 2022.

\bibitem[Jin and Williamson(2021)]{JW-21}
Billy Jin and David~P Williamson.
\newblock Improved analysis of ranking for online vertex-weighted bipartite
  matching in the random order model.
\newblock In \emph{International Conference on Web and Internet Economics},
  pages 207--225. Springer, 2021.

\bibitem[Kalyanasundaram and Pruhs(2000)]{KP-00}
Bala Kalyanasundaram and Kirk~R Pruhs.
\newblock An optimal deterministic algorithm for online b-matching.
\newblock \emph{Theoretical Computer Science}, 233\penalty0 (1-2):\penalty0
  319--325, 2000.

\bibitem[Kang et~al.(2025)Kang, Liu, and Udwani]{KLU-25}
Suho Kang, Ziyang Liu, and Rajan Udwani.
\newblock A black-box approach for exogenous replenishment in online resource
  allocation.
\newblock In \emph{International Conference on Integer Programming and
  Combinatorial Optimization}, pages 326--340. Springer, 2025.

\bibitem[Karp et~al.(1990)Karp, Vazirani, and Vazirani]{KVV-90}
Richard~M Karp, Umesh~V Vazirani, and Vijay~V Vazirani.
\newblock An optimal algorithm for on-line bipartite matching.
\newblock In \emph{Proceedings of the twenty-second annual ACM symposium on
  Theory of computing}, pages 352--358. ACM, 1990.

\bibitem[Liang et~al.(2020)Liang, Jasin, and Uichanco]{LJU-20}
Alys~Jiaxin Liang, Stefanus Jasin, and Joline Uichanco.
\newblock Combining a smart pricing policy with a simple replenishment policy:
  Managing uncertainties in the presence of stochastic purchase returns.
\newblock \emph{Available at SSRN 3563847}, 2020.

\bibitem[Liang et~al.(2023)Liang, Jasin, and Chao]{LJC-23}
Alys~Jiaxin Liang, Stefanus Jasin, and Xiuli Chao.
\newblock Assortment and inventory planning under dynamic (stockout-based)
  substitution in the presence of customer returns: A fluid analysis.
\newblock \emph{Available at SSRN 4468430}, 2023.

\bibitem[Lo et~al.(2024)Lo, Manshadi, Rodilitz, and Shameli]{LMRS-24}
Irene Lo, Vahideh Manshadi, Scott Rodilitz, and Ali Shameli.
\newblock Commitment on volunteer crowdsourcing platforms: Implications for
  growth and engagement.
\newblock \emph{Manufacturing \& Service Operations Management}, 26\penalty0
  (5):\penalty0 1787--1805, 2024.

\bibitem[Ma and Simchi-Levi(2020)]{MS-19}
Will Ma and David Simchi-Levi.
\newblock Algorithms for online matching, assortment, and pricing with tight
  weight-dependent competitive ratios.
\newblock \emph{Operations Research}, 68\penalty0 (6):\penalty0 1787--1803,
  2020.

\bibitem[Manshadi and Rodilitz(2022)]{MR-22}
Vahideh Manshadi and Scott Rodilitz.
\newblock Online policies for efficient volunteer crowdsourcing.
\newblock \emph{Management Science}, 68\penalty0 (9):\penalty0 6572--6590,
  2022.

\bibitem[Manshadi et~al.(2024)Manshadi, Rodilitz, Saban, and Suresh]{MRSS-24}
Vahideh Manshadi, Scott Rodilitz, Daniela Saban, and Akshaya Suresh.
\newblock Online algorithms for matching platforms with multichannel traffic.
\newblock \emph{Management Science}, 2024.

\bibitem[Mehta(2013)]{meh-13}
Aranyak Mehta.
\newblock Online matching and ad allocation.
\newblock \emph{Foundations and Trends{\textregistered} in Theoretical Computer
  Science}, 8\penalty0 (4):\penalty0 265--368, 2013.

\bibitem[Mehta and Panigrahi(2012)]{MP-12}
Aranyak Mehta and Debmalya Panigrahi.
\newblock Online matching with stochastic rewards.
\newblock In \emph{2012 IEEE 53rd Annual Symposium on Foundations of Computer
  Science}, pages 728--737. IEEE, 2012.

\bibitem[Mehta et~al.(2007)Mehta, Saberi, Vazirani, and Vazirani]{MSVV-05}
Aranyak Mehta, Amin Saberi, Umesh Vazirani, and Vijay Vazirani.
\newblock Adwords and generalized online matching.
\newblock \emph{Journal of the ACM (JACM)}, 54\penalty0 (5):\penalty0 22--es,
  2007.

\bibitem[Miao(2023)]{mia-23}
Sentao Miao.
\newblock Managing the inventory with product return: An approximation
  algorithm.
\newblock \emph{Available at SSRN}, 2023.

\bibitem[Muthuraman et~al.(2015)Muthuraman, Seshadri, and Wu]{MSW-15}
Kumar Muthuraman, Sridhar Seshadri, and Qi~Wu.
\newblock Inventory management with stochastic lead times.
\newblock \emph{Mathematics of Operations Research}, 40\penalty0 (2):\penalty0
  302--327, 2015.

\bibitem[Roth(2023)]{EIV-23}
Alvin~E Roth.
\newblock \emph{Online and matching-based market design}.
\newblock Cambridge University Press, 2023.

\bibitem[Rusmevichientong et~al.(2020)Rusmevichientong, Sumida, and
  Topaloglu]{RST-17}
Paat Rusmevichientong, Mika Sumida, and Huseyin Topaloglu.
\newblock Dynamic assortment optimization for reusable products with random
  usage durations.
\newblock \emph{Management Science}, 66\penalty0 (7):\penalty0 2820--2844,
  2020.

\bibitem[Simchi-Levi et~al.(2025)Simchi-Levi, Zheng, and Zhu]{SZZ-25}
David Simchi-Levi, Zeyu Zheng, and Feng Zhu.
\newblock On greedy-like policies in online matching with reusable network
  resources and decaying rewards.
\newblock \emph{Management Science}, 2025.

\bibitem[Simpson(1978)]{sim-78}
Vincent~P Simpson.
\newblock Optimum solution structure for a repairable inventory problem.
\newblock \emph{Operations research}, 26\penalty0 (2):\penalty0 270--281, 1978.

\bibitem[Song et~al.(2020)Song, Van~Houtum, and Van~Mieghem]{SVV-20}
Jing-Sheng Song, Geert-Jan Van~Houtum, and Jan~A Van~Mieghem.
\newblock Capacity and inventory management: Review, trends, and projections.
\newblock \emph{Manufacturing \& Service Operations Management}, 22\penalty0
  (1):\penalty0 36--46, 2020.

\bibitem[Udwani(2025)]{udw-21}
Rajan Udwani.
\newblock Adwords with unknown budgets and beyond.
\newblock \emph{Management Science}, 71\penalty0 (2):\penalty0 1009--1026,
  2025.

\bibitem[Vazirani(2023)]{vaz-23}
Vijay~V Vazirani.
\newblock Towards a practical, budget-oblivious algorithm for the adwords
  problem under small bids.
\newblock In \emph{43rd IARCS Annual Conference on Foundations of Software
  Technology and Theoretical Computer Science}, page~21, 2023.

\end{thebibliography}

% \newcommand{\newblock}{}
% \bibliographystyle{apalike}
% % \bibliographystyle{plainnat}
% \bibliography{refs}

\newpage
\renewcommand{\theHchapter}{A\arabic{chapter}}
\renewcommand{\theHsection}{A\arabic{section}}

\ECSwitch
\ECDisclaimer

\section{Further Related Work}
\label{apx:related work}
% Assortment planning for revenue management has an extensively growing 
% literature over the recent decades. We refer the reader to related surveys and books \citep[cf.][]{KFV-08, lan-90, HT-98} for a comprehensive study.
% \citet{VM-99} study the static model 
% which captures the trade-offs between
% inventory costs and 
% product variety under the multinomial logit 
% consumer choice model. 
% Later work have 
% considered assortment under various consumer choice models,
% e.g., 
% demand substitution model \citep{SA-00}, multinomial logit models
% \citep{TV-04, GIPD-04, LV-08, top-13},
% the Lancaster choice model
% \citep{GH-06},
% ranked-list preference 
% \citep{HGS-10, GLS-16}, consider-then-choose choice models \citep{AFL-15}, and non-parametric (data-driven)
% choice models \citep{FJS-13}.

\label{apx:sec:further-related}
Online assortment
has an extensively growing 
literature
recently, both under the prior-free/adversarial and the Bayesian settings. For non-reusable products,
\citet{BKX-15} study a model with two products, i.i.d.\ consumer types and Poisson arrivals. \citet{CF-09} study a model with non-stationary consumer types. \citet{GNR-14} propose inventory balancing algorithms for assortment---inspired by the seminal work of \citet{MSVV-05} for online ad allocation---and analyze their performance guarantee in the adversarial setting using a primal-dual approach. This analysis is later improved and generalized to the case where there are finite steps in the penalty function~\cite{MS-19}.
For online assortment of reusable products, \citet{RST-17,BM-22,FNS-24} study the Bayesian setting under various modeling assumptions.
\citet{GGISUW-19,GIU-20} study the adversarial setting with stationary rental fees and rental duration distributions. 
\citet{SZZ-25} introduce a more general framework
in the adversarial setting 
with decaying rental fees and rental duration distributions.

% After the online publication of an earlier version of this paper, we were informed of an independent and concurrent work~\cite{vineet-personal,GIU-20}. These work generalize~\citet{GGISUW-19} by proposing an online algorithm with the asymptotic competitive ratio of $1-\frac{1}{e}$ when rental durations are drawn
% i.i.d.\ over time (rather than being fixed), but inventories are \emph{without} exogenous replenishment. Their algorithms and analyses diverge from ours substantially, but the results are comparable in the special case of our problem when there is no replenishment. 

\citet{vaz-23,udw-21} study a variant
of AdWords problem \citep[cf.][]{MSVV-05,meh-13},
where the budget (aka., capacity) 
of each offline node 
is initially unknown to the online algorithms,
and is only revealed to the algorithms
immediately after it is exhausted. 
They present the competitive ratio analysis 
for 
Perturbed Greedy algorithm 
and Generalized Perturbed Greedy algorithm.
\citet{MRSS-24}
introduce 
a new variant of the online
matching with stochastic reward \citep[cf.][]{MP-12}.
In their model, 
a fixed fraction of online nodes 
are single-minded (i.e., 
each of them
has a single incident edge).
They show the non-optimality of 
the Inventory Balancing (IB)
in this model, 
and propose a new modification
-- Adaptive Capacity (AC) -- 
with improved competitive ratio.
In all \citet{vaz-23,MRSS-24,udw-21},
the online algorithms suffer 
some specific forms of uncertainty 
about the capacities of offline nodes.
However, our model is fundamentally different 
from them and thus not directly comparable.

Our results are also comparable to the literature on online bipartite matching (with vertex arrival) and its extensions.
\citet{DFNSU-24} is the most related, where the authors consider the adversarial arrival setting without replenishment under small capacity (in particular, the capacity equals one) and restricts to problem instances with identical duration across offline nodes (aka products) and time. The paper proposes integral matching algorithms that achieve competitive ratios slightly better than 1/2, but still not optimal. In contrast, our paper allows for heterogeneous duration across offline nodes (aka products) and achieves an asymptotically optimal competitive ratio (rather than looking at the small capacity regime).
We refer the reader to related surveys and books \citep[cf.][]{meh-13,EIV-23} for a comprehensive study of the online bipartite matching literature.

As mentioned earlier, a major source of exogenous shocks to inventory is customer returns, either from the original channel or from other channels. The impact of customer returns on inventories and assortments has been studied in the literature, e.g. recent work of \cite{LJC-23,LJU-20,mia-23}. See \cite{BCDL-10} for an overview of
the older literature on inventory control with returns. In addition, we interpret endogenous shocks as replenishment of sold units of the product when the retailer employs a simple discrete-period restocking policy with a fixed lead time. This is in fact a simplification and there is a long literature in supply chain and inventory theory on various forms of replenishment policies; for example, see \cite{BCDL-10,BZJ-23} for a literature review. Simple restocking with fixed lead time is a special case of stochastic lead time, which has been studied in the literature, e.g. \cite{JR-04,MSW-15}. Also, a more practical restocking policy is base-stock with a lower-threshold~\citep{sim-78,SVV-20}, e.g., restocking only when $\alpha$ fraction or $k$ number of units is sold (instead of each time a unit is sold). We leave further investigations on these models for future research. 
\revcolor{Recently, \citet{KLU-25} extended the framework developed in this paper to more general settings (e.g., endogenous inventory shocks with i.i.d.\ stochastic durations). Their work builds directly on our core ideas and demonstrates a general reduction framework that lifts a broad class of algorithms—and their accompanying theoretical guarantees—from the setting without exogenous shocks to one that accommodates arbitrary exogenous inventory shocks.}

\section{\revcolor{Discussion on Practical Relevance}}
\label{apx:micro-foundations}
\revcolor{
The online assortment planning model with endogenous and exogenous shocks studied in this paper is rather general and rooted in various applications. In this section, we discuss several such applications in detail and explain the micro-foundations behind both endogenous and exogenous shocks in each context.

\xhdr{Online platforms with rental or reusable products (e.g., cloud computing, car sharing, and hospitality services).}
Rental-based platforms such as AWS (cloud computing), Zipcar (car sharing), and Airbnb (short-term rentals) are natural instances of online assortment with inventory shocks. When a user selects a server instance, vehicle, or home listing, that unit immediately becomes unavailable for a period equal to the rental duration. This leads to an \emph{endogenous shock}, as the temporary depletion of inventory is triggered directly by the platform’s own allocation decision. Once the usage period ends, the same unit returns to the available inventory, replenishing supply. Meanwhile, \emph{exogenous shocks} occur when new resources unexpectedly enter the system---for example, when a new host lists a property on Airbnb, when additional vehicles are relocated to a specific city, or when unused cloud servers are released from internal operations and become available to external customers. The assortment planner must dynamically adjust which items to display to incoming users, accounting for both the predictable restocking of rented items and the uncertain inflow of new supply.

\xhdr{Dynamic resource allocation in public-service and humanitarian operations.}
In social-impact and public-sector contexts---such as refugee resettlement, housing allocation, or hospital bed management---resources are inherently reusable. A resource (e.g., a case manager, a housing unit, or a hospital bed) becomes available again once a prior case or patient exits, which creates an \emph{endogenous shock}. At the same time, \emph{exogenous shocks} occur when additional capacity arrives due to new government funding, emergency shelters, or partner institutions joining the network. 
The central planner’s task is to decide, in real time, which clients or cases to match with available resources given uncertain future arrivals of both clients and capacity. This is effectively an assortment problem over a dynamically fluctuating inventory, directly mapping to our framework with both endogenous and exogenous shocks.

\xhdr{Volunteer nudging and matching platforms.}
Many volunteer coordination platforms (e.g., Catchafire or Food Rescue U.S.) face an online assignment problem where \emph{volunteers} serve as reusable resources and \emph{jobs or requests} arrive sequentially over time. When the platform nudges a volunteer to accept a task, that volunteer becomes temporarily unavailable for the task duration, producing an \emph{endogenous shock}. After completing the task, she becomes available again to accept new opportunities. 
At the same time, \emph{exogenous shocks} occur as new volunteers join, existing volunteers extend their commitment hours, or additional partner organizations contribute workforce capacity. 
The platform’s decision at each arrival corresponds to an online “assortment’’ of volunteers or tasks—deciding which subset of volunteers to nudge for each new job while ensuring balanced utilization and avoiding fatigue. Here, endogenous shocks capture temporary unavailability caused by engagement decisions, whereas exogenous shocks capture new volunteer inflows driven by external outreach or seasonal events. The platform’s goal of maximizing completed jobs subject to dynamic volunteer availability aligns precisely with our model’s structure.

\xhdr{Inventory-sharing and circular-economy systems.}
In sharing-economy or circular supply systems, resources such as tools, clothing, or electronics are repeatedly lent out, repaired, or refurbished, creating \emph{endogenous shocks} that replenish inventory. Each time a borrowed or refurbished unit completes its cycle, it re-enters the pool of available items. In addition, \emph{exogenous shocks} occur through donation campaigns, buy-back programs, or asset transfers from partner organizations. 
For instance, in a shared-equipment platform like Fat Llama\footnote{\url{https://fatllama.com/us}}, a camera lent to a user for three days returns afterward (endogenous), while new listings or donated items are added unpredictably (exogenous). The platform must dynamically decide which items to display to maximize engagement while balancing short-term utilization against uncertain restocking and inflow rates.

\xhdr{Freelance and gig-economy marketplaces.}
In marketplaces such as Upwork, Uber, or DoorDash, workers or service providers act as reusable capacity. When a driver accepts a trip or a freelancer takes on a project, she becomes unavailable for a stochastic service duration—an \emph{endogenous shock} arising directly from the platform’s allocation. After completion, the provider re-enters the pool, replenishing supply. 
Simultaneously, the platform experiences \emph{exogenous shocks} as new drivers sign on, existing workers log back in after a break, or temporary policy changes (e.g., surge incentives, weather effects) alter availability. The platform’s control resembles an assortment decision: which jobs to display or assign to which subset of available providers to optimize earnings and reliability. The interplay between reusable labor capacity (endogenous restocking) and unpredictable supply inflows (exogenous additions) is naturally captured by our model.

\xhdr{Omni-channel and online retail with returns and restocking.}
In omni-channel retailing, inventories “come and go’’ through multiple operational processes. When an online sale is made, the retailer may automatically place a replenishment order with a supplier, which arrives after a lead time---an \emph{endogenous shock} that replenishes the same product based on the previous allocation. Meanwhile, \emph{exogenous shocks} arise when additional units are returned from other sales channels (e.g., physical stores or marketplaces), or when inventory records are corrected due to system discrepancies or shipping cancellations. 
For example, a pair of shoes sold online might be restocked from the distribution center after three days (endogenous), while at the same time another pair may be returned to the warehouse due to a customer refund (exogenous). The retailer’s online assortment (which items to feature or promote) must thus remain robust to both predictable restocking cycles and unpredictable inflows, motivating assortment algorithms that adapt to both types of inventory shocks.\footnote{\revcolor{Notably, lead times are not fixed in practical applications, but for the purpose of theoretical tractability we consider a simplified model with fixed lead times (i.e., fixed duration). We explore the change in durations in our numerical simulations. See \Cref{apx:numerical random instance}. As it turns out, our algorithm is robust to moderate fluctuations in the duration.}}
}

% \section{Inventory Balancing Algorithms with 
% Sub-Optimal Competitive Ratios}
% \label{apx:failed algorithms}
% \input{Paper/apx-non-optimal-IB}

\section{Numerical Experiments}
\label{apx:numerical}
In this section, we provide numerical justifications for the performance of our BIB algorithm on synthetic instances.

\xhdr{Policies.} In our numerical experiments, we compare the revenue of five different policies:
\begin{enumerate}
    \item \textsl{\IBB}~\texttt{(BIB)}: this is the family of policies (\Cref{alg:IBB}) parameterized by penalty function $\pen$ and batch-size threshold $\batchsizethreshold$ proposed in \Cref{sec:IBB}. It generalizes the classic \IB~(IB) when there are inventory shocks. It achieves the asymptotic optimal competitive ratio of $1-\frac{1}{e}\approx 0.632$  under exponential penalty function $\pen(x) = \frac{e^{1-x}-e}{1-e}$ and batch-size threshold $\batchsizethreshold = \sqrt{\mininventory}$ (\Cref{sec:implication}). 
    \item \textsl{\CNUIB}~\texttt{(SCIB)}: this is the family of policies parameterized by penalty function $\pen$. It generalizes the classic IB when there are inventory shocks. Specifically, for each time period $t$ and product $i$, the marginal revenue of all its units is discounted by the same factor of $\pen\left(\min\{\sfrac{\curinventory_{i, t}}{\inventory_i}, 1\}\right)$. Here $\curinventory_{i, t}$ is the inventory level of product $i$ at time period $t$. See \Cref{alg:CNUIB} for its formal description.
    \item \textsl{\CUIB}~\texttt{(DCIB)}: this is the family of policies  parameterized by penalty function $\pen$. It generalizes the classic IB when there are inventory shocks. Specifically, for each time period $t$ and product $i$, the marginal revenue of all its units is discounted by the same factor of $\pen\left(\sfrac{\curinventory_{i, t}}{(\inventory_i+\sum_{\tpre\in[t]}\replen_{i,\tpre}})\right)$. Here $\curinventory_{i, t}$ is the inventory level of product $i$ at time period $t$, and $\replen_{i,t}$ is the new units of product $i$ added due to exogenous shocks. See \Cref{alg:CUIB} for its formal description.
    \item \textsl{\UDIB}~\texttt{(USIB)}: this is the family of policies parameterized by penalty function $\pen$. Each new unit of each product added due to exogenous shocks are treated as a new type of products. This is equivalent to the IBB algorithm (\Cref{alg:IBB}) with batch size threshold $\batchsizethreshold = 1$ (which automatically does not distinguish between ``charging'' and ``ready'' batches as well).
    It generalizes the classic IB when there are inventory shocks.
    \item \textsl{Myopic Greedy}~\texttt{(GREED)}: this is the policy that offers the (myopic) optimal assortment given the available products at each time. This policy is a special case of IB when $\pen(x) = \indicator{x>0}$. Its competitive ratio is at most 0.5.
\end{enumerate}

\begin{algorithm}
 	\caption{\textsc{\CNUIB}}
 	\label{alg:CNUIB}
	\KwIn{penalty function $\pen$}

 	\For{each time period $t\in[T]$}{ 
 	\tcc{consumer $t$ with choice model $\choice_t$ arrives.} 
 	
    \tcc{$\repleni$ new units of product $i$ are added due to exogenous shock.}
    
    update inventory level $\{\curinventory_{i,t}\}_{i\in[n]}$ of all products at time period $t$.

    solve $\assortment_t\gets
    \argmax\nolimits_{
    \assortment\in \assortmentspace}
    \sum_{i\in \assortment} 
    \reward_i
    \pen\left(\min\{\sfrac{\curinventory_{i, t}}
    {\inventory_i}, 1\}
    \right)\cdot \choice_{t}(\assortment, i)$.
 	
 	allocate assortment $\assortment_t$ to consumer $t$.
 	}
\end{algorithm}
\begin{algorithm}
 	\caption{\textsc{\CUIB}}
 	\label{alg:CUIB}
	\KwIn{penalty function $\pen$}

 	\For{each time period $t\in[T]$}{ 
 	\tcc{consumer $t$ with choice model $\choice_t$ arrives.} 
 	
    \tcc{$\repleni$ new units of product $i$ are added due to exogenous shock.}
    
    update inventory level $\{\curinventory_{i,t}\}_{i\in[n]}$ of all products at time period $t$.

    solve $\assortment_t\gets
    \argmax\nolimits_{
    \assortment\in \assortmentspace}
    \sum_{i\in \assortment} 
    \reward_i
    \pen\left(\sfrac{\curinventory_{i, t}}
    {(\inventory_i+\sum_{\tpre\in[t]}\replen_{i,\tpre}}
    )\right)\cdot \choice_{t}(\assortment, i)$.
 	
 	allocate assortment $\assortment_t$ to consumer $t$.
 	}
\end{algorithm}

\subsection{Numerical Performance under Stylized Instances}
\label{apx:numerical stylized instance}
In this section, we first numerically illustrate that three natural modifications of IB --- \CNUIB~(\texttt{SCIB}), \CUIB~(\texttt{DCIB}) and \UDIB~(\texttt{USIB}) --- cannot achieve the optimal asymptotic competitive ratio $(1-\sfrac{1}{e})$ even if we consider online bipartite matching instances (i.e., the special case of online assortment where choice model $\choice(\assortment, i) \in \{0, 1\}$ and $\choice(\assortment, i) = 1$ only if $\assortment = \{i\}$) with large initial inventories and no endogenous inventory shock (i.e., $\duration_i = \infty$ for every product $i$). Then we examine the numerical performance of different policies on these constructed instances.

\subsubsection{Competitive ratio upper bound of 0.552 for \texttt{SCIB}.} 
\label{sec:CR UB SCIB}
Our constructed instance $\matchinginstance(N, \reward, \supply)$ is parameterized by $N \in \naturals$ and $r, s\in[0, 1]$. In this instance, there are $N + 1$ product subgroups $\{\supplies_{\ell}\}_{\ell\in[0:N]}$, indexed by $[0: N]$. Each product subgroup $\supplies_{\ell}$ with $\ell\in[0:N]$ contains $n_\ell$ products, and each product $i$ in subgroup $\supplies_\ell$ has identical per unit price $\reward_i \triangleq \reward^\ell$ and identical initial inventory $\inventory_i \triangleq \inventory$. Here $\inventory$ is a large enough constant. On the other side, there are $2N + 1$ consumer subgroups $\{\demands_k\}_{k\in[0:2N]}$ indexed by $[0: 2N]$. Each consumer subgroup $\demands_k$ with $k\in[0:2N]$ contains $m_k$ consumers, who arrive sequentially.

Consumer subgroup $\demands_0$ has size $m_0 \triangleq \inventory \cdot n_0$. For each $i\in[n_0]$ and each time period $t\in [(i - 1)\inventory + 1:i\inventory]$, a new consumer from subgroup $\demands_0$ who accepts every product from $[n_0 - i + 1] \subseteq \supplies_0$ (i.e., the first $n_0 - i + 1$ products of $\supplies_0$) arrives. 

All remaining consumer subgroups $\{\demands_k\}_{k\in[2N]}$ and products subgroups $\{\supplies_k\}_{\ell\in[N]}$ are constructed recursively based on the execution of \CUIB.\footnote{Since the algorithm is deterministic, it suffices for us to construct instance $\matchinginstance(N, \reward, \supply)$ adaptively, based on the execution of the algorithm.} Specifically, for each $\ell\in[N]$,
\begin{itemize}
    \item Consider time period $t_\ell \triangleq \sum_{k\in[0:2(\ell-1)]}m_k$, i.e., the time period when the first consumer from consumer subgroup $\demands_{2\ell - 1}$ arrives. Let $x_{i, t_\ell}$ be the number of available units of product $i\in\supplies_{\ell - 1}$ at the beginning of time period $t_\ell$ in the algorithm. For each product $i\in\supplies_{\ell - 1}$, there are $\replen_{i,t_\ell} \triangleq \max\{\supply\cdot \inventory - x_{i,t_\ell},0\}$ new units added due to the exogenous inventory shocks. As a sanity check, for each product $i\in\supplies_{\ell-1}$ with $\replen_{i,t_{\ell}} > 0$, its inventory level becomes $\supply\inventory$ and has a discount factor of $\pen(\supply)$ in the algorithm.
    With slight abuse of notation, let $\replen_\ell\triangleq \sum_{i\in\supplies_{\ell - 1}}\replen_{i, t_\ell}$ be the total number of new units added for products from product subgroup $\supplies_{\ell - 1}$. 
    \item Product subgroup $\supplies_{\ell}$ has size $n_{\ell} \triangleq \frac{\replen_\ell}{\inventory\cdot \left(1 - \pen^{-1}\left(\frac{\pen(\supply)}{\reward}\right)\right)}$, where $\pen^{-1}$ is the inverse function of penalty function $\pen$.\footnote{In the construction of instance $\matchinginstance(N, \reward,\supply)$, we impose the requirement that $\frac{\pen(\supply)}{\reward}\in[0, 1]$ for parameters $\reward,\supply\in[0, 1]$. Note that under this requirement, together with the assumption that penalty function $\pen$ is exponential function, the inverse function $\pen^{-1}$ and $n_{\ell}$ are well-defined.} 
    \item Consumer subgroup $\demands_{2\ell - 1}$ have size $m_{2\ell- 1}\triangleq \replen_{\ell}$. For each new unit added for products from product subgroup $\supplies_{\ell-1}$ at time period $t_\ell$ due to exogenous shocks, there is a consumer in consumer subgroup $\demands_{2\ell-1}$ who accepts product $i\in\supplies_{\ell - 1}$ of this unit and one product $i'\in\supplies_{\ell}$. Here product $i'$ is constructed such that for each product $i'\in\supplies_{\ell}$, there are exactly $\frac{\replen_\ell}{n_{\ell}}$ consumers from $\demands_{2\ell - 1}$ accepts this product. Under this construction, for each consumer in $\demands_{2\ell - 1}$, suppose she accepts product $i\in \supplies_{\ell - 1}$ and product $i'\in \supplies_{\ell}$, the algorithm allocates product $i'\in\supplies_{\ell}$ to her and collects price $r^\ell$, since the discounted price of product $i'\in\supplies_{\ell}$ is larger than the discounted price of product $i\in\supplies_{\ell - 1}$ due to the construction of $n_\ell$. More specifically, note that the inventory level of product $i'$ when serving this consumer is at least $\inventory - \frac{\replen_\ell}{n_\ell}$. Thus, the discounted price of product $i'$ is at least 
    \begin{align*}
        \reward_{i'}\pen\left(1 - \frac{\replen_\ell}{\inventory\cdot n_\ell}\right)
        =
        \reward^\ell \pen\left(1 - \frac{\replen_\ell}{\inventory\cdot n_\ell}\right)
        =
        \reward^\ell \frac{\pen(\supply)}{\reward}
        =
        \reward^{\ell - 1}\pen(\supply)
        =
        \reward_{i}\pen(\supply)
    \end{align*}
    where the first and last equalities hold due to the construction of $\reward_{i}, \reward_{i'}$ from subgroups $\supplies_{\ell - 1},\supplies_{\ell}$; and the second equality holds due to the construction of $n_\ell$. Because of this allocation in the algorithm, after serving all consumers from consumer subgroup $\demands_{2\ell - 1}$, the inventory level of each product from product subgroup $\supplies_{\ell}$ is the same, which is equal to $\inventory - \frac{\replen_\ell}{n_\ell}$.
    \item Consumer subgroup $\demands_{2\ell}$ has size $m_{2\ell}\triangleq \inventory \cdot n_\ell$. For the $i$-th consumer in consumer subgroup $\demands_{2\ell}$, she accepts the first $n_\ell - \lfloor\frac{i}{\inventory}\rfloor$ products of product subgroup $\supplies_\ell$.
\end{itemize}
The clairvoyant optimum policy assigns a perfect matching between consumer subgroup $\demands_{2\ell}$ and product subgroup $\supplies_{\ell}$ for every $\ell\in[0:N]$, and assigns every consumer from consumer subgroup $\demands_{2\ell - 1}$ with a unit of her acceptable product from product subgroup $\supplies_{\ell - 1}$ that is added at time period $t_\ell$ due to exogenous shocks for every $\ell\in[N]$. Hence, the total revenue of the clairvoyant optimum policy is $\sum_{\ell\in[0:N]}\reward^\ell\cdot \inventory\cdot n_{\ell} + \sum_{\ell\in[N]}\reward^{\ell - 1}\cdot \replen_\ell$.

On the other end, \texttt{SCIB} is sub-optimal because of the revenue loss from two parts. First, it only assigns an imperfect matching between consumer subgroup $\demands_{2\ell}$ and product subgroup $\supplies_{\ell}$ for every $\ell\in[0:N]$. Moreover, for every $\ell\in[N]$ and every consumer from consumer subgroup $\demands_{2\ell - 1}$, the algorithm allocates a unit of her acceptable product from product subgroup $\supplies_{\ell}$. 

Based on the aforementioned instance construction, we numerically evaluate the performance \texttt{SCIB} with exponential penalty function $\pen(x) = \frac{e^{1-x}-e}{1-e}$ on instance $\matchinginstance(20, 0.5, 0.32)$ with $n_0 = 500$, sufficiently large $\inventory$ and obtain a competitive ratio upper bound of 0.552.

\subsubsection{Competitive ratio upper bound of 0.53 for \texttt{DCIB}.} 
\label{sec:CR UB DCIB}
Our constructed instance $\hat\matchinginstance(N, \reward, \supply)$ is parameterized by $N \in \naturals$ and $r, s\in[0, 1]$. The construction of $\hat\matchinginstance(N, \reward, \supply)$ is almost the same as $\matchinginstance(N, \reward, \supply)$ above, expect that we set $\replen_{i,t_\ell}\triangleq \frac{\supply\cdot \inventory - x_{i,t_\ell}}{1-\supply}$ for every $\ell\in[N]$ and $i\in\supplies_{\ell - 1}$. Under this modification of $\{\replen_{i,t_\ell}\}$, we still ensure that for every $\ell\in[N]$ and every consumer from consumer subgroup $\demands_{2\ell - 1}$, \texttt{DCIB} allocates a unit of her acceptable product from product subgroup $\supplies_{\ell}$, while the clairvoyant optimum policy allocates a unit of her acceptable product from product subgroup $\supplies_{\ell - 1}$.  

Based on the aforementioned instance construction, we numerically evaluate the performance \texttt{DCIB}  with exponential penalty function $\pen(x) = \frac{e^{1-x}-e}{1-e}$ on instance $\hat\matchinginstance(20, 0.5, 0.3)$  with $n_0 = 500$, sufficiently large $\inventory$ and obtain a competitive ratio upper bound of 0.53.

\subsubsection{Competitive ratio upper bound of 0.5 for \texttt{USIB}.} 
\label{sec:CR UB USIB}
Consider an instance $\bar\matchinginstance(\epsilon)$ with $2$ products. Product 1 has price per unit $\reward_1 \triangleq 1$ and a large enough initial inventory $\inventory_1 \triangleq \inventory$. Product 2 has price per unit $\reward_2 \triangleq 1 - \epsilon$ and infinite initial inventory $\inventory_2 \triangleq \infty$. There are $T \triangleq \inventory + 2\inventory^2$ time periods. At each time period $t\in\{\inventory + 1, \inventory + 3, \dots, T - 1\}$, a new unit of product $1$ is added due to exogenous inventory shock. 

For every time period $t \in[\inventory]$, a consumer who only accepts product $1$ arrives. For every time period $t\in\{\inventory + 1, \inventory + 3,\dots, T - 1\}$, a consumer who accepts both products $1$ and $2$ arrives. For every time period $t\in \{\inventory + 2, \inventory + 4, \dots, T\}$, a consumer who only accepts product $1$ arrives.

The clairvoyant optimum policy allocates a unit of product $1$ to every consumer $t\in[t]\cup\{\inventory + 2, \inventory + 4, \dots, T\}$ and collect per unit price $\reward_1 = 1$. Meanwhile, it allocates a unit of product $2$ to every consumer $t\in \{\inventory + 1, \inventory + 3,\dots, T - 1\}$ and collect per unit price $\reward_2 = 1 - \epsilon$. Hence, the total revenue of the clairvoyant optimum policy is $T - \frac{T - \inventory}{2}\epsilon = \inventory + 2\inventory^2 - \inventory^2\epsilon$.

On the other end, \texttt{USIB} allocates a unit of product $1$ to every consumer $t\in[t]\cup\{\inventory + 1, \inventory + 3,\dots, T - 1\}$, and allocates nothing to every consumer $t\in\{\inventory + 2, \inventory + 4, \dots, T\}$. Consequently, the total revenue of the algorithm is $\inventory + \frac{T - \inventory}{2} = \inventory + \inventory^2$.

Putting two pieces together, the competitive ratio of the algorithm on this constructed instance is $\frac{\inventory + \inventory^2}{\inventory + 2\inventory^2 - \inventory^2\epsilon}$ which converges to $\frac{1}{2}$ as $\epsilon$ goes to zero and $\inventory$ goes to infinity.

\subsubsection{Comparison between policies on constructed instances.}
Finally, we examine all policies on the aforementioned instances $\matchinginstance(N, \reward, \supply)$, $\hat\matchinginstance(N, \reward, \supply)$, $\bar\matchinginstance(\epsilon)$, which numerically upper bound the competitive ratio of \texttt{SCIB}, \texttt{DCIB}, \texttt{USIB}, respectively (\Cref{sec:CR UB SCIB,sec:CR UB DCIB,sec:CR UB USIB}). \texttt{BIB}, \texttt{SCIB}, \texttt{DCIB} and \texttt{USIB} all use exponential penalty function $\pen(x) = \frac{e^{1-x}-e}{1-e}$, and the batch-size threshold $\batchsizethreshold$ in \texttt{BIB} is set as $\batchsizethreshold = 10$. 

We record the revenue of different policies in \Cref{tab:numerical results stylized instance}. It can be seen that in all three instances, all four IB policies outperform the myopic greedy policy (\texttt{GREED}). In particular, the revenue gap between \texttt{BIB} and \texttt{GREED} are 15.3\%, 12.0\%, 103.1\% for three instances, respectively. It can also be observed that \texttt{BIB} has the most robust performance for all three instances. In instances $\matchinginstance(10, 0.5, 0.32)$ and $\hat\matchinginstance(10, 0.5, 0.3)$, \texttt{BIB} outperforms \texttt{SCIB} and \texttt{DCIB} significantly, and is close to \texttt{USIB} (which is equivalent to \texttt{BIB} with $\batchsizethreshold = 1$). In instance $\bar\matchinginstance(0.1)$, \texttt{BIB} outperforms \texttt{USIB} significantly, and is close to \texttt{SCIB} and \texttt{DCIB}.  

\begin{table}[ht]
\centering
\caption{Comparing the average revenue of different policies (\Cref{apx:numerical stylized instance}). Instances $\matchinginstance(N, \reward, \supply)$, $\hat\matchinginstance(N, \reward, \supply)$, $\bar\matchinginstance(\epsilon)$ are constructed to upper bound the competitive ratio of \texttt{SCIB}, \texttt{DCIB}, \texttt{USIB}, respectively. For instances $\matchinginstance(N, \reward, \supply)$ and $\hat\matchinginstance(N, \reward, \supply)$, we set $n_0 = 100$ and $\inventory = 50$. For instance $\bar\matchinginstance(\epsilon)$, we set $\inventory = 50$.}
\label{tab:numerical results stylized instance}
\begin{tabular}[t]{lccc}
\toprule
&$\matchinginstance(10, 0.5, 0.32)$
&$\hat\matchinginstance(10, 0.5, 0.3)$
& $\bar\matchinginstance(0.1)$
\\
\midrule
\texttt{BIB}& 5607.72  & 6295.95 & 4570.50\\
\texttt{SCIB}& 5103.57  & 5765.26 & 4800.00\\
\texttt{DCIB}& 5103.57  & 5515.90 & 4800.00\\
\texttt{USIB}& 5848.45& 6658.80 & 2250.00\\
\texttt{GREED} & 4864.29  & 5622.82 & 2250.00\\
\bottomrule
\end{tabular}
\end{table}%

\subsection{Numerical Performance under Random Instances}
\label{apx:numerical random instance}
In this section, we examine the numerical performance of different policies on the following randomly generated instances.

\xhdr{Experimental setup.} We generate the following assortment instances. There are six products indexed by $[6]$ and $T = 3000$ time periods. Each product $i\in[6]$ has an identical initial inventory $\mininventory = 30$ and per unit price $\reward_i$ drawn uniformly between 10 and 25. Without loss of generality, we reindex products such that $\reward_i$ is decreasing in $i\in[6]$. Each product $i\in[6]$ has identical endogenous inventory shock with lead time $\duration_i = T / 3$ and exogenous inventory shock $\replen_{i,t}$ drawn i.i.d.\ from geometric distribution $\texttt{Geo}(0.98)$ for every time period $t\in[T]$.

There are six types of consumers indexed by $1:6$. For each consumer type $j\in[6]$, it is associated with a ``consideration set'' $[j]$ and each product $i\in[j]$ is selected based on a Multinomial Logit (MNL) choice model. Specifically, if assortment $\assortment\subseteq[6]$ is offered, consumers of type $j\in[6]$ selects each product $i\in\assortment$ with probability $\choice^{(j)}(\assortment, i) = \alpha_i^{(j)}/(\alpha_0^{(j)} + \sum_{i'\in\assortment}\alpha_{i'}^{(j)})$, where $\alpha_{i}^{(j)} = 0$ if $i > j$ (i.e., she does not select a product outside of her consideration set), and $\alpha_{i}^{(j)}$ is drawn uniformly between 0.9 and 1.1 if $i\in[j]$. Finally, the outside option (i.e., selecting nothing) is picked by constant probability 0.1 by setting $\alpha_0^{(j)}$ so that $\alpha_0^{(j)}/(\alpha_0^{(j)} + \sum_{i'\in\assortment}\alpha_{i'}^{(j)}) = 0.1$.

In each time period $t\in[T]$, a new consumer is sampled from the following stochastic process similar to \citet{RST-17}. We divide the whole time horizon into 3 equal-length phases, each with 6 equal-length chunks. Thus, there are in total 18 chunks with equal length $\tau = T / 18$. Let $\tau^{(j)} = (6 - j)\tau + 1$ for every $j\in[6]$. At each time period $t\in[T]$, let $t'= t \bmod(T/3)\in[T/3]$, a consumer of type $j\in[6]$ arrives with probability proportional to $e^{-0.001\kappa|t' - \tau^{(j)}|}$. Here, $\kappa$ is a parameter that controls the arrival pattern of the consumer types. Specifically, $\kappa = 0$ corresponds to i.i.d.\ arrival of types from the uniform distribution; and the arrival becomes more heterogeneous with convergence towards a deterministic arrival in descending order of types in every phase as $\kappa$ increases.

We consider four scenarios corresponding to $\kappa \in [4]$. For each scenario, we consider running policies over independent sample paths of the input (including exogenous inventory shocks, type realizations, and consumer choices) by doing 20 iterations of the Monte-Carlo simulation. \IBB, \CNUIB, {\CUIB} and {\UDIB} all use exponential penalty function $\pen(x) = \frac{e^{1-x}-e}{1-e}$, and the batch-size threshold $\batchsizethreshold$ in {\IBB} is set as $\batchsizethreshold = 10$.

\revcolor{In addition to the aforementioned experimental setup, we also consider two further variants, both of which extend beyond the model analyzed in our theoretical results. 
In the first variant, we allow for \emph{negative exogenous inventory shocks}. (Such shocks can be interpreted as unexpected damage to currently available product units.) Specifically, after drawing $\replen_{i,t}$ from the geometric distribution $\texttt{Geo}(0.98)$, we flip its sign (i.e., set $\replen_{i,t} \gets -\replen_{i,t}$) with probability $0.2$.
In the second variant, we allow for \emph{i.i.d.\ stochastic endogenous inventory shocks}. Specifically, for each consumer and her selected product unit, we draw a random duration independently from the geometric distribution $\texttt{Geo}(3/T)$, which has an expected value of $T/3$.
}

\revcolor{
\xhdr{Results.} We record the revenues of all 20 iterations of the Monte-Carlo simulation for each experiment setup. }

For the original setup, we use ``box and whisker'' plots to demonstrate the median revenues and the resulting confidence intervals (\Cref{fig:numerical results random instance}) and compute the average revenue as an estimate for the expected revenue of each policy (\Cref{tab:numerical results random instance}). It is clear from both \Cref{fig:numerical results random instance} and \Cref{tab:numerical results random instance} that all four IB policies outperform the myopic greedy policy and the largest revenue gap is 7.1\%, 7.0\%, 7.8\%, 7.8\% for all four scenarios $\kappa\in[4]$, respectively. Among all four IB policies, {\UDIB} has the lowest revenue, {\CNUIB} has the second lowest revenue. {\IBB} and {\CUIB} have almost the same best revenue.

\begin{figure}
  \centering
       \subfloat[$\kappa = 0$]
      {\includegraphics[width=0.5\textwidth]{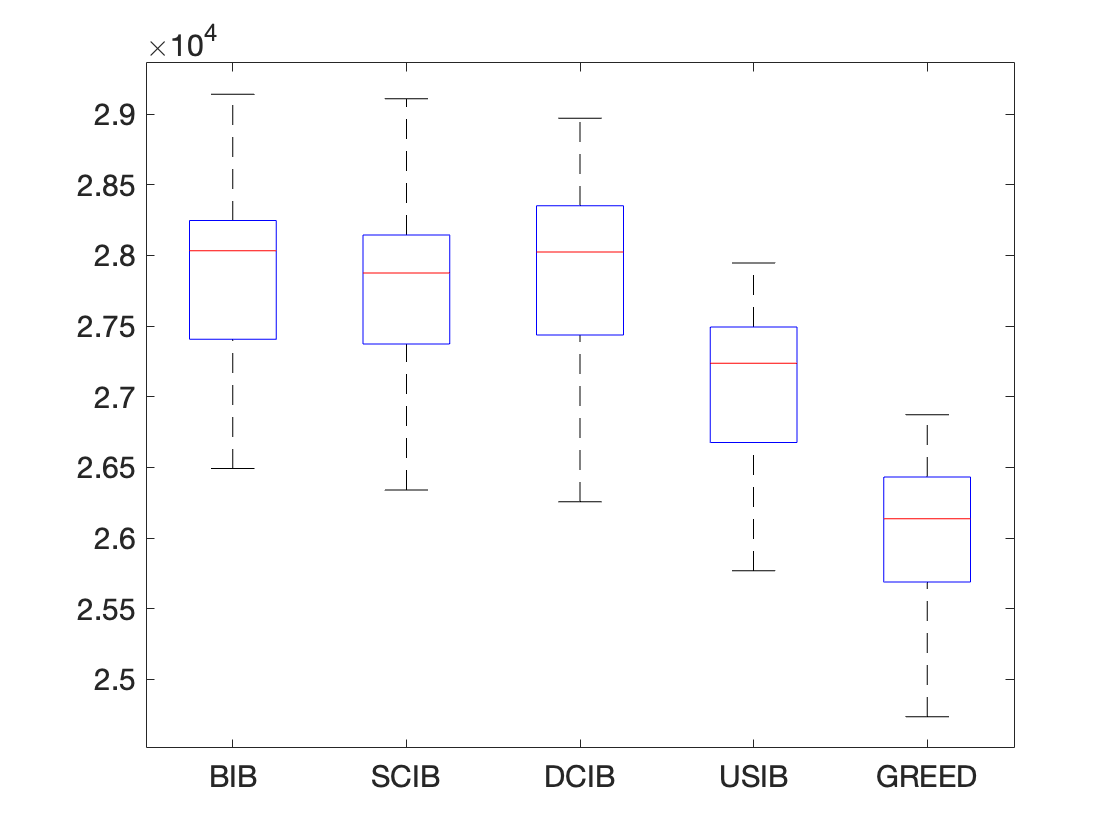}}
      \subfloat[$\kappa = 1$]
      {\includegraphics[width=0.5\textwidth]{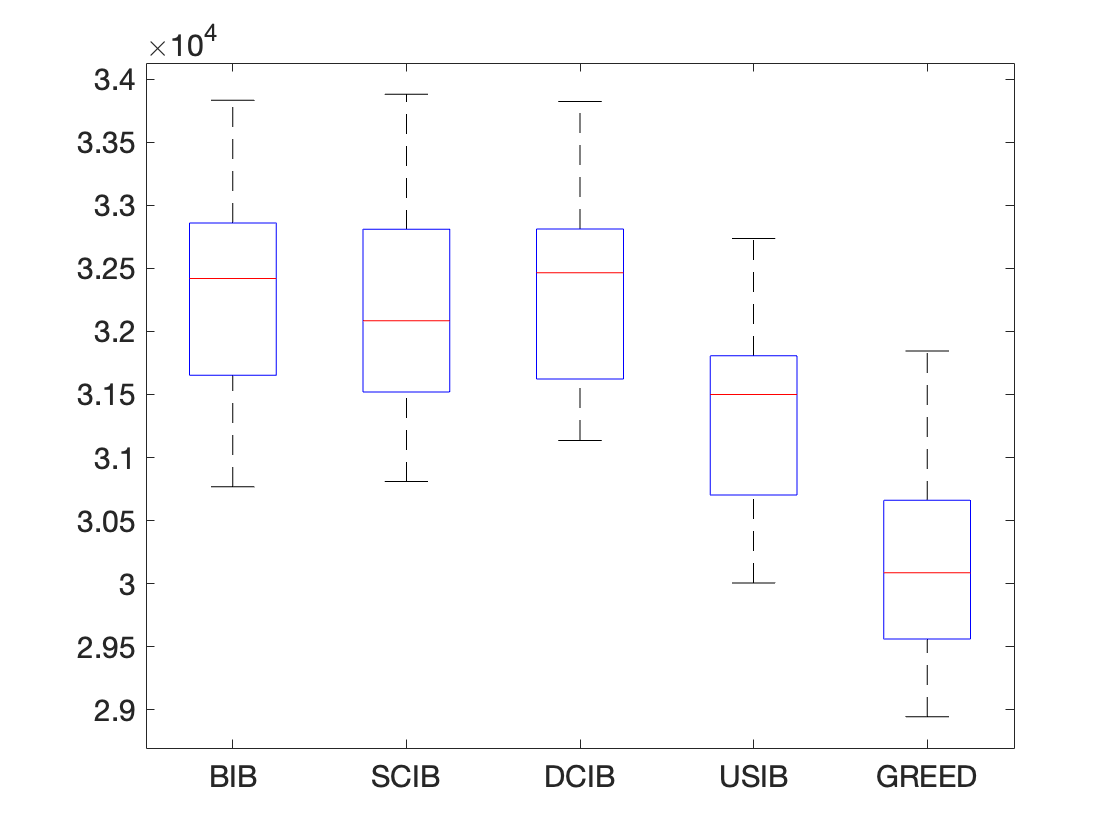}}
      \\
      \subfloat[$\kappa = 2$]
      {\includegraphics[width=0.5\textwidth]{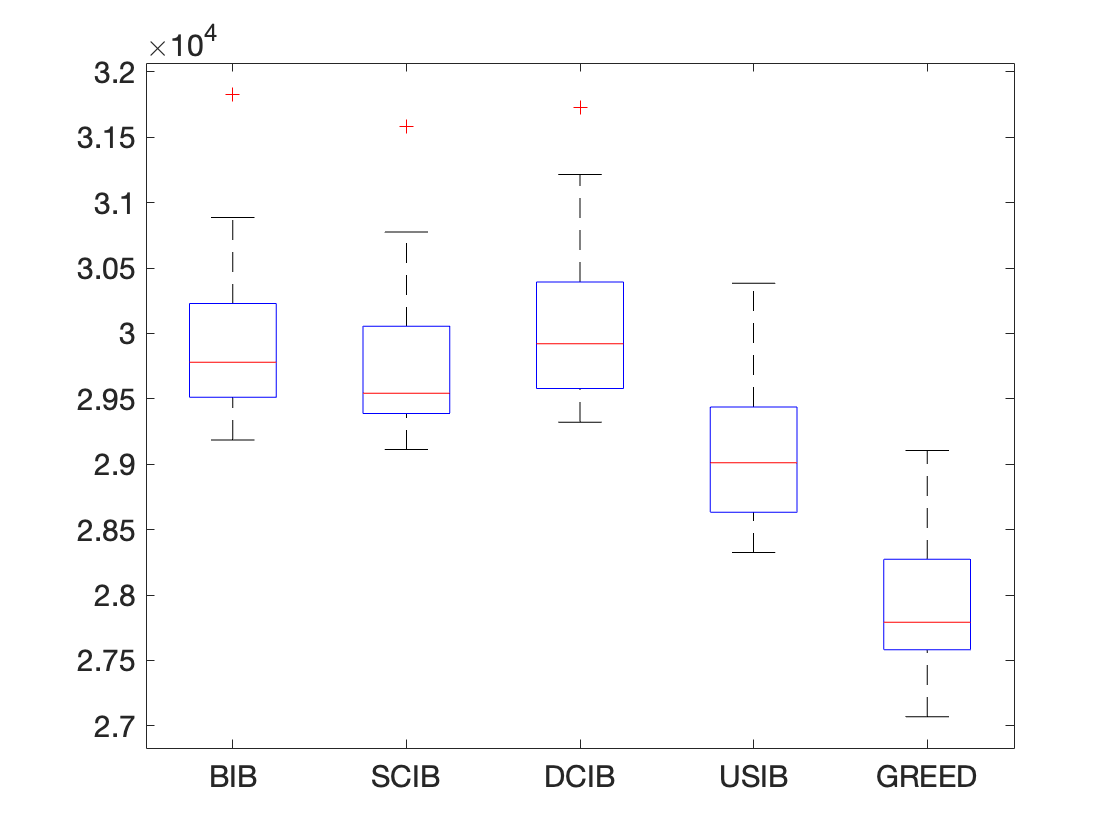}}
      \subfloat[$\kappa = 3$]
      {\includegraphics[width=0.5\textwidth]{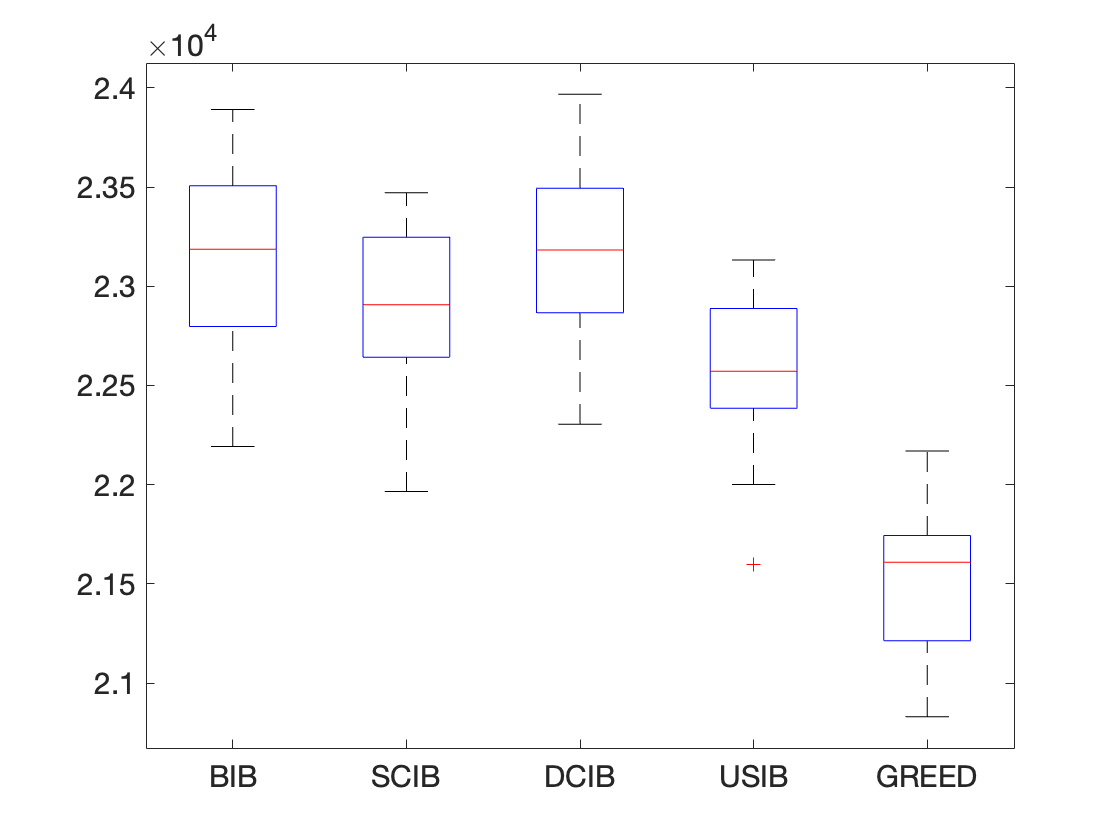}}
  \caption{Box and whisker comparison of revenue under different policies (\Cref{apx:numerical random instance});
  Results are based on 20 iterations of Monte-Carlo simulation.  }
   \label{fig:numerical results random instance}
\end{figure}

\begin{table}
\centering
\caption{\revcolor{Comparing the average revenue of different policies (\Cref{apx:numerical random instance}) in the original setup.}}
\label{tab:numerical results random instance}
\begin{tabular}[t]{lcccc}
\toprule
&$\kappa=0$ &$\kappa=1$&$\kappa=2$&$\kappa=3$\\
\midrule
\texttt{BIB}& 27836.91 & 32308.41 & 29972.79 & 23121.82\\
\texttt{SCIB}& 27747.12 & 32197.73 & 29775.70 & 22914.32\\
\texttt{DCIB}& 27852.76 & 32316.99 & 30101.17 & 23170.69\\
\texttt{USIB}& 27017.10 & 31385.22 & 29070.33 & 22572.46\\
\texttt{GREED} & 25999.49 & 30184.01 & 27926.82 & 21502.78\\
\bottomrule
\end{tabular}
\end{table}%

\revcolor{For the variant with negative exogenous inventory shocks, we compute the average revenue as an estimate of the expected revenue for each policy in \Cref{tab:numerical results random instance negative shocks}. The numerical patterns observed in the original setup largely carry over: all four IB policies consistently outperform the myopic greedy benchmark. However, the revenue gap between the IB policies and the greedy policy becomes smaller. In particular, the largest improvements are approximately $1.9\%$, $1.8\%$, $1.7\%$, and $2.1\%$ for the four scenarios $\kappa \in [4]$, respectively. Among the IB policies, \UDIB{} always yields the lowest revenue, while \CNUIB{} and \CUIB{} perform better and, together with \IBB{}, achieve nearly identical top performance. More specifically, \IBB{} attains the highest revenue when $\kappa=1,2$, \CNUIB{} is best for $\kappa=3$, and \CUIB{} is best for $\kappa=4$.
}

\begin{table}
\centering\revcolor{
\caption{\revcolor{Comparing the average revenue of different policies (\Cref{apx:numerical random instance}) with negative exogenous inventory shocks.}}
\label{tab:numerical results random instance  negative shocks}
\begin{tabular}[t]{lcccc}
\toprule
&$\kappa=0$ &$\kappa=1$&$\kappa=2$&$\kappa=3$\\
\midrule
\texttt{BIB}& 13061.32 & 18746.69 & 20766.04 & 20000.45\\
\texttt{SCIB}& 13056.95 & 18746.49 & 20782.17 & 20002.64\\
\texttt{DCIB}& 13046.72 & 18744.64 & 20765.96 & 20007.51\\
\texttt{USIB}& 12844.23 & 18498.98 & 20542.71 & 19673.44\\
\texttt{GREED}& 12816.43 & 18409.63 & 20435.91 & 19602.96\\
\bottomrule
\end{tabular}
}
\end{table}%

\begin{table}
\centering\revcolor{
\caption{\revcolor{Comparing the average revenue of different policies (\Cref{apx:numerical random instance}) with geometrically distributed stochastic endogenous inventory shocks.}}
\label{tab:numerical results random instance random shocks}
\begin{tabular}[t]{lcccc}
\toprule
&$\kappa=0$ &$\kappa=1$&$\kappa=2$&$\kappa=3$\\
\midrule
\texttt{BIB}& 13061.32 & 18746.69 & 20766.04 & 20000.45\\
\texttt{SCIB}& 13056.95 & 18746.49 & 20782.17 & 20002.64\\
\texttt{DCIB}& 13046.72 & 18744.64 & 20765.96 & 20007.51\\
\texttt{USIB}& 12844.23 & 18498.98 & 20542.71 & 19673.44\\
\texttt{GREED}& 12816.43 & 18409.63 & 20435.91 & 19602.96\\
\bottomrule
\end{tabular}
}
\end{table}%

\revcolor{For the variant with geometrically distributed endogenous inventory shocks, the average revenues of all policies are reported in \Cref{tab:numerical results random instance random shocks}. Consistent with the previous setups, all four IB policies outperform the myopic greedy benchmark. The largest revenue improvements are about $6.3\%$, $6.0\%$, $5.5\%$, and $4.7\%$ for the four scenarios $\kappa \in [4]$, respectively. Among the IB policies, {\UDIB} again yields the lowest revenue, while {\IBB}, {\CNUIB}, and {\CUIB} remain closely matched and achieve highly competitive performance across all scenarios, with only minor variations in which policy attains the highest revenue.}

\section{\revcolor{Illustrative Examples for the Interval Assignment Problem} }
\label{apx:IAP illustration}
\revcolor{
In this section, we present instances of the interval assignment problem (IAP) to illuminate its combinatorial structure, motivate the need for a carefully constructed assignment, and the execution of our proposed \Cref{alg:interval assignment}.

\xhdr{The failure of real consumption levels.}
As explained in \Cref{sec:IBB proof} and in the proof of \Cref{thm:concave competitive ratio}, the output of the IAP is used to define a transformed version of the consumption level (i.e., total initial inventory minus the remaining inventory at the time of allocation). This raises a natural question: does the real consumption level—defined as $\containni = |\{j : a_i \in [a_j, b_j]\}|$ for each interval $i$—already satisfy the three defining properties of the IAP, namely \hyperref[lem:IAP]{\LocalDominance}, \hyperref[lem:IAP]{\GlobalDominance}, and \hyperref[lem:IAP]{\PartitionMonotonicity}? 
The answer is negative. The following example demonstrates that the naive assignment $\assign_i = \containni$ may violate \hyperref[lem:IAP]{\PartitionMonotonicity}, even though it satisfies the other two properties by construction.

\begin{example}[A Path of Overlapping Intervals]
\label{example:failure of actual consumption level}
Consider four intervals arranged in a ``path'' where each overlaps only with its immediate predecessor and successor:
\begin{align*}
    [a_1, b_1] = [1, 4], \quad
    [a_2, b_2] = [3, 6], \quad
    [a_3, b_3] = [5, 8], \quad
    [a_4, b_4] = [7, 10].
\end{align*}
Here, the left endpoints are strictly increasing ($a_1 < a_2 < a_3 < a_4$), and for each $i$, the coverage set $\contain_i = \{j : a_i \in [a_j, b_j]\}$ satisfies $\contain_1 = \{1\}$, $\contain_2 = \{1,2\}$, $\contain_3 = \{2,3\}$, and $\contain_4 = \{3,4\}$. An illustration is provided in \Cref{fig:failure of actual consumption level}.
\end{example}

\begin{figure}
\centering
\begin{tikzpicture}[scale=0.85, font=\small]
  % Define colors
  \definecolor{intervalcolor}{RGB}{220,220,220} % light gray

  % Draw intervals as rectangles
  \filldraw[intervalcolor, draw=black] (1,3) rectangle (4,3.4);
  \node at (2.5,3.2) {\textbf{2}}; % \assign_1 = 2

  \filldraw[intervalcolor, draw=black] (3,2.2) rectangle (6,2.6);
  \node at (4.5,2.4) {\textbf{1}}; % \assign_2 = 1

  \filldraw[intervalcolor, draw=black] (5,1.4) rectangle (8,1.8);
  \node at (6.5,1.6) {\textbf{2}}; % \assign_3 = 2

  \filldraw[intervalcolor, draw=black] (7,0.6) rectangle (10,1.0);
  \node at (8.5,0.8) {\textbf{1}}; % \assign_4 = 1

  % Mark left endpoints (a_i) with red dots and labels
  \foreach \x/\y/\i in {
    1/3.2/1,
    3/2.4/2,
    5/1.6/3,
    7/0.8/4
  } {
    \fill[red] (\x,\y) circle (2pt);
    \node[below, red] at (\x,\y-0.15) {$a_{\i}$};
  }

  % Mark right endpoints (b_i) with red dots and labels
  \foreach \x/\y/\i in {
    4/3.2/1,
    6/2.4/2,
    8/1.6/3,
    10/0.8/4
  } {
    \fill[red] (\x,\y) circle (2pt);
    \node[below, red] at (\x+0.18,\y-0.1) {$b_{\i}$};
  }

  % Time axis
  \draw[->] (0.5,0) -- (10.5,0) node[right] {time};

  % --- Add p-links (predecessor chains) ---
  \tikzset{chainarrow/.style={->, dashed, blue, thick}}

  % Chain: 1 -> 2
  \draw[chainarrow] (2.5,3.0) to[out=-90,in=90] (4.5,2.6);

  % Chain: 3 -> 4
  \draw[chainarrow] (6.5,1.4) to[out=-90,in=90] (8.5,1.0);

  % Optional legend
  \node[blue, font=\scriptsize] at (10,-0.6) {Predecessor links: $\pfather_2 = 1$, $\pfather_4 = 3$};
\end{tikzpicture}
\caption{\revcolor{\Cref{example:failure of actual consumption level}: A path of four overlapping intervals. Each interval overlaps only with its immediate neighbor. Both endpoints $(a_i, b_i)$ for each interval $i$ are marked in red. The bold number inside each interval indicates its IAP assignment $\assign_i$, and the blue dashed arrows illustrate the predecessor links $\pfather_j = i$ that define the partition satisfying \hyperref[lem:IAP]{\PartitionMonotonicity}.}}
\label{fig:failure of actual consumption level}
\end{figure}
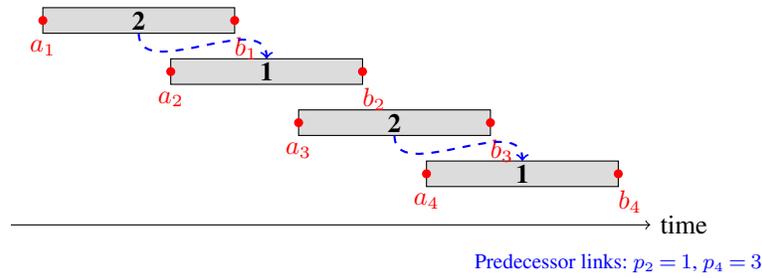

Under the naive assignment $\assign_i = \containni$, we obtain $(\assign_1, \assign_2, \assign_3, \assign_4) = (1, 2, 2, 2)$ for \Cref{example:failure of actual consumption level}. This assignment satisfies \hyperref[lem:IAP]{\LocalDominance} (each coverage set contains labels that dominate $\{1,\dots,\containni\}$) and \hyperref[lem:IAP]{\GlobalDominance} (since $\assign_i = \containni$, the inequality holds with equality for any decreasing concave penalty). However, it \emph{violates} \hyperref[lem:IAP]{\PartitionMonotonicity}: there is only one interval labeled ``1'', yet three intervals labeled ``2''. Consequently, it is impossible to partition the intervals into groups such that each group of size $m$ contains exactly the labels $\{1, 2, \dots, m\}$. 

This failure is significant because \hyperref[lem:IAP]{\PartitionMonotonicity} plays a crucial role in aligning the primal and dual objective values in the proof of \Cref{thm:concave competitive ratio}. Without it, setting the dual variables equal to the real consumption levels would not recover the tight approximation guarantee established by \citet{GNR-14} in the special case with no endogenous shock. 

In contrast, \Cref{alg:interval assignment} produces a valid IAP assignment for \Cref{example:failure of actual consumption level}. Specifically, it outputs $(\assign_1, \assign_2, \assign_3, \assign_4) = (2, 1, 2, 1)$, which satisfies all three properties. It constructs this assignment in the following order: $\assign_3 \gets 2$, $\assign_1 \gets 2$, $\assign_4 \gets 1$, and $\assign_2 \gets 1$. As illustrated in \Cref{fig:failure of actual consumption level}, the resulting labels enable a partition into two disjoint chains—$\{1,2\}$ and $\{3,4\}$—each of which satisfies \hyperref[lem:IAP]{\PartitionMonotonicity}.

\xhdr{The failure of a simple greedy procedure.}
\Cref{example:failure of actual consumption level} demonstrates that the naive assignment based on real consumption levels is insufficient, thereby motivating the need for an algorithm that explicitly enforces all three properties of IAP. 

A natural follow-up question is whether a simpler algorithm—beyond the full IAP construction—might suffice. One such candidate is the following \emph{greedy} procedure: process the intervals in increasing order of their left endpoints, and for each interval, assign the \emph{smallest} positive integer that maintains \hyperref[lem:IAP]{\LocalDominance} with respect to all previously processed coverage sets. The rationale is that \hyperref[lem:IAP]{\LocalDominance} only constrains intervals whose left endpoints have already been encountered, making a forward greedy choice seemingly safe.

Indeed, for \Cref{example:failure of actual consumption level}, this greedy procedure yields the assignment $(1, 2, 1, 2)$, which satisfies \hyperref[lem:IAP]{\LocalDominance}, \hyperref[lem:IAP]{\GlobalDominance}, and \hyperref[lem:IAP]{\PartitionMonotonicity}. However, as the next example shows, this procedure does not generalize: on more complex instances, the greedy assignment may satisfy \hyperref[lem:IAP]{\LocalDominance} but violate \hyperref[lem:IAP]{\PartitionMonotonicity}, and thus fail to meet the full requirements of the IAP.

\begin{example}[A Binary-Tree Structure of Intervals]
\label{example:failure of greedy}
Consider seven intervals arranged in a complete binary tree of depth 3 (i.e., three levels), defined as follows:
\begin{align*}
\text{Level 1 (root):} \quad & [a_1, b_1] = [1, 14], \\
\text{Level 2 (children of root):} \quad & [a_2, b_2] = [2, 7], \quad [a_5, b_5] = [8, 13], \\
\text{Level 3 (leaves):} \quad & [a_3, b_3] = [3, 4], \quad [a_4, b_4] = [5, 6], \quad
[a_6, b_6] = [9, 10], \quad [a_7, b_7] = [11, 12].
\end{align*}
where the left endpoints are strictly increasing ($a_1 < a_2 < \dots < a_6 < a_7$).  
Each interval is fully contained within its parent, and for every $i$, the coverage set $\contain_i = \{j : a_i \in [a_j, b_j]\}$ consists precisely of interval $i$ and all its ancestors in the tree. An illustration is provided in \Cref{fig:failure of greedy}.
\end{example}

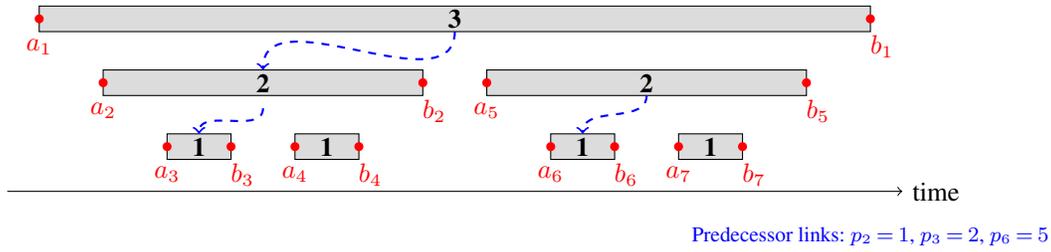
\begin{figure}
\centering
\begin{tikzpicture}[scale=0.85, font=\small]
  \definecolor{intervalcolor}{RGB}{220,220,220} % light gray

  % Level 1: root
  \filldraw[intervalcolor, draw=black] (1,4) rectangle (14,4.4);
  \node at (7.5,4.2) {\textbf{3}}; % \hat{I}_1 = 3

  % Level 2
  \filldraw[intervalcolor, draw=black] (2,3) rectangle (7,3.4);
  \node at (4.5,3.2) {\textbf{2}}; % \hat{I}_2 = 2
  \filldraw[intervalcolor, draw=black] (8,3) rectangle (13,3.4);
  \node at (10.5,3.2) {\textbf{2}}; % \hat{I}_5 = 2

  % Level 3 (leaves)
  \filldraw[intervalcolor, draw=black] (3,2) rectangle (4,2.4);
  \node at (3.5,2.2) {\textbf{1}}; % \hat{I}_3 = 1
  \filldraw[intervalcolor, draw=black] (5,2) rectangle (6,2.4);
  \node at (5.5,2.2) {\textbf{1}}; % \hat{I}_4 = 1
  \filldraw[intervalcolor, draw=black] (9,2) rectangle (10,2.4);
  \node at (9.5,2.2) {\textbf{1}}; % \hat{I}_6 = 1
  \filldraw[intervalcolor, draw=black] (11,2) rectangle (12,2.4);
  \node at (11.5,2.2) {\textbf{1}}; % \hat{I}_7 = 1

  % Mark left endpoints (a_i) in red
  \foreach \x/\y/\i in {
    1/4.2/1,
    2/3.2/2,
    3/2.2/3,
    5/2.2/4,
    8/3.2/5,
    9/2.2/6,
    11/2.2/7
  } {
    \fill[red] (\x,\y) circle (2pt);
    \node[below, red] at (\x,\y-0.15) {$a_{\i}$};
  }

   % Mark right endpoints (b_i) in red
  \foreach \x/\y/\i in {
    14/4.2/1,
    7/3.2/2,
    4/2.2/3,
    6/2.2/4,
    13/3.2/5,
    10/2.2/6,
    12/2.2/7
  } {
    \fill[red] (\x,\y) circle (2pt);
    \node[below, red] at (\x+0.18,\y-0.1) {$b_{\i}$};
  }

  % Time axis
  \draw[->] (0.5,1.5) -- (14.5,1.5) node[right] {time};

  % --- Add p-links (predecessor chains) ---
  % Use blue dashed arrows
  \tikzset{chainarrow/.style={->, dashed, blue, thick}}

  % Chain 1 -> 2 -> 3
  \draw[chainarrow] (7.5,4.0) to[out=-90,in=90] (4.5,3.4);
  \draw[chainarrow] (4.5,2.8) to[out=-90,in=90] (3.5,2.4);

  % Chain 5 -> 6
  \draw[chainarrow] (10.5,3.0) to[out=-90,in=90] (9.5,2.4);

  % Note: 4 and 7 are singletons (no outgoing arrows)

  % Optional: add legend
  \node[blue, font=\scriptsize] at (14,0.8) {Predecessor links: $\pfather_2=1$, $\pfather_3=2$, $\pfather_6=5$};
\end{tikzpicture}
\caption{\revcolor{\Cref{example:failure of greedy}: A complete binary-tree hierarchy of seven intervals. Each parent fully contains its two children. Both endpoints $(a_i, b_i)$ for each interval $i$ are marked in red. The bold number inside each interval indicates its IAP assignment $\assign_i$, and the blue dashed arrows illustrate the predecessor links $\pfather_j = i$ that define the partition satisfying \hyperref[lem:IAP]{\PartitionMonotonicity}.}}
\label{fig:failure of greedy}
\end{figure}

For \Cref{example:failure of greedy}, the aforementioned greedy procedure yields the assignment $(1, 2, 3, 3, 2, 3, 3)$,\footnote{The real consumption levels induce the same assignment, which violates \hyperref[lem:IAP]{\PartitionMonotonicity}, for \Cref{example:failure of greedy}.} which satisfies \hyperref[lem:IAP]{\LocalDominance}, \hyperref[lem:IAP]{\GlobalDominance}, but violates \hyperref[lem:IAP]{\PartitionMonotonicity}: there is only one interval labeled ``1'', yet two intervals labeled ``2'', and four intervals labeled ``3''. Consequently, it is impossible to partition the intervals into groups such that each group of size $m$ contains exactly the labels $\{1, 2, \dots, m\}$. 

In contrast, although more involved, \Cref{alg:interval assignment} produces a valid IAP assignment for \Cref{example:failure of greedy}. Specifically, it outputs assignment $(3, 2, 1, 1, 2, 1, 1)$, which satisfies all three properties.\footnote{\revcolor{In \Cref{example:failure of greedy}, the valid IAP assignment $(3, 2, 1, 1, 2, 1, 1)$ can be considered as the inventory level. However, this does not imply that the real inventory level is always valid for IAP assignment. 
To see this, consider a modification of \Cref{example:failure of greedy} in which the left endpoints are reordered—specifically, set $a_3 = 1$ and $a_1 = 3$, while keeping all other left and right endpoints unchanged. In this modified example, the inventory level becomes $(1, 2, 3, 1, 2, 1, 1)$. Although this assignment satisfies \hyperref[lem:IAP]{\GlobalDominance} and \hyperref[lem:IAP]{\PartitionMonotonicity}, it \emph{violates} \hyperref[lem:IAP]{\LocalDominance}: for interval $4$, the coverage set $\contain_4$ includes intervals whose assigned values are insufficient to dominate the required set $\{1, \dots, |\contain_4|\}$.}} It constructs this assignment in the following order: $\assign_1 \gets 3$, $\assign_5 \gets 2$, $\assign_2 \gets 2$, $\assign_7 \gets 1$, $\assign_6 \gets 1$, $\assign_4 \gets 1$, and $\assign_3 \gets 1$. As illustrated in \Cref{fig:failure of greedy}, the resulting labels enable a partition into disjoint chains—$\{1,2,3\}$, $\{5,6\}$, $\{4\}$, and $\{7\}$—each of which satisfies \hyperref[lem:IAP]{\PartitionMonotonicity}.

}

\section{\revcolor{Open Problems}}
\label{apx:open-problems}

\revcolor{
Our analysis of online assortment under inventory shocks is built on a clean abstraction of reusability---where endogenous shocks model the automatic return of allocated units after a fixed usage duration---and exogenous shocks capture unpredictable inflows of new supply. While this model captures a wide range of applications, several natural extensions arise from both practical considerations and theoretical curiosity. We outline four promising directions for future work.

\xhdr{Negative inventory shocks and loss modeling.}  
Our current model assumes non-negative exogenous shocks. However, in real-world settings, inventory may be unexpectedly reduced due to damage, spoilage, theft, or system errors---phenomena we model as \emph{negative exogenous shocks}. As shown in \Cref{apx:numerical}, BIB exhibits strong empirical performance even when negative shocks are introduced. From a theoretical perspective, negative shocks introduce nonlinearity into the inventory dynamics (e.g., the ``available inventory'' is the max of zero and prior inventory minus shock), which complicates the design of a tractable LP benchmark. Developing competitive algorithms with formal guarantees under such adverse shocks---potentially by combining our batching ideas with robustness techniques from convex optimization and duality analysis---is a compelling direction for future research.\footnote{\revcolor{Using a standard charging argument, the myopic greedy policy (equivalently, the Inventory Balancing algorithm with penalty function $\pen(x) = \indicator{x > 0}$) can be shown to be $0.5$-competitive under adversarial negative shocks.}}

\xhdr{Stochastic and heterogeneous return durations.}  
Our analysis assumes deterministic and homogeneous usage durations $d_i$. However, in many applications (e.g., gig platforms, hospital beds, rental markets), the duration a unit remains unavailable is random and may vary across allocations. Recent work by \citet{KLU-25} shows that our batched framework can be lifted via a black-box reduction to settings with i.i.d.\ stochastic return times, preserving the $1-1/e$ competitive ratio under mild conditions. An open question is whether similar guarantees can be obtained for adversarial or highly heterogeneous duration distributions, or whether new algorithmic ideas are required.

\xhdr{Endogenous shocks from operational replenishment policies.}  
In omni-channel retail and supply chain settings, endogenous inventory replenishment often stems not from usage-driven returns but from operational policies such as $(s,S)$-type or threshold-based restocking. In such models, a replenishment order is triggered when on-hand inventory falls below a threshold, and units arrive after a (possibly stochastic) lead time. Extending our framework to accommodate such endogenous dynamics would require reinterpreting the ``return duration'' as a lead time and incorporating the replenishment decision into the adversary’s power or the algorithm’s control. 

\xhdr{Incorporating inventory holding and restocking costs.}  
Our objective focuses solely on revenue maximization, ignoring potential costs associated with holding inventory or placing restocking orders. Extending the model to include holding costs (e.g., per-unit-per-period) or fixed restocking costs would shift the objective toward profit maximization and likely favor more conservative allocation policies. Integrating such costs into the primal-dual analysis---perhaps via modified penalty functions or cost-aware dual variables---remains an intriguing challenge.

\xhdr{Joint inventory-assortment control under shocks.}  
In our model, exogenous shocks are fully adversarial and uncontrollable. However, in practice, platforms may exert partial control over inventory inflows---for example, by procuring additional units, triggering returns, or managing volunteer sign-up campaigns. A natural generalization is a \emph{joint inventory-assortment control} problem, where the decision maker simultaneously chooses both the assortment and an inventory action (e.g., order quantity, promotional effort) at each time step. Developing competitive algorithms for this richer setting would require coupling the batched inventory balancing framework with dynamic inventory control techniques.
}
\end{document}